\newtheorem{clam}{Claim}
\def\slash{\relax\ifmmode\delimiter"502F30E\mathopen{}\else\@old@slash\fi}
\tikzstyle{every picture}=[
\newcommand{\cqfd}{\hfill{\vrule height 3pt width 5pt depth 2pt}}
\def\sc{\mathrm{sc}}
\def\stararrow{\overset{*}{\rightarrow}}
\title{State complexity of catenation combined with a boolean operation: a unified approach}
\author{
    Pascal Caron
    \and Jean-Gabriel Luque
    \and Ludovic Mignot
    \and Bruno Patrou
    %\thanks{LITIS, Universit\'e de Rouen, 76801 Saint-\'Etienne du Rouvray Cedex, France}
    \thanks{\{Pascal.Caron, Jean-Gabriel.Luque, Ludovic.Mignot, Bruno.Patrou\}@univ-rouen.fr}
}
\institute{LITIS, Université de Rouen,\\ Avenue de l'Université,\\ 76801 Saint-\'Etienne du Rouvray Cedex,\\ France}
\begin{document}

\maketitle
\begin{abstract}
In this paper we study the state complexity of catenation combined with symmetric difference. First, an upper bound is computed using some combinatoric tools. Then, this bound is shown to be tight by giving a witness for it. Moreover, we relate this work with the study of state complexity for two other combinations: catenation with union and catenation with intersection. And we extract a unified approach which allows to obtain the state complexity of any combination involving catenation and a binary boolean operation.
\end{abstract}

\section{Introduction}

The study of state complexity is a very active research area for about 20 years. Its starting point is usually dated from 1994 with \cite{YZS94} although some early related works can be cited (especially \cite{Brz64,Mas70,Bir92}). Initially, was studied the state complexity of individual operations (see, for example, \cite{Jir05,JJS05} and the survey \cite{GMRY12}), then the complexity of combined operations was investigated since 2007. Initiated by \cite{SSY07}, it has resulted in numerous articles (\cite{GSY08,JO11,LMSY08}, for example). Among them, we pay special attention to \cite{CGKY11} where is studied the combination of catenation with union and intersection.

In their paper, the authors observe that "the state complexity of a combined operation is not simply a mathematical composition of the state complexities of its component operations". Indeed, the state complexity of a regular $n$-ary operation is a function giving, from $n$ integers standing for the sizes of minimal and complete DFAs, the maximal states number of a minimal and complete DFA accepting the resulting language of the operation applied over $n$ languages recognized by the DFAs whose sizes are given as inputs. And, not surprisingly, when combining two operations, the language obtained to reach the state complexity of the first applied operation is not necessarily a relevant input to reach the state complexity of the second operation.

The state complexity of the catenation of a $m$-states DFA with a $n$-states DFA is $(m-1)2^n+2^{n-1}$ (\cite{YZS94}) and the state complexity of any binary, non trivial, boolean operation of a $n$-states DFA with a $p$-states DFA is $np$ (\cite{GMRY12}). In \cite{CGKY11} it is proven that the state complexity of catenation combined with intersection corresponds to the mathematical combination of these complexities: $(m-1)2^{np}+2^{np-1}$ whereas, when catenation is combined with union, the state complexity falls down to $(m-1)(2^{n+p}-2^n-2^p+2)+2^{n+p-2}$. And the authors conclude their paper with the following question: "Why are the state complexity results on these two very similar combined operations so different ?" They also mentioned that "although there is only a limited number of individual operations, the number of combined operations is unlimited".

In this paper we provide some answers to these remarks and interrogations. Indeed, we will see that the combination $L_1(L_2\circ L_3)$ of catenation with any regular, non trivial, binary operation is equivalent to the combination $L_1(L\mbox{ op }L')$ of catenation with either $\cup$, either $\cap$ or $\oplus$ applied to languages chosen among $L_2$, $L_3$ and their complements. That is, it is sufficient to know the state complexities of catenation combined with these three binary operations to obtain all other cases. Moreover, while studying the only missing combination of the three (catenation with symmetric difference), it will emerge a unifying approach for the calculus of the three state complexities giving, by the way, a satisfactory answer to the question concerning the so different obtained complexities.

The article mainly focuses on calculus of the state complexity for the combination of catenation with symmetric difference. In particular, it will need some non trivial combinatorial tools in addition to the classical automata handling. The next section gives some definitions and notations about automata and combinatoric. 
%lud Section 3 
Section~\ref{sec prelim}
contains some specific constructions needed later and describes more precisely how our approach meets, in %uniformizing 
standardizing them, the works done for catenation combined with union and intersection. In 
%lud section 4
Section~\ref{upper},
we give an upper bound for the state complexity of $L_1(L_2\oplus L_3)$, assuming the number of some particular objects can be computed. 
In 
%lud section 5
Section~\ref{se count tab}, 
we show how to compute these objects, curiously connecting our work to well known numbers in combinatorial area. In 
%lud section 6 
Section~\ref{sec wit},
 we build a witness proving the tightness of the bound given in 
%lud section 4
Section~\ref{upper}. To achieve this work, we use the family of languages defined by Brzozowski in \cite{Brz13}. %We conclude the paper with 
%lud section 
%Section 7.
\section{Background}\label{background}
In all this paper, $\Sigma$ 
%lud suppr will 
denote a finite alphabet. The set of all finite words over $\Sigma$ is denoted by $\Sigma ^*$.  The empty word is denoted by  $\varepsilon$. A language is a subset of $\Sigma^*$. The set of subsets of a finite set $A$ is denoted by $2^A$ and $\#A$ denotes the cardinality of $A$.  The symmetric difference is denoted by the symbol $\oplus$. We denote by $\uplus$ the union of disjoint sets. The symbol $\circ$ denotes any boolean operation on languages. In the following, by abuse of notation, we 
%lud suppr will 
often write $q$ for any singleton $\{q\}$.

A  finite automaton (FA) is a $5$-tuple $A=(\Sigma,Q,I,F,\cdot)$ where $\Sigma$ is the input alphabet, $Q$ is a finite set of states, $I\subset Q$ is the set of initial states, $F\subset Q$ is the set of final states and $\cdot$ is the transition function from  $Q\times \Sigma$ to $2^Q$. A FA is deterministic (DFA) if $\#I=1$ and for all $q\in Q$, for all $a\in \Sigma$, $\#(q\cdot a)\leq 1$. 
 Let $a$ be a symbol of $\Sigma$. Let $w$ be a word of $\Sigma^*$. The transition function is  extended to any word  by $q\cdot a w=\bigcup _{q'\in q\cdot a} q'\cdot w$ and  $q\cdot \varepsilon=q$. 
  %A state $q$ in a FA $D=(\Sigma, Q, q_0, F, \delta)$  is co-deterministic for a word $w$ if  $|\{p\in Q\mid p\cdot w=q\}|=1$.
  
A symmetric use of the dot notation leads  to the following definition.  Let $w\cdot q=\{q'\mid q\in q'\cdot w\}$.  We  extend the dot notation to any set of states $S$ by $S\cdot w=\bigcup_{s\in S}s\cdot w$ and $w\cdot S =\bigcup_{s\in S} w\cdot s$. % is defined only for the states of $S$ that are co-deterministic for the word  $w$. %A same extension can be done for subsets of couples of states.
 A word $w\in \Sigma ^*$ labels a successful path in a FA $A$ if $I\cdot w\cap F\neq \emptyset$. 

%A nondeterministic finite automaton (NFA) is a $5$-tuple $B=(\Sigma,Q,q_0,F,\delta)$ where $\Sigma, Q, q_0$ and $F$ are defined as above and $\delta \ :\ Q\times \Sigma \rightarrow 2^Q$ is the transition function which can be extended to any word of $\Sigma^*$ by $\delta(p,a w)=\bigcup_{q\in\delta(p,a)}\delta(q,w)$ and $\delta(q,\varepsilon)=\{q\}$. A word $w\in \Sigma ^*$ labels a successful path in a NFA $B$ if $\delta(q_0,w)\cap F\neq \emptyset$.
 In this paper, we assume that all FA  are complete which means that for all $q\in Q$, for all $a\in \Sigma$, $\#(q\cdot a)\geq 1$. A state $q$ is accessible in a FA  if there exists a word $w\in \Sigma ^*$ such that $q\in I\cdot w$.
The language recognized by a FA $A$ is the set of words labeling a successful path in $A$. 
Two automata are said to be equivalent if they recognize the same language.  

Let $D=(\Sigma,Q_D,i_D,F_D,\cdot)$ be a DFA.
Two states $q_1,q_2$ of  $D$ are equivalent if for any word $w$ of $\Sigma^*$, $q_1\cdot w\in F_D$ if and only if $q_2\cdot w\in F_D$. Such an equivalence is denoted by $q_1\sim q_2$. A DFA is  minimal if there does not exist any equivalent DFA  with less states and it is well known that for any DFA, there exists a unique minimal equivalent one \cite{HU79}. Such a minimal DFA  can be  obtained from $D$ by computing the accessible part of the automaton $D\slash \sim=(\Sigma,Q_D\slash \sim,[i_D],F_D\slash \sim,\cdot)$ where for any $q\in Q_D$, $[q]$ is the $\sim$-class of the state $q$ and for any $a\in \Sigma$, $[q]\cdot a=[q\cdot a]$. In a minimal DFA, any two distinct states are pairwise non-equivalent.

 Any nondeterministic finite automaton $B=(\Sigma,Q,I,F,\cdot)$ can be converted into an equivalent DFA $A=(\Sigma, Q',q'_0,F',\cdot)$ by a classical algorithm called the subset construction \cite{RS59}. The set of states is $Q'= 2^Q$, the initial state is $q'_0=I$,
 %lud ajout
and
the final states are the subsets of $Q$ containing a  state of $F$, that is  $F'=\{q'\in2^Q\mid q'\cap F\neq \emptyset\}$
%lud suppr car déjà def par extension : and the transition function is defined by $R\cdot a=\bigcup_{r\in R}r\cdot a$. 
.
 
   Two other classical constructions are the catenation $A\cdot B$  and the cartesian product $A\times B$ of two automata $A=(\Sigma,Q_A,I_A,F_A,\cdot_A)$ and $B=(\Sigma,Q_B,I_B,F_B,\cdot_B)$.  
  
  The catenation $A\cdot B$ is the automaton  $(\Sigma,Q=Q_A\cup Q_B,I,F,\cdot)$ defined by:
  \begin{itemize}
    \item
    $I=
        \left\{
          \begin{array}{l@{\ }l}
            I_A\cup I_B & \text{ if } I_A\cap F_A\neq \emptyset,\\
            I_A & \text{ otherwise,}\\
          \end{array}
        \right.$
    \item 
      $F=
        \left\{
          \begin{array}{l@{\ }l}
            F_A\cup F_B& \text{ if } I_B\cap F_B\neq \emptyset,\\
            F_B & \text{ otherwise,}\\
          \end{array}
        \right.$
    \item $\forall (q,a)\in Q\times\Sigma$, 
      $q \cdot a=
        \left\{
          \begin{array}{l@{\ }l}
            q \cdot_B a & \text{ if }q\in Q_B,\\
            q \cdot_A a \cup I_B & \text{ if } q \cdot_A a\cap F_A\neq\emptyset,\\ 
            q \cdot_A a & \text{ otherwise.}
          \end{array}
        \right.$
  \end{itemize}
  
  A cartesian product $A\times B$ is an automaton  $(\Sigma,Q=Q_A\times Q_B,I_A\times I_B,F,\cdot)$ such that
  $\forall ((q_1,q_2),a)\in Q\times\Sigma$, $ (q_1,q_2) \cdot a= q_1 \cdot_A a \times q_2 \cdot_B a$.
  Specifying $F$, we recover classical operations like the intersection ($F=F_A\times F_B$ in this case).

 The state complexity of a regular language $L$ denoted by $\sc(L)$ is the number of states of its minimal DFA. 
 %The  state complexity of a $k$-ary operation $\otimes$ over languages $L_1,\ldots , L_k$ is a $k$-ary function returning the state complexity of the language $\otimes (L_1,\ldots , L_k)$.
  Let ${\cal L}_n$ be the set of languages of state complexity $n$. The state complexity of a unary operation $\otimes$ is the function $\sc_{\otimes}$ associating with an integer $n$ the maximum of the state complexities of $(\otimes L)$ for $L\in {\cal L}_n$.
  A language $L\in {\cal L}_n$ is a witness (for $\otimes$) if  $\sc(\otimes L)=\sc_{\otimes}(n)$.
  %$\{L_i\}_{i\in\mathbb{N}}$ such that $\sc(\otimes L_i)=sc_{\otimes}(sc(L_i))$ for any $i\in\mathbb{N}$ is called a witness for $\otimes$. 
  This can be generalized, and the state complexity of a $k$-ary operation $\otimes$ is the $k$-ary function which associates with any tuple $(n_1,\ldots,n_k)$ the integer $\mathrm{max}\{\sc(\otimes(L_1,\ldots,L_k))|L_i\in\mathcal{L}_{n_i},\forall i\in[1,k]\}$. Then, a witness is a tuple $(L_{1},\ldots,L_{k})\in({\cal L}_{n_1}\times \cdots  \times{\cal L}_{n_k})$ such that $\sc(\otimes(L_{1},\ldots,L_{k}))=\sc_{\otimes}(n_1,\ldots,n_k)$. %for any tuple $(i_1,\cdots,i_k)\in\mathbb{N}^k$.
  An important research area consists in finding witnesses for any $(n_1,\ldots ,n_k)\in \mathbb{N}^k$
  %modif
  and for any combination of elementary operations. Obviously, 
 
  \begin{clam}\label{claim1}
   The state complexity of an operation defined as a composition of more elementary ones is upper-bounded by the composition of the corresponding elementary state complexities. 
   \end{clam}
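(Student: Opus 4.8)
The plan is to argue by structural induction on the expression that defines the composite operation. Being a composition of elementary operations means that $\otimes$ is given by a finite tree whose leaves are labelled by the input variables $L_1,\ldots,L_k$ and whose internal nodes are labelled by elementary operations: evaluating the tree on concrete languages produces $\otimes(L_1,\ldots,L_k)$, while replacing each leaf $L_i$ by the integer $n_i$ and each operation node by its own state complexity function produces the integer $C(n_1,\ldots,n_k)$ that the claim asserts to be an upper bound for $\sc_{\otimes}(n_1,\ldots,n_k)$.

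First I would fix arbitrary languages $L_i\in\mathcal{L}_{n_i}$ and read the tree bottom-up, attaching to every node the intermediate language it computes. The heart of the argument is a single-node estimate: if a node carries an elementary operation $\mathrm{op}$ whose children compute languages $M_1,\ldots,M_j$, then, directly from the definition of state complexity as a maximum over all inputs of prescribed sizes, $\sc(\mathrm{op}(M_1,\ldots,M_j))\le \sc_{\mathrm{op}}(\sc(M_1),\ldots,\sc(M_j))$, since $\mathrm{op}(M_1,\ldots,M_j)$ is one of the languages over which that maximum is taken. The base case is a leaf, where $\sc(L_i)=n_i$ holds trivially. Using the induction hypothesis to bound each $\sc(M_i)$ by the composed value $C_i$ attached to the corresponding child and propagating these inequalities from the leaves up to the root would yield $\sc(\otimes(L_1,\ldots,L_k))\le C(n_1,\ldots,n_k)$; taking the maximum over all admissible choices of the $L_i$ then gives the claim.

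The only non-formal point, and thus the main obstacle, is the passage from $\sc_{\mathrm{op}}(\sc(M_1),\ldots)$ to $\sc_{\mathrm{op}}(C_1,\ldots)$: the induction supplies merely $\sc(M_i)\le C_i$, not equality, so I need each $\sc_{\mathrm{op}}$ to be non-decreasing in each of its arguments. I would discharge this by observing that every elementary operation occurring in the compositions studied here has an explicit closed-form state complexity --- $np$ for a binary boolean operation, $n$ for complement, and $(m-1)2^{n}+2^{n-1}$ for catenation --- and that each of these expressions is manifestly non-decreasing in all of its variables, so the required monotonicity holds term by term. This reduces the whole statement to the definitional maximum inequality together with an inspection of monotone formulas, which is exactly why it can be asserted as obvious.
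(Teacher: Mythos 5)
The paper offers no proof of this claim at all: it is prefaced by ``Obviously'' and then merely illustrated on the special case $\otimes(L_1,L_2,L_3)=L_1\cdot(L_2\circ L_3)$, where it reads $h(n_1,n_2,n_3)\leq f(n_1,g(n_2,n_3))$. Your structural induction is therefore a genuine addition rather than a restatement, and it is correct. Its real contribution is to isolate the hypothesis the paper uses silently: since $\mathcal{L}_n$ is the set of languages of complexity \emph{exactly} $n$, the language computed at an inner node may have complexity strictly below the composed bound $C_i$, so passing from $\sc_{\mathrm{op}}(\sc(M_1),\ldots)$ to $\sc_{\mathrm{op}}(C_1,\ldots)$ genuinely requires each elementary complexity function to be non-decreasing. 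This is not a formality: for an artificial unary operation whose state complexity function is non-monotone (say, the identity on languages of even complexity and constantly $\Sigma^*$ on the others), composing it with intersection can violate the claimed bound, so some such hypothesis is unavoidable --- the alternative fix being to define $\sc_{\otimes}$ as a maximum over languages of complexity \emph{at most} $n_i$, which makes monotonicity automatic. Your discharge of the hypothesis by inspecting the closed forms $np$, $n$ and $(m-1)2^{n}+2^{n-1}$ covers exactly the elementary operations the paper composes (boolean operations, complement, catenation), which is all it ever needs; the price is that your argument proves the claim for these compositions rather than for arbitrary ``elementary'' operations, but that restriction is inherent in the claim being true at all.
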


   For example, let us consider the ternary operation $\otimes$ defined for any three languages $L_1, L_2, L_3$ by $\otimes(L_1,L_2,L_3)=L_1\cdot(L_2\circ L_3)$ and let $h$ be its state complexity. 
  Let $f,g$ be the respective state complexities of $\cdot$ and $\circ$. For any three integers $n_1,n_2,n_3$, it holds $h(n_1,n_2,n_3)\leq f(n_1,g(n_2,n_3))$. Moreover, following~\cite{CGKY11}, if $\circ=\cap$ then $h(n_1,n_2,n_3)=f(n_1,g(n_2,n_3))$ whereas, $h(n_1,n_2,n_3)<f(n_1,g(n_2,n_3))$ when $\circ=\cup$.
% Let $\otimes$ be an operation and $\mathrm{f}$ be its state complexity.
%
%If $\otimes$ is unary, a $\otimes$-witness for an integer $n$ is a language $L$ satisfying the two following conditions:
%\begin{enumerate}
%  \item $\mathrm{sc}(L)=n$,
%  \item $\mathrm{sc}(\otimes(L))=\mathrm{f}(n)$.
%\end{enumerate} 
%
%This can be generalized for a $k$-ary operation.
%If $\otimes$ is $k$-ary, a $\otimes$-witness for a $k$-uple $(n_1,\ldots,n_k)$ is a $k$-uple of languages $(L_1,\ldots,L_k)$ satisfying the two following conditions:
%\begin{enumerate}
%  \item $\mathrm{sc}(L_i)=n_i$ for any integer $i\leq k$,
%  \item $\mathrm{sc}(\otimes(L_1,\ldots,L_k))=\mathrm{f}(n_1,\ldots,n_k)$.
%\end{enumerate} 

%Let us denote by $\mathbb{W}$ the set containing all the $\otimes$-witnesses for any $k$-tuple.
%A $\otimes$-witness is a subset of $\mathbb{W}$ that contains at least one $\otimes$-witness for any $k$-tuple of integer.

%  The state complexity of the $k$-ary  operation   can be obtained following these two steps. Firstly
% producing an upper bound by combining the bounds of the operations composing the $k$-ary operation $\otimes$, secondly, producing a language as a set of automata which reaches this bound. This language is known as a \textit{witness} for the  composed operation.
 
In \cite{Brz13}, Brzozowski defines  a family of languages  that turns to be universal witnesses for several operations. The automata denoting these languages are called \textit{Brzozowski automata}.
We  need some background to define these automata. We 
%lud suppr will 
follow the terminology of \cite{GM08}. Let $Q=\{0,\ldots, n-1\}$ be a set. A \textit{transformation} of the set $Q$ is a mapping of $Q$ into itself. If $t$ is a transformation and $i$ an element of $Q$, we denote by $it$ the image of $i$ under $t$. A transformation of $Q$ can be represented by $t=[i_0, i_1, \ldots i_{n-1}]$ which means that $i_k=kt$ for each $0\leq k\leq n-1$ and $i_k\in Q$. A \textit{permutation} is a bijective transformation on $Q$. The \textit{identity} permutation of $Q$ is denoted by $1_Q$. A \textit{cycle} of length $\ell\leq n$  is a permutation $c$, denoted   by $(i_0,i_1,\ldots i_{\ell-1})$, on a subset $I=\{i_0,\ldots ,i_{\ell-1}\}$ of $Q$  where  $i_kc=i_{k+1}$ for $0\leq k<\ell-1$ and $i_{\ell-1}c=i_0$.  A \emph{$k$-rotation}  is obtained by composing $k$ times the same cycle. In other word, we construct a $k$-rotation $r_k$ from the cycle $(i_0,\dots,i_{\ell-1})$ by setting  $i_{j}r_k=i_{j+k \mod \ell}$ for $0\leq j\leq \ell-1$. 
 A \emph{grouping} of two elements $i_j,i_k\in I$ is a $((k-j)\mod \ell)$-rotation   letting $i_k$ unchanged and obtained by iterating $(k-j)\mod \ell$ times the cycle $(i_0,\dots,i_{k-1},i_{k+1},\dots,i_{\ell-1})$. Such a grouping sends $i_j$ to $i_{k+1}$ and $i_k$ to $i_k$.
A \textit{transposition} $t=(i,j)$ is a permutation on $Q$ where $it=j$ and $jt=i$ and for every  elements $k\in Q\setminus \{i,j\}$, $kt=k$.  A \textit{contraction}  $t=\left(\begin{array}{r}i\\j\end{array}\right)$ is a transformation where  $it=j$ and  for every  elements $k\in Q\setminus \{i\}$, $kt=k$.

In any complete DFA $(\Sigma, Q, i, F, \cdot)$, any word of  $\Sigma^*$ induces a transformation over $Q$.
  Let $a,b,c,d$ be distinct  symbols of $\Sigma$. As an example of Brzozowski automata (see Figure \ref{Brzo}), let \label{Brzo-def} $W_n(a,b,c,d)=(\Sigma,Q_n,0,\{n-1\},\cdot)$ where $Q_n=\{0,1,\ldots ,n-1\}$, the symbol $a$ acts as the cycle  $(0,1,\ldots, n-1)$,  $b$ acts as the transposition $(n-2, n-1)$,  $c$ acts as the contraction $\left(\begin{array}{r}1\\0\end{array}\right)$ and   $d$ acts as $1_{Q_n}$. 

%Figure \ref{figWn} is illustrating the automaton $W_n$.
 \begin{figure}[htb]
	\centerline{
		\begin{tikzpicture}[node distance=1.5cm, bend angle=25]
			\node[state,initial] (0)  {$0$};
			\node[state] (1) [right of=0] {$1$};
			\node[state] (2) [right of=1] {$2$};
			\node (etc1) [right of=2] {$\ldots$};
			\node[state, rounded rectangle] (m-3) [right of=etc1] {$n-3$};
			\node[state, rounded rectangle] (m-2) [right of=m-3] {$n-2$};
			\node[state,accepting, rounded rectangle] (m-1) [right of=m-2] {$n-1$};
			\path[->]
        (0) edge[bend left] node {$a$} (1)
        (1) edge[bend left] node {$a$} (2)
        (2) edge[bend left] node {$a$} (etc1)
        (etc1) edge[bend left] node {$a$} (m-3)
        (m-3) edge[bend left] node {$a$} (m-2)
        (m-2) edge[bend left] node {$a, b$} (m-1)
        (m-1) edge[out=-115, in=-65, looseness=.2] node[above] {$a$} (0)
		    (0) edge[out=115,in=65,loop] node {$b, c, d$} (0)
		    (1) edge[out=115,in=65,loop] node {$b, d$} (1)
		    (2) edge[out=115,in=65,loop] node {$b, c, d$} (2)
		    (m-3) edge[out=115,in=65,loop] node {$b, c, d$} (m-3)
		    (m-2) edge[out=115,in=65,loop] node {$c, d$} (m-2)
		    (m-1) edge[out=115,in=65,loop] node {$c, d$} (m-1)
        (1) edge[bend left] node[above] {$c$} (0)
        (m-1) edge[bend left] node[above] {$b$} (m-2)
			;
\end{tikzpicture}
}
\caption{The automaton $W_n(a,b,c,d)$}\label{Brzo}
\end{figure}
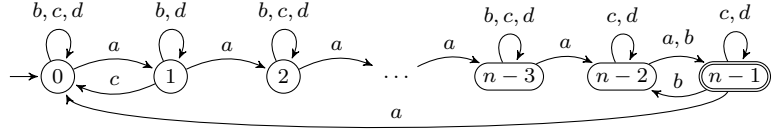	
Consider a discrete set of objects $\mathcal S$. The most general way to define formal   series is to consider families of complex numbers indexed by elements of $\mathcal S$,  $\mathcal A=(\alpha_{\mathfrak o})_{\mathfrak o\in\mathcal S}$, written formally as  infinite sums 
$\mathcal A\sim\sum_{\mathfrak o\in\mathcal S} \alpha_{\mathfrak o}\mathfrak o\in\mathbb C[[\mathcal S]],$
in order to endow the space  $\mathbb C[[\mathcal S]]$ with an algebraic structure related to a question that we want to study.

%% Ajouter \
The notion of combinatorial classes (see \emph{e.g.} \cite{Flaj}) is suited for solving enumeration problems. Let us recall few basic definitions and properties. A combinatorial class is defined by a map $\alpha$ from a discrete set $\mathcal S$ (the objects we want to enumerate) to another discrete set $\mathcal T$ (the objects describing  a certain statistic on $\mathcal S$) such that each set $\mathcal S_\lambda=\{\mathfrak o\in\mathcal S\mid \alpha(\mathfrak o)=\lambda\}$
is finite for any $\lambda\in\mathcal T$. The series associated to $\alpha$ is
$
\mathrm{\sigma}_\alpha=\sum_{\lambda\in\mathcal T}\#\mathcal S_\lambda [\lambda]$.

%where $\alpha_\lambda$ is $\#\mathcal S_\lambda$.
%The coefficient of $\lambda$ in $\mathrm{\sigma}_\alpha$ equals the number of object $\mathfrak o$ in $\mathcal S$ such that $\alpha(\mathfrak o)=\lambda$.

The most classical examples of such a construction are the (univariate) generating series. In this case, we have $\mathcal T=\mathbb N$, $\alpha$ represents the size of the objects of $\mathcal S$ and the generating series is
$
\mathrm{\sigma}_\alpha=\sum_{n\in\mathbb N}\#\mathcal S_n [n].
$
%What we have done is just to describe a notation but  nothing neither in terms of algebra nor combinatorics. 
In order to solve  enumeration problems, we have to endow the space of series with a product. Typically, most of the enumerations deal with one of the three products
\begin{itemize}
\item[$\bullet$] $[n]\cdot[m]=[n+m]$ (ordinary generating functions)
\item[$\bullet$] $[n]\cdot[m]=\binom{n+m}n[n+m]$ (exponential generating functions)
\item[$\bullet$] $[n]\cdot[m]=[nm]$ (Dirichlet generating functions)
\end{itemize}
according to our initial problem. Depending on the product, one chooses to map basic elements on the different functions:
\begin{itemize}
\item[$\bullet$] $[n]\sim x^n$ (ordinary generating functions)
\item[$\bullet$] $[n]\sim {x^n\over n!}$ (exponential generating functions)
\item[$\bullet$] $[n]\sim {1\over n^x}$ (Dirichlet generating functions).
\end{itemize}
The products on series implement convolution products on  sequences:
\begin{itemize}
\item[$\bullet$] Ordinary generating functions: $(\alpha_n)_{n\in\mathbb N}\star(\beta_n)_{n\in\mathbb N}=\left(\sum_{i+j=n}\alpha_i\beta_j\right)_{n\in\mathbb N}$ (ordinary convolution)
\item[$\bullet$] Exponential generating functions: 
\begin{equation}\label{binconv}(\alpha_n)_{n\in\mathbb N}\star(\beta_n)_{n\in\mathbb N}=\left(\sum_{i+j=n}\left(n\atop i\right)\alpha_i\beta_j\right)_{n\in\mathbb N}\end{equation}  (binomial convolution)
\item[$\bullet$] Exponential generating functions: $(\alpha_n)_{n\in\mathbb N\setminus\{0\}}\star(\beta_n)_{n\in\mathbb N\setminus\{0\}}=\left(\sum_{ij=n}\alpha_i\beta_j\right)_{n\in\mathbb N\setminus\{0\}}$ (Dirichlet convolution).
\end{itemize}

For instance, the binomial transform of a sequence is interpreted in terms of exponential generating series by  the multiplication by $\exp(x)$:
\begin{equation}\label{binomialtransform}
\exp(x)\sum_{n\geq 0}\alpha_n{x^n\over n!}=\sum_{n\geq 0}\left(\sum_{i=0}^n\binom ni\alpha_i\right){x^n\over n!}.
\end{equation}

At the other end, we define the characteristic series $\underline{\mathcal S}$ of a set $\mathcal{S}$ as % which are defined as
$
\underline{\mathcal S}=\mathrm{\sigma}_{Id}$, \textit{i.e.} $\sum_{\mathfrak o\in\mathcal S}\mathfrak o
$
where $Id(\mathfrak o)=\mathfrak o$ denotes the identity on $\mathcal S$. In the sequel, we will use characteristic series of languages on an alphabet $X$; these series belong to the algebra $\mathbb C\langle\langle X\rangle\rangle$ which is endowed with the catenation product extended by distributivity and continuity.

Between these two extremes, many configurations are possible.
%For instance,  we will consider exponential multivariate functions which correspond to 
We consider the case where $\mathcal T=\mathbb N^{(X)}$, the set of the integers sequences with a finite support (\emph{i.e.} all the elements of the sequence are zero except a finite number) indexed by the set $X$ (the set of variables). The series are realized by commutative variables as follows:
\begin{itemize}
\item[$\bullet$] $v\sim \prod_{x\in X}x^{v[x]}$ (ordinary multivariate generating functions)
\item[$\bullet$] $v\sim \prod_{x\in X}{x^{v[x]}\over v[x]!}$ (exponential multivariate generating functions)
\item[$\bullet$] $v\sim \prod_{x\in X}\frac1{{v[x]}^x}$ (Dirichlet multivariate generating functions).
\end{itemize}

Alternatively, we can replace a vector $v\in\mathbb N^{(X)}$ by a partition of an integer. A partition of $n$ is a decreasing finite sequence $\lambda=[\lambda_1,\dots,\lambda_k]$ of strictly positive integers verifying $\lambda_1+\cdots+\lambda_k=n$; we will denote this by $\lambda\vdash n$. The number of parts of $\lambda$, $k$ in this context, is also denoted by $\#\lambda$. The product of two partitions $[\lambda_1].[\lambda_2]$ equals the partition obtained by sorting the catenation of the two partitions in decreasing order. For instance set $\lambda_1=[2,2,1]\vdash 5$ and $\lambda_2=[3,2,1]\vdash 6$ then $[\lambda_1].[\lambda_2]=[3,2,2,2,1,1]$. If $X=\{x_1,x_2,\dots\}$ is a discrete alphabet the algebra generated by the elements $[\lambda]$ is isomorphic to the algebra obtained by endowing $\mathbb N^{(X)}$ with the product $[v][v']=[v+v']$ (this corresponds to the ordinary multivariate generating functions). The isomorphism sends explicitly $\lambda$ to the vector $v$ such that $v[x_i]$ is the number of parts in $\lambda$ equal to $i$. For instance $[5,3,2,2,1,1]\sim [2,2,1,0,1,0,\ldots]$. So we can replace the statistic described by vectors of $\mathbb N^{(X)}$ by statistic described by partitions.

In the paper we will also manipulate set partitions of $\{1,\dots,n\}$. In the aim to avoid confusion we will denote (integer) partitions by the symbols $\lambda, \lambda_i, \lambda', \lambda'_i$ \emph{etc.} and the set partitions by the symbols $\pi, \pi_i, \pi', \pi'_i$ \emph{etc.} If $\pi$ is a set partition of a set $E$, we will denote $\pi\vDash E$. For any $\pi'\subset\pi$, we also denote $\pi'\subset_{\vDash}E$. When $E=\{1,\dots,n\}$, we will write $\pi\vDash n$ for $\pi\vDash E$ and  $\pi'\subset_{\vDash}n$ for  any subset $\pi'$ of $\pi$.

%Many other cases can be found in literature which involve combinatorial Hopf algebras (see \emph{e.g.} \cite{NCSF1} for the seminal paper): the space is endowed with a product (assemblage of objects) and a coproduct (decomposition of objects) with compatibilities allowing to define a convolution product and its bases are indexed by combinatorial objects (partitions, compositions, words, set-partitions, permutations, trees \emph{etc}). These constructions mimic the algebra of symmetric functions (see \emph{e.g.} \cite{Macdonald}) and allow to relate combinatorial problems to deep mathematical notions in particular to the representation theory of some algebras (symmetric group algebra, Hecke algebra, Ariki-Koike algebra, Connes-Kreimer algebra \emph{etc}).
%The connexion with generating series is a special case of the theory of species (see \emph{e.g.} \cite{BLL}).

\section{Preliminaries}\label{sec prelim}

 There exist 16 binary boolean functions that can be expressed only with the operators $\wedge$, $\vee$ and $\Delta$ acting on variables and their negation (see Table~\ref{un tableau avec des fonctions booleenes binaires}). Each function is related to a set operation based on the classical correspondence $(\wedge,\vee,\neg,\Delta)\ \longleftrightarrow\ (\cap,\cup,\overline{\color{white}x},\oplus)$. For simplicity, we only use the second list of symbols to denote both.

  \begin{table}[H]
  \centerline{\begin{tabular}{p{0.45cm}p{0.45cm}p{0.45cm}p{0.45cm}p{0.45cm}p{0.45cm}p{0.45cm}p{0.45cm}p{0.45cm}p{0.45cm}p{0.45cm}p{0.45cm}p{0.45cm}p{0.45cm}p{0.45cm}p{0.45cm}}
   \rotatebox{45}{$\emptyset$}   & 
   \rotatebox{45}{$N \cap P$}  &
   \rotatebox{45}{$N \cap \overline{P}$}  & 
   \rotatebox{45}{$N$}  & 
   \rotatebox{45}{$ \overline{N} \cap  P$} & 
   \rotatebox{45}{$P$}  & 
   \rotatebox{45}{$N \oplus P$}  &
   \rotatebox{45}{$N \cup  P$}  & 
   \rotatebox{45}{$\overline{N} \cap  \overline{P}$}  & 
   \rotatebox{45}{$ \overline{N} \oplus P$}  &
   \rotatebox{45}{$ \overline{P}$}  & 
   \rotatebox{45}{$N \cup \overline{P}$}  & 
   \rotatebox{45}{$\overline{N}$}  & 
   \rotatebox{45}{$\overline{N} \cup  P$}  & 
   \rotatebox{45}{$\overline{N} \cup  \overline{P}$}  & 
   \rotatebox{45}{$\Sigma^*$}  \\
   $0$ & $0$ & $0$ & $0$ & $0$ & $0$ & $0$ & $0$ & $1$ & $1$ & $1$ & $1$ & $1$ & $1$ & $1$ & $1$\\
   $0$ & $0$ & $0$ & $0$ & $1$ & $1$ & $1$ & $1$ & $0$ & $0$ & $0$ & $0$ & $1$ & $1$ & $1$ & $1$\\
   $0$ & $0$ & $1$ & $1$ & $0$ & $0$ & $1$ & $1$ & $0$ & $0$ & $1$ & $1$ & $0$ & $0$ & $1$ & $1$\\
   $0$ & $1$ & $0$ & $1$ & $0$ & $1$ & $0$ & $1$ & $0$ & $1$ & $0$ & $1$ & $0$ & $1$ & $0$ & $1$\\
  \end{tabular}}
  \caption{The 16 binary boolean functions.}
  \label{un tableau avec des fonctions booleenes binaires}
  \end{table}

Let us consider three regular languages $M, N, P$ with respective deterministic state complexity $m, n, p\geq 3$. We look at the state complexity of $M(N\circ P)$ where $\circ$ is a boolean operation. 

%It is clearly bounded by the mathematical composition of the state complexities of the individual component operations, that is $(m-1)2^{np}+2^{np-1}$. But, as we will see, the researched value is lower than this one.

Let $A=(\Sigma,Q_A=\{0,1,\ldots,m-1\},0,F_A,\cdot)$, $B=(\Sigma,Q_B=\{q_0,q_1,\ldots,q_{n-1}\},q_0,$ $F_B,\cdot)$, $C=(\Sigma,Q_C=\{r_0,r_1,$ $\ldots,r_{p-1}\},r_0,F_C,\cdot)$ be the three minimal DFAs for $M$, $N$, $P$ respectively and consider the  DFA $D=(\Sigma,Q_D=Q_A\times 2^{Q_B\times Q_C},i_D,F_D,\cdot)$ defined as follows: %  as the  accessible part of the subset automaton of the NFA issued from the catenation of $A$ with the cartesian product  $B\circ C$.
\begin{itemize}
\item $i_D=\left\{\begin{array}{ll}
(0,\emptyset)&\text{ if } 0\not\in F_A\\
(0,(q_0,r_0))&\text{otherwise}
\end{array}
\right.$ 
\item the set of final states 
%lud ajout
$F_D$
depends on the operation $\circ$:
 %
 %For example, in the case where $\circ=\cap$ a state $(i,S)$ is final if $S$ contains a couple of $F_B\times F_C$. 
 a state $(i,S)$ is final if $S$ contains a couple $(q,r)$ satisfying $(q\in F_B)\circ (r\in F_C)$ and
\item for a symbol $a\in \Sigma$ and a state $(i,S)$ in $D$, 
the function $\cdot$ is defined by

$(i,S)\cdot a=
  \left\{
    \begin{array}{ll}
      (i\cdot a,S\cdot a)&\text{ if }i\cdot a \notin  F_A \\
      (i\cdot a,S\cdot a\cup \{(q_0,r_0)\}) &\text{ otherwise,}
    \end{array}
  \right.$

  where the dot notation is extended to a set of couples by $S\cdot w=\bigcup_{(q,r)\in S}(q\cdot w)\times(r\cdot w)$ and $w\cdot S=\bigcup_{(q,r)\in S}(w\cdot q)\times(w\cdot r)$. 
\end{itemize}
%dessin
%If we denote by $\{0,1,...,m-1\}$ the states of $A$, by $\{q_0,q_1,...,q_{n-1}\}$ those of $B$ and  by $\{r_0,r_1,...,r_{p-1}\}$ those of $C$,
%  A state of $Q_D$ is identified to a couple $(i,S)$, where $i\in\{0,1,...,m-1\}$ and $S$ is a set of couples among the cartesian product $\{q_0,q_1,\ldots,q_{n-1}\}\times\{r_0,r_1,\ldots,r_{p-1}\}$. 
 % 
 
   The automaton $A$ being deterministic, the definition of $D$ implies that the accessible part of the subset automaton of $A\cdot (B\circ C)$ is  the accessible part of $D$, therefore  $L(D)=M(N\circ P)$.

%To ease the discussion, we use the classical lexicographical order over couples:  $(q_{j_1},r_{k_1})$ precedes $(q_{j_2},r_{k_2})$ if $j_1<j_2$ or if $j_1=j_2$ and $k_1<k_2$. 
For any state $(i,S)$, the set $S$ can be seen as a tableau with $n$ rows and $p$ columns where any cell $(j,k)$ is marked if and only if the couple of states $(q_j,r_k)$ is in $S$ (see Figure \ref{tableau}). In the following, for simplicity, when the dimensions are fixed, we 
%lud suppr will 
assimilate a tableau with the set of its marked cells.

  \begin{figure}[H]
    \centerline{ 
      \begin{tikzpicture}[scale=0.5]   
	    \foreach \x in {1,...,7} {
	      \foreach \y in {1,...,6} {
	        \pgfmathparse{\x+1} \let\z\pgfmathresult
	        \pgfmathparse{\y+1} \let\t\pgfmathresult
	        \draw[fill=gray!40] (\x,\y) rectangle (\x+1,\y+1);
	      }
	    }  
	    \foreach \x/\y in {3/3,6/3,6/5} {
	        \pgfmathparse{\x+1} \let\z\pgfmathresult
	        \pgfmathparse{\y+1} \let\t\pgfmathresult
	        \draw[fill=white] (\x,\y) rectangle (\x+1,\y+1);
	        \draw (\x,\y) -- (\z,\t);
	        \draw (\z,\y) -- (\x,\t);
	    }
      \end{tikzpicture}
    }
    \caption{The tableau corresponding to $S=\{(q_3,r_2),(q_1,r_5),(q_3,r_5)\}$ with $n=6$ and $p=7$}
    \label{tableau}    
  \end{figure}
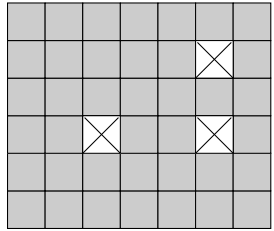 

%First, recall that states of the form $(m-1,S)$ imply that $(q_0,r_0)\in S$. Then, 
Since the state complexity of catenation is $ \sc_\bullet(m,m')=(m-1)2^{m'}+2^{m'-1}$ and the state complexity of a binary boolean operation
%lud ajout
$\circ$
is upperbounded by $\sc_\circ(n,p)=np$, from Claim \ref{claim1}, their composition allows to upperbound
the state complexity of  $M(N\circ P)$ by $(m-1)2^{np}+2^{np-1}$. This bound is reached when $\circ=\cap$ \cite{CGKY11}.
A study of $\circ=\cup$ has been done in the same paper, but the approach was not exactly the same. 

Indeed, they use the fact that the union $N \cup P$ can be performed without computing any cartesian product, but by computing the union of the states of the automata $B$ and $C$. Thus, the catenation of $A$ with $B\cup C$ is an automaton the determinization of which produces states of the form $(j,T,T')$, with $j$ a state of $A$ and  $T, T'$ two subsets of  states of $B$ and $C$ respectively. We can also notice that if $j$ is final, the sets $T$ and $T'$ must contain the initial state of $B$ and $C$ respectively, and therefore that $T$ is empty if and only if $T'$ is empty. Then the bound is $(m-1)(2^{n+p}-2^n-2^p+2)+2^{n+p-2}$ (only reached when $A$ has only one final state).

%When $A$ admits only one final state, the bound is $(m-1)(2^{n+p}-2^n-2^p+2)+2^{n+p-2}$. %$M(N\cup P)=MN\cup MP$ and consider a minimal DFA $D'$ obtained from this composition on the minimal automata $A$, $B$ and $C$. So the states of $D'$ are of the form $(j,T,T')$ where $j$ is a state of $A$,  $T$ is a subset of states of $B$ and $T'$ is a subset of states of $C$. Notice that if $j$ is a non-final state of $A$ the other two components are either both the empty set or both non-empty sets. We can also notice that if $j$ is a final state of $A$, the sets $T$ and $T'$ must contain the initial state of $B$ and $C$ respectively. 
The state complexities of $M(N\cap P)$ and of $M(N\cup P)$ are known. Then, knowing the state complexity of $M(N\oplus P)$ will lead us to the knowledge of the state complexity of $M(N\circ P)$ for any non degenerative boolean operator. Indeed, as seen in Table \ref{un tableau avec des fonctions booleenes binaires}, there are $16$ boolean operations over two languages $N$ and $P$ and $6$ of them are degenerative $(\emptyset,\Sigma^*,N,P,\overline{N}, \overline{P})$. All the others can be expressed using one of  the three boolean operators $\cap$, $\cup$ and $\oplus$ and the complement operator over the single language $N$ or $P$. As the state complexity of the complement of $N$ (obtained by interchanging the states finality of $B$) is the state complexity of $N$,  the state complexity of $M(N\oplus P)$ will give us the state complexity of $M(N\circ P)$ for any boolean operator $\circ$.    Indeed, having a witness for $M(N\oplus P)$, $M(N\cup P)$ and $M(N\cap P)$ will give us a witness for any combination $M(N\circ P)$ as $N\circ P$ can be expressed through one of the three boolean operations applied to the complementation of $N$ and/or $P$. 
   For example suppose that the triple of automata $A, B, C$ is  a witness for the combination $M(N\cap P)$. We want to produce a witness for the combination $M(N-P)$. First, notice that this combination is $M(N\cap \overline{P})$. So the triple of automata $A, B$ and $\overline{C}$ is a witness for   $M(N-P)$ where   $\overline{C}$ is   the automaton $C$ in which states  finality  has been interchanged.

%\begin{table}
%\centerline{\begin{tabular}{ll}
%%\hline
%$M(N\cup P)$&$M(\overline{N}\cap \overline{P})$\\
%%\hline
%$M(N\cap P)$&$M(\overline{N}\cup \overline{P})$\\
%%\hline
%$M(N\oplus P)$&$M(\overline{N}\oplus P)$\\
%%\hline
%$M(N- P)=M(N\cap \overline{P})$\hspace*{2cm}&$M(\overline{N}\cup P)$\\
%%\hline
%$M(P- N)=M(\overline{N}\cap P)$&$M(N\cup \overline{P})$
%%\hline
%%$M(\overline{N}\cup P)$\\
%%\hline
%%\hline
%%\hline
%%\hline
%%\hline
%\end{tabular}}
%\caption{The $10$ non degenerative boolean operations over two languages}\label{tab}
%\end{table}

 The state complexity of $M(N\cup P)$ can be reinterpreted as follows in the automaton $D$. Let $(i,S)$ and $(i,S')$ be two distinct states such that the couples $(q_x,r_{x'})$ and $(q_y,r_{y'})$ are in $S'$ and $S=S'\cup \{(q_x,r_{y'})\}$. Then the two states $(i,S)$ and $(i,S')$ are  equivalent. Indeed,  to separate  these states, one has to find a word $w$ such that $(1)$ $S'\cdot w$ is equal to a set of couples which members are both non-final and $(2)$ $(q_x,r_{y'})\cdot w$ leads to a couple of states at least one of the two is final. The fact that  $q_x\cdot w$ or $r_{y'}\cdot w$ is final is contradicting $(1)$. So $(i,S)$ and $(i,S')$ are equivalent.

 Such equivalent states have tableaux with specific patterns. Indeed, the tableaux for $S$ and $S'$ contain the pattern of Figure \ref{fig two tab}(a) and Figure \ref{fig two tab}(b) respectively. None of them can be distinguish from the pattern of  Figure \ref{fig two tab}(c). So the number of equivalent states is the number of undistinguishable tableaux which are represented by the patterns of  Figure \ref{fig two tab}. The number of tableaux not containing patterns of  Figure \ref{fig two tab}(a) or Figure \ref{fig two tab}(b) is $(2^n-1)(2^p-1)+1$.
 
% Necessarily, there are states of $D$, let us say  $(i,S)$ and $(i,S')$ that are pairwise equivalent. 
% Let the couples $(q_x,r_{x'})$ and $(q_y,r_{y'})$ be in $S$ and $S'$ and the couple $\{(q_x,r_{y'})\}$ be in $S$ but not in $S'$. Then the two states $(i,S)$ and $(i,S')$ are pairwise equivalent. Indeed,  to separate  states, one has to find a word $w$ such that $(1)$ $S'\cdot w$ is equal to a set of couples which are both non-final and $(2)$ $(q_x,r_{y'})\cdot w$ leads to a couple of states at least one of the two is final. The fact that  $q_x\cdot w$ or $r_{y'}\cdot w$ is final is contradicting $(1)$. So $(i,S)$ and $(i,S')$ are equivalent.
%  To obtain the  state complexity of a boolean operation, it will be enough to count the number of states that are pairwise equivalent i.e. the number of tableaux that are equivalent. In the case of $\circ=\cup$, we have already seen that the three tableaux of Figure \ref{fig two tab} are equivalent.
%  
 \begin{figure}[htb]
 
	\begin{minipage}{0.33\linewidth}
	\centerline{
  \begin{tikzpicture}[scale=0.5]   
	    \foreach \x in {1,...,6} {
	      \foreach \y in {1,...,5} {
	        \pgfmathparse{\x+1} \let\z\pgfmathresult
	        \pgfmathparse{\y+1} \let\t\pgfmathresult
	        \draw[fill=gray!40] (\x,\y) rectangle (\x+1,\y+1);
	      }
	    }  
	    \foreach \x/\y in {2/2,5/4} {
	        \pgfmathparse{\x+1} \let\z\pgfmathresult
	        \pgfmathparse{\y+1} \let\t\pgfmathresult
	        \draw[fill=white] (\x,\y) rectangle (\x+1,\y+1);
	        \draw (\x,\y) -- (\z,\t);
	        \draw (\z,\y) -- (\x,\t);
	    }	   
      \end{tikzpicture}
}
\end{minipage}
	\begin{minipage}{0.33\linewidth}
	\centerline{
  \begin{tikzpicture}[scale=0.5]   
	    \foreach \x in {1,...,6} {
	      \foreach \y in {1,...,5} {
	        \pgfmathparse{\x+1} \let\z\pgfmathresult
	        \pgfmathparse{\y+1} \let\t\pgfmathresult
	        \draw[fill=gray!40] (\x,\y) rectangle (\x+1,\y+1);
	      }
	    }  
	    \foreach \x/\y in {2/2,5/2,5/4} {
	        \pgfmathparse{\x+1} \let\z\pgfmathresult
	        \pgfmathparse{\y+1} \let\t\pgfmathresult
	        \draw[fill=white] (\x,\y) rectangle (\x+1,\y+1);
	        \draw (\x,\y) -- (\z,\t);
	        \draw (\z,\y) -- (\x,\t);
	    }
      \end{tikzpicture}
}
\end{minipage}
	\begin{minipage}{0.33\linewidth}
	\centerline{
  \begin{tikzpicture}[scale=0.5]   
	    \foreach \x in {1,...,6} {
	      \foreach \y in {1,...,5} {
	        \pgfmathparse{\x+1} \let\z\pgfmathresult
	        \pgfmathparse{\y+1} \let\t\pgfmathresult
	        \draw[fill=gray!40] (\x,\y) rectangle (\x+1,\y+1);
	      }
	    }  
	    \foreach \x/\y in {2/2,5/2,2/4, 5/4} {
	        \pgfmathparse{\x+1} \let\z\pgfmathresult
	        \pgfmathparse{\y+1} \let\t\pgfmathresult
	        \draw[fill=white] (\x,\y) rectangle (\x+1,\y+1);
	        \draw (\x,\y) -- (\z,\t);
	        \draw (\z,\y) -- (\x,\t);
	    }
      \end{tikzpicture}
}
\end{minipage}
\caption{Three undistinguishable tableaux (a), (b), (c), for the union operator}\label{fig two tab}
\end{figure}
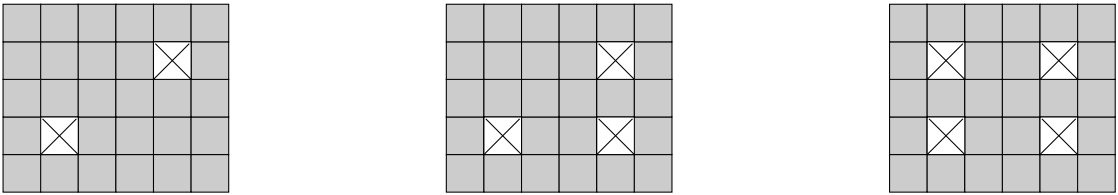	
	
 Indeed, one has to choose among $n$ rows and $p$ columns (at least one of each) and mark every cell at the intersection of the chosen rows and columns ($(2^n-1)(2^p-1)$) plus one configuration with no cell marked. We also have to count the same tableaux but with the cell $(0,0)$ marked ($2^{n-1}2^{p-1}$ tableaux). These observations 
 %lud leads 
lead
to the state complexity $(m-1)((2^n-1)(2^p-1)+1)+2^{n-1}2^{p-1}=(m-1)(2^{n+p}-2^n-2^p+2)+2^{n+p-2}$ of  $M(N\cup P)$.
 
  As there exist languages $M$, $N$ and $P$ such that there are no undistinguishable tableaux for $M(N\cap P)$, the state complexity  coincides with the bound. The study of undistinguishable tableaux for $M(N\oplus P)$ is done in the next section.

\section{An upper bound for the state complexity of $M(N\oplus P)$}\label{upper}

  In this section, we proceed as in the previous one by considering the same automata $A$, $B$, $C$ and $D$, setting $\circ=\oplus$.
%  are interested in finding an upper bound for the state complexity of $M(N\oplus P)$.
%  We proceed in the same way like the previous section: given three minimal DFAs $A=(\Sigma,Q_A=\{0,1,\ldots,m-1\},0,F_A,\cdot_A)$, $B=(\Sigma,Q_B=\{q_0,q_1,\ldots,q_{n-1}\},q_0,$ $F_B,\cdot_B)$ and $C=(\Sigma,Q_C=\{r_0,r_1,$ $\ldots,r_{p-1}\},$ $r_0,F_C,\cdot_C)$, we compute the DFA $D$ obtained as the subset automaton of the NFA issued from the catenation of $A$ with the symmetrical difference $B \oplus C$  of $B$ and $C$.
%  As a direct consequence, the state complexity of $M(N\oplus P)$ is upper bounded by $(m-1)2^{np}+2^{np-1}$, since any state $(j,S)$ of $D$ satisfies $(q_0,r_0)\in S$ if $j$ is final in $A$.
  Notice that %in this case, a state $(q_x,r_y)$ of the symmetrical difference of $B$ and $C$ is final if and only if either $q_x\in F_B$ or $r_y\in F_C$.
   a state $(i,S)$ of $D$ is final if and only if $S$ contains a couple $(q_x,r_y)$ such that either $q_x\in F_B$ or $r_y\in F_C$ but not both.
     
    As for the union (see the previous section), some particular states are necessarily equivalent. Let $(i,S)$ and $(i,S')$ be two distinct states such that the couples $(q_x,r_{x'})$, $(q_x,r_{y'})$ and $(q_y,r_{y'})$ are in $S'$ and $S=S'\cup \{(q_{y},r_{x'})\}$. Then the two states $(i,S)$ and $(i,S')$ are  equivalent.
 % This classical construction allows us to highlight the fact that this upper bound cannot be reached. 
%  Indeed, just like it is the case for the union (see the previous section), some particular states are necessarily equivalent. 
 % Let $E=(\Sigma,Q_E=Q_B\times Q_C,(q_0,r_0),F_E=\{(q_x,r_y)\mid q_x \in F_B\oplus r_y\in F_C\},\cdot_E)$ be the automaton $B\oplus C$.
%  Consider two subsets $S$ and $S'$ of $Q_E$. 
%  Suppose that three couples $(q_x,r_{x'})$, $(q_x,r_{y'})$ and $(q_y,r_{y'})$ are in $S'$ and that $S=S'\uplus \{(q_x,r_{y'})\}$ for some integers $0\leq x,y\leq n-1$ and $0\leq x',y'\leq p-1$. 
 % For any integer $0\leq j \leq m-1$, the two states $(j,S)$ and $(j,S')$ are necessarily equivalent in $D$, whatever are the automata $A$, $B$ and $C$. 
  Indeed, if a word $w$ separates  $(i,S)$ and $(i,S')$, then $w$ sends $q_y$ in $F_B$ or $r_{x'}$ in $F_C$ but not both,  sending $(i,S)$ to a final state of $D$. This cannot be achieved without sending  $(i,S')$ to a final state of $D$, thus contradicting the separation by $w$. This is formally proved in the next lemma.
  
  \begin{lemma}\label{lem pas separ couples}
    %Let $(q,r)$ be a couple in $\{q_x,q_{y}\}\times \{r_{x'},r_{y'}\}$ and $w\in \Sigma^*$. 
 %   If there exists a word $w$ such that either $q\cdot_B w\in F_B$ or  $r\cdot_C w\in F_C$ but not both,  then there exists a couple $(q',r')$ in $\{q_x,q_{y}\}\times \{r_{x'},r_{y'}\}$ different from  $(q,r)$ such that either $q'\cdot_B w\in F_B$ or  $r'\cdot_C w\in F_C$ but not both.
Let E=$\{q_x,q_{y}\}\times \{r_{x'},r_{y'}\}$. Let $(q,r)\in E$ and $w\in \Sigma^*$ with 
       $(q\cdot w\in F_B)\oplus (r\cdot w\in F_C)$
       Then there exists $(q',r')\in E$ such that  
       \begin{equation}\label{eq1}
       (q',r')\neq(q,r)\wedge (q'\cdot w\in F_B)\oplus(r'\cdot w\in F_C)
       \end{equation}
        \end{lemma}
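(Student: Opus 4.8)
The plan is to reduce the statement to an elementary parity property of the exclusive-or applied to four truth values. First I would identify each membership predicate with its truth value in $\{0,1\}$, so that on predicates the operator $\oplus$ is exactly addition modulo $2$. Writing
\[u=(q_x\cdot w\in F_B),\quad v=(q_y\cdot w\in F_B),\quad s=(r_{x'}\cdot w\in F_C),\quad t=(r_{y'}\cdot w\in F_C),\]
the four couples of $E=\{q_x,q_y\}\times\{r_{x'},r_{y'}\}$, namely $(q_x,r_{x'})$, $(q_x,r_{y'})$, $(q_y,r_{x'})$ and $(q_y,r_{y'})$, meet the separating condition precisely when the corresponding quantities $u\oplus s$, $u\oplus t$, $v\oplus s$ and $v\oplus t$ equal $1$.

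The key observation I would then exploit is that these four quantities $\oplus$-sum to $0$:
\[(u\oplus s)\oplus(u\oplus t)\oplus(v\oplus s)\oplus(v\oplus t)=0,\]
since each of $u,v,s,t$ occurs exactly twice on the left-hand side and thus cancels. Consequently the number of couples of $E$ satisfying the separating condition is even. By hypothesis this number is at least $1$, being witnessed by $(q,r)$; an even number that is at least $1$ is at least $2$, so there is a second couple $(q',r')\in E$ with $(q',r')\neq(q,r)$ satisfying \eqref{eq1}. For this last step to produce a couple genuinely distinct from $(q,r)$ I use that the four couples are pairwise distinct, i.e. $q_x\neq q_y$ and $r_{x'}\neq r_{y'}$, which is guaranteed by the situation of interest (the three couples of $S'$ and the added couple are all distinct).

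Equivalently, and closer to the informal argument preceding the lemma, one may argue by contraposition: taking $(q,r)=(q_x,r_{x'})$ without loss of generality, if it were the only couple of $E$ meeting the condition then the three equalities $u\oplus t=v\oplus s=v\oplus t=0$ would force $u=t=v=s$, whence $u\oplus s=0$, contradicting that $(q,r)$ itself separates. I do not anticipate any real obstacle here, since the entire content is the parity identity displayed above; the only point that deserves care is the distinctness of the four couples, which is exactly what turns ``an even, nonzero number of separating couples'' into ``a separating couple other than $(q,r)$''.
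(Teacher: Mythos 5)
Your proof is correct, but it takes a genuinely different route from the paper's. The paper fixes $(q,r)=(q_x,r_{x'})$ without loss of generality, restricts attention to the case $q_x\cdot w\in F_B$ and $r_{x'}\cdot w\notin F_C$, and then exhibits the witness $(q',r')$ explicitly through a three-way case analysis on the status of $q_y\cdot w$ and $r_{y'}\cdot w$; the argument is constructive but ad hoc. You instead prove a structural fact: encoding the four memberships as bits $u,v,s,t$, the sum $(u\oplus s)\oplus(u\oplus t)\oplus(v\oplus s)\oplus(v\oplus t)$ vanishes because each bit occurs twice, so the number of separating couples among the four corners of the rectangle $E$ is even, and an even number that is at least one is at least two. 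This is shorter, fully symmetric (no WLOG and no case split), and it isolates the real mechanism behind the lemma --- the parity of separating corners of a rectangle --- which is slightly stronger information than the statement itself; your contrapositive variant packages the same identity in a form closest to the paper's informal discussion. The one point requiring care, which you correctly flag, is that the four couples must be pairwise distinct (i.e. $q_x\neq q_y$ and $r_{x'}\neq r_{y'}$) for ``at least two separating couples'' to yield one different from $(q,r)$; note that the paper's proof silently relies on the same hypothesis (its constructed witness is distinct from $(q,r)$ only when the underlying states differ), and the hypothesis does hold where the lemma is applied, since $S=S'\cup\{(q_y,r_{x'})\}$ with $S\neq S'$ forces $x\neq y$ and $x'\neq y'$.
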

  \begin{proof}
    Without loss of generality, let us set $(q,r)=(q_x,r_{x'})$. 
  %  Let $w$ be a word in $\Sigma^*$.
   % Suppose that $(q\cdot_B w\in F_B)\oplus (r\cdot_C w\in F_C)$.
    We consider only the case where $q_x\cdot w\in F_B$ and $r_{x'}\cdot w\notin F_C$; the other case can be proved similarly. The couple 
    
    $(q',r')=\left\{\begin{array}{lll}
    (q_y,r_{x'}) &\mbox{if }q_y\cdot  w\in F_B\\
    (q_y,r_{y'}) &\mbox{if }q_{y}\cdot w \notin F_B\wedge r_{y'}\cdot w\in F_C\\
    (q_x,r_{y'}) &\mbox{otherwise}
    \end{array}\right.$
    
    satisfies (\ref{eq1}).
    \cqfd
  \end{proof}
  
  Such equivalent states imply undistinguishable tableaux as described below.
  %As a direct consequence of this lemma, sending the state $(j,S)$ on a final state of $D$ cannot be performed without sending $(j,S')$ on a final state too. 
%  This particular case can be characterize using the tableaux. 
%  
  Four distinct marked cells $s_1$, $s_2$, $s_3$ and $s_4$ define a \emph{rectangle} if there exist four integers $x$, $x'$, $y$ and $y'$ such that $\{s_1,s_2,s_3,s_4\}=\{q_x,q_{y}\}\times\{r_{x'},r_{y'}\}$. 
  Three distinct marked cells $s_1$, $s_2$ and $s_3$ form a \emph{right triangle} if there exists an unmarked cells $s_4$ such that $s_1$, $s_2$, $s_3$ and $s_4$ form a rectangle (See Figure~\ref{fig rect} and Figure~\ref{fig tri}).   A tableau $S$ is \emph{saturated} if it does not contain any right triangle. For each tableau $S$, we define $\mathrm{Sat}(S)$ as the smallest saturated tableau containing $S$. We can notice that $\mathrm{Sat}(S)$ is the intersection of all saturated tableaux containing $S$. Its existence is ensured since the tableau with each cell marked is saturated. Its unicity is due to the fact that the intersection of two saturated tableau containing $S$  is still a saturated tableau containing $S$.

  \begin{minipage}{0.45\linewidth}
    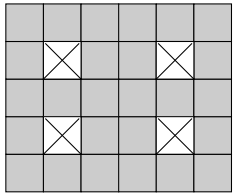
\begin{figure}[H]
      \centerline{
        \begin{tikzpicture}[scale=0.5]   
	      \foreach \x in {1,...,6} {
	        \foreach \y in {1,...,5} {
	          \pgfmathparse{\x+1} \let\z\pgfmathresult
	          \pgfmathparse{\y+1} \let\t\pgfmathresult
	          \draw[fill=gray!40] (\x,\y) rectangle (\x+1,\y+1);
	        }
	      }  
	      \foreach \x/\y in {2/2,5/4,2/4,5/2} {
	        \pgfmathparse{\x+1} \let\z\pgfmathresult
	        \pgfmathparse{\y+1} \let\t\pgfmathresult
	        \draw[fill=white] (\x,\y) rectangle (\x+1,\y+1);
	        \draw (\x,\y) -- (\z,\t);
	        \draw (\z,\y) -- (\x,\t);
	      }
        \end{tikzpicture}
      }
      \caption{A rectangle.}
      \label{fig rect}
    \end{figure}
  \end{minipage}
  \hfill  
  \begin{minipage}{0.45\linewidth}
    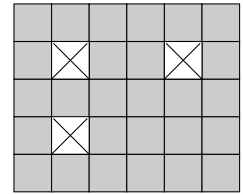
\begin{figure}[H]
      \centerline{
        \begin{tikzpicture}[scale=0.5]   
	      \foreach \x in {1,...,6} {
	        \foreach \y in {1,...,5} {
	          \pgfmathparse{\x+1} \let\z\pgfmathresult
	          \pgfmathparse{\y+1} \let\t\pgfmathresult
	          \draw[fill=gray!40] (\x,\y) rectangle (\x+1,\y+1);
	        }
	      }  
	      \foreach \x/\y in {2/2,5/4,2/4} {
	        \pgfmathparse{\x+1} \let\z\pgfmathresult
	        \pgfmathparse{\y+1} \let\t\pgfmathresult
	        \draw[fill=white] (\x,\y) rectangle (\x+1,\y+1);
	        \draw (\x,\y) -- (\z,\t);
	        \draw (\z,\y) -- (\x,\t);
	      }
        \end{tikzpicture}
      }
      \caption{A right triangle.}
      \label{fig tri}
    \end{figure}
  \end{minipage}
  
%The following property about saturated tableaux will be of some interest:
Alternatively, we can define saturated tableaux as the final step of a confluent rewriting process.
We denote by $S\rightarrow S'$ if $S'$ is obtained from $S$ by marking one cell which completes  a right triangle in $S$ into a rectangle in $S'$. The reflexive and transitive closure $\stararrow$ of $\rightarrow$ is a confluent partial order whose maximal elements are the saturated tableaux.
It follows that a property is preserved by the transformation $\stararrow$ if and only if it is preserved by the transformation $\rightarrow$.
We illustrate this principle with two lemmas which will be used below.

\begin{lemma}\label{lemma-distinct}
Let $S$ be a tableau where a cell $(j,k)$ is marked. If a tableau $S'$ contains no marked cell on row $j$ (equivalently on column $k$) then $\mathrm{Sat}(S)\neq\mathrm{Sat}(S')$.
\end{lemma}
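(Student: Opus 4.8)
The plan is to exhibit an invariant of the saturation process that separates $\mathrm{Sat}(S)$ from $\mathrm{Sat}(S')$. For a tableau $T$, call a row \emph{occupied} if it contains at least one marked cell, and write $R(T)$ for the set of occupied rows; define $C(T)$ analogously for columns. The heart of the argument is the claim that $R$ (and, symmetrically, $C$) is preserved by every single rewriting step $\rightarrow$, hence by its reflexive–transitive closure $\stararrow$, so that $R(\mathrm{Sat}(T))=R(T)$ and $C(\mathrm{Sat}(T))=C(T)$ for every tableau $T$. Because a property is preserved by $\stararrow$ as soon as it is preserved by $\rightarrow$, it suffices to examine one step.

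To prove this invariance I would unwind the definition of a step $T\rightarrow T'$. Such a step marks a single cell $s_4$ completing a right triangle into a rectangle, so there are three already-marked cells $s_1,s_2,s_3$ with $\{s_1,s_2,s_3,s_4\}=\{q_x,q_y\}\times\{r_{x'},r_{y'}\}$ and $s_4$ the unique unmarked corner. Writing $s_4=(q_x,r_{x'})$, the two corners $(q_x,r_{y'})$ and $(q_y,r_{x'})$ are among the already-marked $s_i$; the first lies on row $x$ and the second on column $x'$. Hence row $x$ and column $x'$ are already occupied in $T$, so marking $s_4$ creates neither a new occupied row nor a new occupied column. Since passing from $T$ to $T'=T\cup\{s_4\}$ can only enlarge the occupied sets, and we have shown it enlarges neither, we get $R(T')=R(T)$ and $C(T')=C(T)$; induction along a rewriting sequence then yields the invariance under $\stararrow$.

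With the invariant in hand the conclusion is immediate. Since the cell $(j,k)$ is marked in $S$, row $j$ is occupied in $S$, hence $j\in R(S)=R(\mathrm{Sat}(S))$. By hypothesis $S'$ has no marked cell on row $j$, so $j\notin R(S')=R(\mathrm{Sat}(S'))$. The two saturated tableaux therefore differ in their sets of occupied rows, and so $\mathrm{Sat}(S)\neq\mathrm{Sat}(S')$. The parenthetical variant is handled identically: as $(j,k)$ is marked in $S$ we also have $k\in C(S)$, and the same reasoning applied to $C$ settles the case where $S'$ has no marked cell on column $k$.

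I expect the only delicate point to be the invariance claim, and within it the bookkeeping that locates, among the three marked corners, one cell sharing the row of $s_4$ and one sharing its column; once the correct corners are identified, the rest is a one-line monotonicity observation. Everything downstream is a direct application of $R(\mathrm{Sat}(T))=R(T)$.
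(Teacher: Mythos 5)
Your proof is correct and follows essentially the same route as the paper: both arguments rest on the observation that a rewriting step $\rightarrow$ marks a cell whose row and column already contain marked cells (the two adjacent corners of the completed rectangle), so the set of occupied rows/columns is invariant under $\stararrow$, and hence $\mathrm{Sat}(S)$ has a marked cell on row $j$ while $\mathrm{Sat}(S')$ does not. Your formulation via the invariants $R(T)$ and $C(T)$ is just a slightly more global packaging of the paper's statement that ``contains no marked cell on row $j$'' is preserved by each step.
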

\begin{proof}
First remark that if $T\rightarrow T'$ then $T$ contains no marked cell on row $j$ (resp. column $k$) if and only if $T'$ contains no marked cell on row $j$ (resp. column $k$). Indeed,
$T$ differs from $T'$  in only one marked cell which completes a right triangle in $T$ into a rectangle in $T'$. So, this cell belongs to a row and a column in $T$ which both contain a marked cell.

We deduce that if $T\stararrow T'$ then $T$ contains no marked cell on  row $j$ (resp. column $k$) if and only if $T'$ contains no marked cell on row $j$ (resp. column $k$).

Let $S$ and $S'$ be two  tableaux such that a cell $(j, k$) is marked in $S$ and tableau $S'$ contains no marked cell on row $j$ (resp. on column $k$). Since $S'\stararrow 
%lud mod Sat
\mathrm{Sat}
(S')$, $
%lud mod Sat
\mathrm{Sat}
(S')$ contains no marked cell on row $j$ (resp. column $k$). Obviously, this is not the case for $
%lud mod Sat
\mathrm{Sat}
(S)$ where the cell $(j,k)$ is marked.
 \cqfd
\end{proof}

\begin{lemma}\label{lemma-final}
For any state $(i,S)$ of $D$, 
%lud supp we have 
$(i,S)\in F_D$ if and only if $(i,\mathrm{Sat}(S))\in F_D$.
\end{lemma}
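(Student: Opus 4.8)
The plan is to reduce the statement to an invariance property of the elementary rewriting $\rightarrow$, exploiting the remark established just before Lemma~\ref{lemma-distinct} that a property is preserved by $\stararrow$ if and only if it is preserved by $\rightarrow$. First I would reformulate finality intrinsically: by the characterization recalled at the start of this section, a state $(i,T)$ lies in $F_D$ if and only if $T$ contains a couple $(q_x,r_y)$ with $(q_x\in F_B)\oplus(r_y\in F_C)$; let me call such a couple a \emph{final couple}. Since $\mathrm{Sat}(S)$ is obtained from $S$ by a finite sequence $S\stararrow\mathrm{Sat}(S)$ of markings, it suffices to prove that the presence of a final couple is invariant under $\rightarrow$, and then lift this to $\stararrow$ via the quoted principle.

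One implication is immediate. If $S\rightarrow S'$ then $S\subseteq S'$, so every final couple of $S$ remains a final couple of $S'$; hence $(i,S)\in F_D$ implies $(i,S')\in F_D$, and by transitivity $(i,S)\in F_D$ implies $(i,\mathrm{Sat}(S))\in F_D$.

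The converse is the core of the argument, and this is the step I expect to be the main obstacle. Suppose $S\rightarrow S'$, where $S'=S\cup\{s_4\}$ and $s_4$ is the cell completing a right triangle $\{s_1,s_2,s_3\}\subseteq S$ into a rectangle $E=\{q_x,q_y\}\times\{r_{x'},r_{y'}\}$, and assume $(i,S')\in F_D$. If a final couple of $S'$ already lies in $S$ we are done, so the only genuinely new case is when $s_4$ itself is the final couple. This situation is exactly what Lemma~\ref{lem pas separ couples} governs: applying it with $w=\varepsilon$ to the couple $s_4\in E$ (for which the condition $(q\cdot\varepsilon\in F_B)\oplus(r\cdot\varepsilon\in F_C)$ just says that $s_4$ is a final couple) produces a second couple in $E$, distinct from $s_4$, that is also final. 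That second couple is necessarily one of $s_1,s_2,s_3$, all of which belong to $S$; hence $S$ already contains a final couple and $(i,S)\in F_D$.

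Combining the two implications shows that finality is preserved by $\rightarrow$, therefore by $\stararrow$, and in particular $(i,S)\in F_D$ if and only if $(i,\mathrm{Sat}(S))\in F_D$. The whole difficulty is thus concentrated in the converse direction, and it is entirely absorbed by Lemma~\ref{lem pas separ couples} specialized to the empty word: once one observes that $w=\varepsilon$ converts the separation-impossibility statement into the assertion that a freshly marked cell can never be the sole final couple of $S'$, no further computation is needed.
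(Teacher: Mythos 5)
Your proof is correct, and its skeleton matches the paper's: reduce the statement about $\mathrm{Sat}(S)$ to invariance of finality under a single rewriting step $\rightarrow$, dispose of the forward direction by inclusion, and, for the converse, show that when the freshly marked cell $s_4$ is the only candidate final couple in $S'$, one of the three pre-existing corners of the rectangle must already be final in $S$. Where you genuinely differ is in how that crucial step is discharged. The paper proves it from scratch: it introduces the set $F_S$ of tableaux containing a final cell and runs an explicit case analysis on the finality of the row and column of the new cell (if the new cell is final on a final row and the adjacent corners $(j',k)$ and $(j,k')$ are not final, then row $j'$ is non-final and column $k'$ is final, forcing the opposite corner $(j',k')$ to be final). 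You instead observe that Lemma~\ref{lem pas separ couples} specialized to $w=\varepsilon$ says precisely that a final couple inside a rectangle never comes alone, which yields the conclusion with no further computation; this is a legitimate use of that lemma, since $q\cdot\varepsilon=q$ and $r\cdot\varepsilon=r$ make its hypothesis literally the statement that $s_4$ is final. Your route is arguably cleaner: it avoids duplicating the XOR case analysis, which in the paper appears once in the proof of Lemma~\ref{lem pas separ couples} and then again, in row/column language, in the proof of Lemma~\ref{lemma-final}, and it makes explicit that both lemmas rest on the same combinatorial fact about rectangles. What the paper's self-contained version buys is only independence of presentation — its argument stays inside the tableau formalism — so the two proofs have the same mathematical content, with yours factoring it better.
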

\begin{proof}
We define the set $F_S$ of the tableaux which have a marked final cell, that is a cell  belonging to a final  row (a final row corresponds to a final state of $Q_B$) or to a final column (a final column corresponds to a final state of $Q_C$) but not to both. The statement is equivalent to $S\in F_S$ if and only if $
%lud mod Sat
\mathrm{Sat}
(S)\in F_S$.  Let $S$ and $S'$ be two $n\times p$ tableaux. It suffices to prove that if $S\rightarrow S'$ then $S'\in F_S$ implies $S\in F_S$. One has to discuss on the position $(j,k)$  of the unique cell which is marked in $S'$ and unmarked in $S$.
\begin{enumerate}
\item If this cell is not final, then  all the final cells in $S$ are unchanged in $S'$. Hence, $S\in F_S$ since $S'\in F_S$.
\item If this cell  is final, on a final row, then  there exists $j'\neq j$ and $k'\neq k$ such that $(j',k)$, $(j,k')$ and $(j',k')$ are the positions of marked cells in $S$ and $S'$. If the cells on $(j',k)$ and $(j,k')$ are not final then the row $j'$ is not final and the column $k'$ is final. Hence the cell on $(j',k')$ is final.
\item If this cell is final on a final column, this case is treated symmetrically to the previous one.
%\item If this cell is in the last column and in the last row. Shen there exists $i<n$ and $j<p$ such that $(i,j), (i,p)$ and $(n,j)$ are marked in $S$ and $S'$. Hence, $S\in F_S$.
\end{enumerate}
 \cqfd
\end{proof}
  
As a direct consequence  of Lemma~\ref{lem pas separ couples}, we have:
  
  \begin{lemma}\label{lem ajou cell rect lang ok}
  Let $i$ be a state of  $Q_A$ and let $S$ and $S'$ be two tableaux such that $S\rightarrow S'$. Then the two states $(i,S)$ and $(i,S')$ of $D$ are equivalent.
     \end{lemma}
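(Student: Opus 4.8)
The plan is to show directly from the definition of state equivalence that for every word $w\in\Sigma^*$ the two states $(i,S)\cdot w$ and $(i,S')\cdot w$ have the same finality. Write $(i,S)\cdot w=(i\cdot w,\,S^{(w)})$ and $(i,S')\cdot w=(i\cdot w,\,S'^{(w)})$; both reads share the same $A$-component $i\cdot w$, since $A$ is deterministic and the rule adding $(q_0,r_0)$ depends only on the trajectory of $i$, which is common to both. A straightforward induction on the length of $w$, using that this $(q_0,r_0)$-insertion is performed at the same positions for $S$ and for $S'$, then gives $S'^{(w)}=S^{(w)}\cup(\{c\}\cdot w)$, where $c=(q_y,r_{x'})$ is the single cell with $S'=S\cup\{c\}$. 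Because $B$ and $C$ are deterministic, $\{c\}\cdot w=\{(q_y\cdot w,\,r_{x'}\cdot w)\}$ is a single cell, and in particular $S^{(w)}\subseteq S'^{(w)}$.

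From the description of $F_D$ for $\circ=\oplus$, finality of $(i,T)$ amounts to $T$ containing a cell $(q,r)$ with $(q\in F_B)\oplus(r\in F_C)$; call such a cell \emph{final}. The easy implication is then immediate: if $S^{(w)}$ has a final cell, so does $S'^{(w)}\supseteq S^{(w)}$, hence $(i,S)\cdot w\in F_D$ implies $(i,S')\cdot w\in F_D$.

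For the converse, suppose $(i,S')\cdot w\in F_D$ while $(i,S)\cdot w\notin F_D$. Since $S'^{(w)}=S^{(w)}\cup\{(q_y\cdot w,\,r_{x'}\cdot w)\}$ and $S^{(w)}$ has no final cell, any final cell of $S'^{(w)}$ must be the extra cell $(q_y\cdot w,\,r_{x'}\cdot w)$, so $(q_y\cdot w\in F_B)\oplus(r_{x'}\cdot w\in F_C)$. This is exactly the hypothesis of Lemma~\ref{lem pas separ couples} applied to $E=\{q_x,q_y\}\times\{r_{x'},r_{y'}\}$ and the cell $(q,r)=(q_y,r_{x'})$: it yields another corner $(q',r')\in E$, distinct from $(q_y,r_{x'})$, with $(q'\cdot w\in F_B)\oplus(r'\cdot w\in F_C)$. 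The three corners of $E$ other than $(q_y,r_{x'})$ are precisely the cells $(q_x,r_{x'})$, $(q_x,r_{y'})$, $(q_y,r_{y'})$ forming the right triangle that $c$ completes, and these are all marked in $S$. Hence $(q',r')\in S$, so its image $(q'\cdot w,\,r'\cdot w)$ lies in $S\cdot w\subseteq S^{(w)}$ and is final, contradicting $(i,S)\cdot w\notin F_D$. This establishes the remaining implication, and therefore $(i,S)\sim(i,S')$.

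The one point requiring care is the reduction carried out in the first paragraph: identifying that the symmetric difference between the two reachable tableaux is exactly the single orbit $\{c\}\cdot w$ of the added cell, so that the only way $(i,S')\cdot w$ can be final without $(i,S)\cdot w$ being final is through that orbit. Once this is pinned down, the heart of the argument is purely the combinatorial content of Lemma~\ref{lem pas separ couples}, which guarantees that the added cell can never be the \emph{sole} final cell. I expect the crux to be this last matching step — recognizing that the three other corners produced by Lemma~\ref{lem pas separ couples} coincide with the triangle already present in $S$ — and it is precisely what makes the statement a genuine consequence of Lemma~\ref{lem pas separ couples}.
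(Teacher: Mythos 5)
Your proof is correct and follows essentially the same route as the paper, which presents this lemma as a direct consequence of Lemma~\ref{lem pas separ couples}: the separation is reduced to the added cell being the sole final witness, and that lemma then produces another corner of the rectangle, necessarily among the three triangle cells already in $S$, yielding the contradiction. Your explicit induction showing that the two reached tableaux differ exactly by the orbit of the added cell (since the $(q_0,r_0)$-insertions follow the common $A$-trajectory) is precisely the detail the paper leaves implicit in its informal discussion preceding Lemma~\ref{lem pas separ couples}.
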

%  \begin{proof}
%    By definition, a right triangle is three checked cells $s_1$, $s_2$ and $s_3$ in $\{q_x,q_{y}\}\times \{r_{x'},r_{y'}\}$ for some integers $0\leq,x,y\leq n-1$ and $0\leq x',y'\leq p-1$.     
%    Let $s$ be the cell such that $s_1$, $s_2$, $s_3$ and $s$ form a rectangle.
%    Any word sending a couple $s'$ in $S$ different from $s$ in a final state of $E$ sends $(j,S)$ and $(j,S\cup\{s\})$ in a final state of $D$.
%    According to Lemma~\ref{lem pas separ couples}, sending the couple $s$ over a final state of $E$ cannot be performed without sending $s_1$, $s_2$ or $s_3$ over a final state of $E$.
%    Consequently, $(j,S)$ is equivalent to $(j,S\cup\{s\})$ in $D$.    
%    \cqfd    
%  \end{proof}    
  As a consequence,
 %   \begin{lemma}\label{prop tab eq tabl sans tri}
     any state $(i,S)$ in $D$ is equivalent to the state $(i,\mathrm{Sat}(S))$. It follows 
  %\end{lemma}

    \begin{lemma}\label{lem-3}
     For all $(i,S)\in Q_D$ we have  $(i,\mathrm{Sat}(S))\in [(i,S)]$.
  \end{lemma}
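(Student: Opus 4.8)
The goal is to show that for every state $(i,S)$ of $D$, the state $(i,\mathrm{Sat}(S))$ lies in the equivalence class $[(i,S)]$, i.e. that $(i,S)\sim(i,\mathrm{Sat}(S))$. The natural strategy is to leverage Lemma~\ref{lem ajou cell rect lang ok} together with the confluent rewriting process $\stararrow$ already set up in the excerpt. Recall that $\mathrm{Sat}(S)$ is by construction the maximal element reached from $S$ under $\stararrow$, so there is a finite chain $S=S_0\rightarrow S_1\rightarrow\cdots\rightarrow S_k=\mathrm{Sat}(S)$ of single-cell markings, each completing a right triangle into a rectangle.

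First I would fix $i\in Q_A$ and argue by induction on the length $k$ of this rewriting chain. The base case $k=0$ is trivial since then $S=\mathrm{Sat}(S)$ and the two states are literally identical. For the inductive step, I would apply Lemma~\ref{lem ajou cell rect lang ok} to the single step $S_{\ell}\rightarrow S_{\ell+1}$, which gives $(i,S_\ell)\sim(i,S_{\ell+1})$ directly; since $\sim$ is an equivalence relation (in particular transitive), composing these one-step equivalences along the whole chain yields $(i,S)\sim(i,\mathrm{Sat}(S))$, which is exactly $(i,\mathrm{Sat}(S))\in[(i,S)]$.

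There is essentially no hard obstacle here, since the serious work has already been done: Lemma~\ref{lem pas separ couples} supplies the combinatorial separation argument, and Lemma~\ref{lem ajou cell rect lang ok} packages it into a single-step equivalence statement. The only point requiring a moment of care is the observation, stated in the excerpt just before Lemma~\ref{lemma-distinct}, that $\stararrow$ is confluent and terminating, so that $\mathrm{Sat}(S)$ is indeed the unique maximal element reachable from $S$ and the rewriting chain exists and is finite. One should note that the equivalence $\sim$ is defined on states of a \emph{deterministic} automaton, and the states $(i,S_\ell)$ all share the same first component $i$, so each rewriting step only alters the second component $S_\ell$; this is precisely the situation to which Lemma~\ref{lem ajou cell rect lang ok} applies, and transitivity of $\sim$ closes the argument.
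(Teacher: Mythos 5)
Your proof is correct and follows exactly the paper's route: the paper derives this lemma as a direct consequence of Lemma~\ref{lem ajou cell rect lang ok}, which handles a single rewriting step $S\rightarrow S'$, and your induction along the finite chain $S\stararrow\mathrm{Sat}(S)$ together with transitivity of $\sim$ is precisely the (implicit) argument the authors intend.
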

    Let $\alpha_{n,p}$ be the number of saturated tableaux with $n$ lines and $p$ rows. 
 Furthermore, if $i$ is final in $A$, then $S$ contains $(q_0,r_0)$ and consequently so does $\mathrm{Sat}(S)$. 
 Let $\alpha'_{n,p}$ be the number of such tableaux where the cell $(q_0,r_0)$ is marked.  
  Consequently
  
  \begin{theorem}
    $\mathrm{sc}(M\cdot(N\oplus P)) \leq (m-1)\alpha_{n,p}+\alpha'_{n,p}$
  \end{theorem}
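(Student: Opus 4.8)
The plan is to read off the state complexity of $M(N\oplus P)$ from the automaton $D$, which already recognizes this language. Since the states of the minimal DFA are exactly the $\sim$-classes of accessible states of $D$, it suffices to produce an injection from the set of these classes into a set of cardinality $(m-1)\alpha_{n,p}+\alpha'_{n,p}$. The whole argument rests on Lemma~\ref{lem-3}, which lets us replace each state by a canonical saturated representative.

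First I would introduce the map $f\colon(i,S)\mapsto(i,\mathrm{Sat}(S))$ sending each state of $D$ to a state whose tableau is saturated. By Lemma~\ref{lem-3} we have $f(i,S)\sim(i,S)$. Hence, if we pick one representative in each $\sim$-class and apply $f$, two distinct classes yield two states lying in distinct classes (each image is equivalent to its preimage), and in particular two \emph{distinct} states. This gives an injection from the set of $\sim$-classes of accessible states into the set $\mathcal R$ of \emph{saturated states}, that is, states $(i,T)$ of $D$ with $T$ saturated. Thus $\mathrm{sc}(M\cdot(N\oplus P))\leq\#\mathcal R$, and it remains to count $\mathcal R$ under the constraints imposed by the construction of $D$.

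The count splits according to the finality of $i$ in $A$. If $i\notin F_A$, the tableau $T$ may be any saturated tableau, giving $\alpha_{n,p}$ choices. If $i\in F_A$, then every accessible state $(i,S)$ satisfies $(q_0,r_0)\in S$, whence $(q_0,r_0)\in\mathrm{Sat}(S)$; so here $T$ ranges only over the $\alpha'_{n,p}$ saturated tableaux marking the cell $(q_0,r_0)$. Writing $f_A=\#F_A$, this yields
\[
\#\mathcal R=(m-f_A)\,\alpha_{n,p}+f_A\,\alpha'_{n,p}.
\]

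Finally, since $M$ is recognized by a minimal DFA its automaton $A$ has at least one final state, so $f_A\geq 1$; moreover $\alpha'_{n,p}\leq\alpha_{n,p}$, the saturated tableaux marking $(q_0,r_0)$ forming a subset of all saturated tableaux. Therefore
\[
(m-f_A)\alpha_{n,p}+f_A\alpha'_{n,p}=(m-1)\alpha_{n,p}+\alpha'_{n,p}-(f_A-1)(\alpha_{n,p}-\alpha'_{n,p})\leq(m-1)\alpha_{n,p}+\alpha'_{n,p},
\]
which is the claimed bound. The genuinely delicate point is the injectivity step of the second paragraph: one must reason at the level of $\sim$-classes rather than individual states, exploiting that $f$ always lands on an equivalent state, so that the resulting estimate is correct irrespective of $f_A$; the final optimization over the number of final states of $A$, together with $\alpha'_{n,p}\le\alpha_{n,p}$, then closes the inequality.
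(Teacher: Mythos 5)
Your proof is correct and follows essentially the same route as the paper: both use Lemma~\ref{lem-3} to map each $\sim$-class of accessible states of $D$ to a saturated state, then bound the number of such states by splitting according to whether the $A$-component is final. You merely make explicit two points the paper leaves implicit, namely the injectivity of the class-to-saturated-state map and the optimization over the number $f_A$ of final states of $A$ via $\alpha'_{n,p}\leq\alpha_{n,p}$.
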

  \begin{proof}
By definition, $\mathrm{sc}(M\cdot(N\oplus P))$ is the number of states of the minimal AFD recognizing $M\cdot(N\oplus P)$, that is the number of states of the accessible part of $D\slash \sim$. Trivially, this automaton has less states than  $D\slash \sim$. From Lemma \ref{lem-3}, $D\slash \sim$ has less   than $(m-1)\alpha_{n,p}+\alpha'_{n,p}$ states.
%Let us denote the state complexity of $\cdot$ and $\oplus$ by $f(m,m')=(m-1)2^{m'}+2^{m'-1}$ and $g(n,p)=np$ respectively. From Claim~\ref{claim1}, we know that $\mathrm{sc}(M\cdot(N\oplus P))
%\leq f(m,g(n,p)) = (m-1)2^{np}+2^{np-1}$. From Proposition~\ref{prop tab eq tabl sans tri}, we can decrease this upper bound to obtain $\mathrm{sc}(M\cdot(N\oplus P)) \leq (m-1)\alpha_{n,p}+ \alpha'_{n,p}$.
\cqfd
\end{proof}
%  \begin{proof}
%    Let $k$ be the number of final states in $A$.
%    Then from Proposition~\ref{prop tab eq tabl sans tri}, 
%    $\mathrm{sc}(M\cdot(N\oplus P)) \leq (m-k)\alpha_{n,p}+k \alpha'_{n,p}$.    
%    Since by definition, $\alpha_{n,p}>\alpha'_{n,p}$, it holds that the bound is maximized for $k=1$ and thus
%    $\mathrm{sc}(M\cdot(N\oplus P)) \leq (m-1)\alpha_{n,p}+\alpha'_{n,p}$.
%    \cqfd
%  \end{proof}
  
  The next section is devoted to explicit the numbers $\alpha_{n,p}$ and $\alpha'_{n,p}$.
\section{Counting saturated tableaux}\label{se count tab}
%The results of this section are obtained by using extensively technicals issued of algebraic and enumerative combinatorics.
%This section is devoted to the enumeration of non-equivalent tableaux for the operation $\oplus$. 
According to  previous sections, %two tableaux $t_1$ and $t_2$ are undistinguishable if and only if one obtains the same tableau 
%by completing each right triangle in $t_1$ and in $t_2$ into a rectangle. 
the notion of undistinguishability induces an equivalence relation. Each equivalence class of  tableaux contains a unique saturated tableau.% with no right triangle, we will call it a \emph{saturated tableau}.

% We preliminarily recall in section \ref{FPS} how to use formal   series to enumerate combinatorial objects and in particula
We first recall some facts about  the classical example of the set partitions (Section \ref{Bell}). The three next sections are devoted to the enumeration of the saturated tableaux. We proceed as follows: We first give an algebraic interpretation of the saturated tableaux in terms of free monoids (Section \ref{monoids}). Hence, in Section \ref{CharSer}, we investigate some properties involving the characteristic series of saturated tableaux and we deduce, in Section \ref{ExpSer}, the desired enumeration. 
\subsection{Bell polynomials\label{Bell}}
Enumeration of saturated tableaux is closely related to set partitions. Well known results about  combinatorial objects and their generating series (Bell polynomials) are recalled here. Readers wishing to deepen their knowledge on set partitions should refer to \cite{Mansour,Mihoubi}. %Note also that  Mihoubi  collected in his PHD thesis \cite{Mihoubi} several classical results on multivariate Bell polynomials.\\ \\ 

Complete multivariate Bell polynomials $A_n(a_1,\ldots,a_n,\ldots)$, named in honor of Eric Temple Bell by Riordan \cite{Rior}, %which studied the Fa\`a di Bruno formula \cite{Faa} allowing to write the $n$th derivative of a composition $f\circ g$ in terms of the derivatives of $f$ and $g$, 
are multivariate polynomials defined by their exponential generating series:
\begin{equation}\label{BellComp}
\sum_{n\geq 0}A_n(a_1,\ldots,a_n,\ldots){t^n\over n!}=\exp\left(\sum_{m\geq 1}a_m{t^m\over m!}\right),
\end{equation}
equivalently $A_n(a_1,\ldots,a_n,\ldots)=\left.{d^n\over dt^n}\exp\left(\displaystyle\sum_{m\geq 1}a_m{t^m\over m!}\right)\right|_{t=0}$.

Bell polynomials are themselves multivariate ordinary generating functions of set partitions. Indeed, if $\pi=\{\pi_1,\dots,\pi_k\}$ is a set partition of $\{1,\ldots,n\}$ with $\#\pi_1\geq\cdots\geq\#\pi_k$, we denote by $\lambda_\pi=[\#\pi_1,\dots,\#\pi_k]$ the (integer) partition associated to $\pi$. We have
\[%\begin{array}{rcl}
A_n(a_1,\dots,a_n,\ldots)=\displaystyle\sum_{\pi\vDash n}a_{\#\pi_1}\dots a_{\#\pi_k}=\displaystyle\sum_{\lambda\vdash n}S_\lambda a_{\lambda_1}\cdots a_{\lambda_k}%\end{array}
\]
where % the second sum runs over the partitions $\lambda=[\lambda_1,\dots,\lambda_k]$ of $n$ 
% A remettre dans les prelim (\emph{i.e.} $\lambda_1\geq\cdots\geq\lambda_k\geq 1$,\ $k\geq 1$, $\lambda_1+\cdots+\lambda_k=n$ )
 %and 
$S_\lambda$ denotes the number of set partitions $\pi$ such that $\lambda_\pi=\lambda$.
\begin{example}
\rm For instance, we have
$
A_4(a_1,a_2,a_3,a_4)=a_1^4+6a_2a_1^2+3a_2^2+4a_3a_1+a_4,
$
since the partitions of $\{1,\cdots,4\}$ are 
%\begin{itemize}
%\item[$\bullet$] $\lambda_\pi=[1,1,1,1]$: $\{\{1\},\{2\},\{3\},\{4\}\}$
%\item[$\bullet$] $\lambda_\pi=[2,1,1]$: $\{\{1,2\},\{3\},\{4\}\},\ \{\{1,3\},\{2\},\{4\}\},\ 
%\{\{1,4\},\{2\},\{4\}\},$\\ $\{\{2,3\},\{1\},\{4\}\},\
%\{\{2,4\},\{1\},\{3\}\},\ \{\{3,4\},\{1\},\{2\}\},$
%\item[$\bullet$] $\lambda_\pi=[2,2]$: $\{\{1,2\},\{3,4\}\}, \{\{1,4\},\{2,3\}\}, \{\{1,3\},\{2,4\}\}$
%\item[$\bullet$] $\lambda_\pi=[3,1]$: $\{\{1,2,3\},\{4\}\}, \{\{1,2,4\},\{3\}\}, \{\{1,3,4\},\{2\}\}, \{\{2,3,4\},\{1\}\}$
%\item[$\bullet$] $\lambda_\pi=[4]$: $\{\{1,2,3,4\}\}$.  
%\end{itemize}

$$\begin{array}{|c|c|}
\hline
\lambda _\pi&\mbox{set partitions}\\
\hline
[1,1,1,1]&\{\{1\},\{2\},\{3\},\{4\}\}\\
\hline
[2,1,1]&\{\{1,2\},\{3\},\{4\}\},\ \{\{1,3\},\{2\},\{4\}\},\\ 
&\{\{1,4\},\{2\},\{4\}\},\{\{2,3\},\{1\},\{4\}\},\\
&\{\{2,4\},\{1\},\{3\}\},\ \{\{3,4\},\{1\},\{2\}\}\\
\hline
[2,2]&\{\{1,2\},\{3,4\}\}, \{\{1,4\},\{2,3\}\},\\
& \{\{1,3\},\{2,4\}\}\\
\hline
[3,1]&\{\{1,2,3\},\{4\}\}, \{\{1,2,4\},\{3\}\},\\
& \{\{1,3,4\},\{2\}\}, \{\{2,3,4\},\{1\}\}\\
\hline
[4]&\{\{1,2,3,4\}\}\\
\hline
\end{array}$$

\end{example}

When each $a_i$ is an integer, one interprets it as the number of ways one can color a part of cardinal $i$ (or equivalently, each $a_i$ represents the weight of each part of cardinal $i$). If each $a_i$ equals $1$ then the complete Bell polynomial $A_n(a_1,\dots,a_n,\ldots)$ specializes to the Bell number $B_n$ which counts the number of set partitions of $\{1,\ldots,n\}$ \emph{i.e.} 
%\begin{equation}\label{A2Bn}
$A_n(1,\ldots,1,\ldots)=B_n.$
%\end{equation}
Hence, the (exponential) generating series of Bell numbers is
%\begin{equation}\label{BellNumberSer}
$\sum_{n\geq 0}B_n{t^n\over n!}=\exp(\exp(t)-1)$
%\end{equation}
 and the first Bell numbers are
$
1, 1, 2, 5, 15, 52, 203, 877, 4140, 21147, 115975,\dots$
(see sequence A000110 in \cite{Sloane}).
From the previous remark, the number of set partitions of $\{1,\ldots,n\}$ with no part of size $1$ equals
$A_n(0,1,\ldots,1,\ldots)$ and the associated generating function is
\begin{equation}\label{SerSansPart1}
\sum_{n\geq 0}A_n(0,1,\ldots,1,\ldots){t^n\over n!}=\exp(\exp(t)-1-t).
\end{equation}

The first values of $A_n(0,1,\dots,1)$ are $1, 0, 1, 1, 4, 11, 41, 162, 715, 3425, 17722, 98253,\dots$
(see sequence A000296 in \cite{Sloane}).
Noticing that 
$
\exp(t)\sum_{n\geq 0}A_n(0,1,\ldots,1,\ldots){t^n\over n!}=\sum_{n\geq 0}B_n{t^n\over n!},
$
we deduce (see  (\ref{binomialtransform})) that the Bell numbers sequence is the binomial transform of the sequence $(A_n(0,1,\dots,1))_{n\in \mathbb{N}}$ \emph{i.e.}
\begin{equation}\label{A2B}
B_n=\sum_{i=0}^n\binom ni A_i(0,1,\ldots,1,\ldots).
\end{equation}
%equivalently the sequence $(A_n(0,1,\ldots,1,\ldots))_n$ is the inverse binomial transform of the Bell numbers
%\begin{equation}\label{B2A}
%A_n(0,1,\dots,1,\ldots)=\sum_{i=0}^n(-1)^{n-i}\binom ni B_i.
%\end{equation}
%
%Notice that from $(1+{d\over dx})\exp(\exp(t)-1-t)=\exp(\exp(t)-1)$ we deduce the simpler relation $B_n=
%A_n(0,1,\ldots,1, \ldots)+A_{n+1}(0,1,\ldots,1, \ldots)$.\\

When each part have the same weight $x$, the exponential generating function of $B_n(x)=A_n(x,\dots,x)$ is
\begin{equation}\label{Bnalpha}
\sum_{n\geq 0}B_n(x){t^n\over n!}=\exp(x(\exp(t)-1)).
\end{equation}

Notice that each $B_n(x)$ is a degree $n$ polynomial in $x$ also called  Bell polynomial in literature. Its expansion gives
\begin{equation}\label{Bell2Stirling}
B_n(x)=\sum_{k=1}^nS_{n,k}x^k,
\end{equation}
where $S_{n,k}$ is a Stirling number of second kind counting the number of ways of partitioning the set $\{1,\dots,n\}$ into $k$ blocks (see sequence A008277 in \cite{Sloane}). We recover the Bell numbers setting $x=1$, that is $B_n=B_n(1)$. The numbers $r_n=B_n(-1)$ are the Rao Uppuluri-Carpenter numbers \cite{Rao}
%% V. R. Rao Uppuluri and J. A. Carpenter, Numbers generated by the function exp(1-e^x), Fib. Quart. 7 (1969), 437-448.
whose first terms are $1, -1, 0, 1, 1, -2, -9, -9, 50, 267, 413, -2180, -17731, -50533,\dots$ (see sequence A000587 in \cite{Sloane}). From  (\ref{Bnalpha}) the generating function of the $r_n$'s is
\begin{equation}\label{RaoSer}
\sum_{n\geq 0}r_n{t^n\over n!}=\exp(1-\exp(t)).
\end{equation}

Equality (\ref{Bell2Stirling}) gives $r_n=\sum_{k=1}^n(-1)^kS_{n,k}$  and provides the combinatorial interpretation of the Rao Uppuluri-Carpenter numbers: $r_n$ is the number of set partitions of $\{1,\dots,n\}$ with an even number of parts minus the number of such partitions with an odd number of parts.
\subsection{Monoids of tableaux \label{monoids}}

We consider the set $\mathcal T_n$ of $n$ rows tableaux. It contains a special tableau  $\epsilon_n$ with no column
%\[
%\mathcal T_n:=\{(T_{i,j})_{1\leq i\leq n\atop 1\leq j\leq m}:m\in\mathbb N, T_{ij}\in\mathbb B\}.
%\]
and we endow it with a  product $\cdot$ such that $T\cdot T'$ is the tableau obtained by gluying $T'$ to the right of $T$. Notice that $T\cdot\epsilon_n=\epsilon_n\cdot T=T$. So, since the product is clearly associative, the set $\mathcal T_n$ is a monoid and more precisely
\begin{lemma}\label{MonCal}
Let $\mathbb C_n=\{c_S\mid S\subset\{1,\dots,n\}\}$ be an alphabet indexed by the subset of $\{1,\dots,n\}$.
The monoid $\mathcal T_n$ is isomorphic to $\mathbb C_n^*$.
\end{lemma}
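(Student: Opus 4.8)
The plan is to exhibit an explicit isomorphism by observing that an $n$-row tableau is nothing more than a finite horizontal sequence of columns, and that a single column is completely determined by the subset of the $n$ rows in which its cell is marked. This identification of columns with subsets of $\{1,\dots,n\}$ is exactly what the alphabet $\mathbb{C}_n$ encodes, so a tableau should correspond to a word whose $i$-th letter records the $i$-th column.

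Concretely, first I would define the map $\phi\colon\mathcal{T}_n\to\mathbb{C}_n^*$ as follows: given a tableau $T$ with columns $S_1,\dots,S_k$ read from left to right, where each $S_i\subset\{1,\dots,n\}$ is the set of marked rows of the $i$-th column, set $\phi(T)=c_{S_1}\cdots c_{S_k}$, and set $\phi(\epsilon_n)=\varepsilon$ for the empty tableau. This is well defined since each column of an $n$-row tableau determines a unique subset of $\{1,\dots,n\}$, hence a unique letter of $\mathbb{C}_n$.

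Next I would check that $\phi$ is a bijection by describing its inverse: to a word $c_{S_1}\cdots c_{S_k}\in\mathbb{C}_n^*$ one associates the $n$-row tableau whose $i$-th column has its cell marked exactly on the rows in $S_i$. Reading off the columns of a tableau and re-assembling a tableau from a sequence of columns are mutually inverse operations, so $\phi$ is a bijection between $\mathcal{T}_n$ and $\mathbb{C}_n^*$. Finally, to see that $\phi$ is a monoid morphism, I would note that gluing $T'$ to the right of $T$ concatenates their column sequences; hence if $\phi(T)=c_{S_1}\cdots c_{S_k}$ and $\phi(T')=c_{S'_1}\cdots c_{S'_\ell}$, then $\phi(T\cdot T')=c_{S_1}\cdots c_{S_k}c_{S'_1}\cdots c_{S'_\ell}=\phi(T)\phi(T')$, while $\phi$ sends the identity $\epsilon_n$ to the identity $\varepsilon$.

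There is no genuine obstacle here: the statement is essentially the observation that a tableau \emph{is} a word of columns. The only point deserving a word of care is the compatibility of the two monoid structures, namely that the horizontal gluing product on $\mathcal{T}_n$ matches word concatenation in $\mathbb{C}_n^*$ and that $\epsilon_n$ (the column-free tableau) plays the role of the empty word; both follow immediately from the definition of $\cdot$ recalled just before the lemma.
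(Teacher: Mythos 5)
Your proof is correct and follows exactly the same approach as the paper, which defines the map $T\mapsto c_{S_1(T)}\cdots c_{S_p(T)}$ sending each column to the letter indexed by its set of marked rows and declares it straightforwardly an isomorphism. You merely spell out the details (the inverse map and the compatibility of gluing with concatenation) that the paper leaves implicit.
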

\begin{proof}
Let $T\in\mathcal T_{n}$ be a tableau with $m$ columns. For any $1\leq j\leq  m$ we define $S_j(T)$ as the set of indices $i$ such that the $i$th cell of the column $j$ is marked. Straightforwardly the map $T\rightarrow c_{S_1(T)}\cdots c_{S_p(T)}$ is an isomorphism of monoid. $\cqfd$
\end{proof}
\begin{example}
Consider the following tableau and its associated word:
%$$\begin{array}{|c|c|c|c|c|}
%\hline      \times& & \ \ \ & \times&
%\\\hline
%                 & & &       &\\
%\hline            &\times&  &  &\times\\
%\hline         \times&  &   &  &\times\\\hline 
%\end{array}\rightarrow c_{\{1,4\}}c_{\{3\}}c_{\emptyset}c_{\{1\}}c_{\{3,4\}}$$

\begin{center} 
     \begin{tikzpicture}[scale=0.5]   
	    \foreach \x in {1,...,5} {
	      \foreach \y in {1,...,4} {
	        \pgfmathparse{\x+1} \let\z\pgfmathresult
	        \pgfmathparse{\y+1} \let\t\pgfmathresult
	        \draw[fill=gray!40] (\x,\y) rectangle (\x+1,\y+1);
	      }
	    }  
	    \foreach \y/\x in {4/1,4/4,2/2,2/5,1/1,1/5} {
	        \pgfmathparse{\x+1} \let\z\pgfmathresult
	        \pgfmathparse{\y+1} \let\t\pgfmathresult
	        \draw[fill=white] (\x,\y) rectangle (\x+1,\y+1);
	        \draw (\x,\y) -- (\z,\t);
	        \draw (\z,\y) -- (\x,\t);
	    }
     \end{tikzpicture}
\\    $c_{\{1,4\}}c_{\{3\}}c_{\emptyset}c_{\{1\}}c_{\{3,4\}}$
\end{center}
\end{example}

Consider the subset $\mathbb T_n$ of $\mathbb C_n^*$ consisting of all the words representing saturated tableaux with $n$ rows. 

For each  $S\subset_{\vDash}n$,  we define the mono\"id    $M_S=\left(\{c_\emptyset\}\cup\{c_s|s\in S\}\right)^*$. The following result characterizes the saturated tableaux.

\begin{lemma}
%Let $S\subset_{\vDash}\{1,\dots,n\}$ and $M_S=\left(\{c_\emptyset\}\cup\{c_s|s\in S\}\right)^*$.\\
A word belongs to $\mathbb T_n$ if and only if it belongs to a monoid $M_S$ for some $S$.
\end{lemma}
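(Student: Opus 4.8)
The plan is to recast saturation as a simple pairwise property of the columns and then read off membership in some $M_S$ almost directly. Via the isomorphism of Lemma~\ref{MonCal} I would write a word as $w=c_{C_1}\cdots c_{C_m}$, where $C_t\subseteq\{1,\dots,n\}$ is the set of rows marked in the $t$-th column of the associated tableau. The first and main step is to prove the reformulation: $w$ represents a saturated tableau if and only if for every pair of positions $t\neq t'$ the sets $C_t$ and $C_{t'}$ are either equal or disjoint. Everything else follows formally from this translation.

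For the forward direction of the reformulation I would argue by contraposition. Suppose $C_t$ and $C_{t'}$ are neither equal nor disjoint. Then I can choose $j\in C_t\cap C_{t'}$ and, since the two sets differ, an element $j'$ lying in exactly one of them, say $j'\in C_t\setminus C_{t'}$ (the remaining case being symmetric). The three cells $(j,t)$, $(j',t)$, $(j,t')$ are marked while $(j',t')$ is not, which is exactly a right triangle; hence the tableau is not saturated. For the converse I would assume all pairs of columns are equal-or-disjoint and rule out a right triangle by inspecting which of the four corners $(j,t),(j',t),(j,t'),(j',t')$ carries the missing mark. In each of the four cases the three marked corners force some common row to lie in both $C_t$ and $C_{t'}$, so these two columns meet and are therefore equal; but then the supposedly unmarked corner is marked as well, a contradiction. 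This short case analysis is the only real content of the lemma, and it is routine; I expect it to be the sole (mild) obstacle.

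Finally I would deduce the statement. If $w\in M_S$ for some $S\subset_{\vDash}n$, then every $C_t$ is either $\emptyset$ or a block of $S$; as the blocks of $S$ are pairwise disjoint, any two columns are equal or disjoint, so by the reformulation $w$ is saturated and $w\in\mathbb T_n$. Conversely, if $w\in\mathbb T_n$, the reformulation shows that the distinct nonempty columns of $w$ are pairwise disjoint nonempty subsets of $\{1,\dots,n\}$; completing this family by the uncovered elements yields a set partition, so the family itself is a partial partition $S\subset_{\vDash}n$. Each letter of $w$ is then $c_\emptyset$ or $c_s$ with $s\in S$, whence $w\in M_S$ by the definition of $M_S$. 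The empty word $\epsilon_n$ lies in every $M_S$ and is vacuously saturated, so it needs no separate treatment.
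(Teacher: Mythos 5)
Your proof is correct and follows essentially the same route as the paper: both reduce the lemma to the observation that the columns of a saturated tableau are pairwise equal or disjoint, so that the distinct nonempty columns form a partial set partition $S$ with $w\in M_S$. The only cosmetic differences are that you phrase the key step by contraposition (exhibiting a right triangle) where the paper argues directly via saturation, and you spell out the converse direction that the paper dismisses as straightforward.
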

\begin{proof}
Let $w=c_{s_1}\cdots c_{s_p}$ be a word of $\mathbb T_n$. Then $s_i\cup s_j=\emptyset$ or $s_i=s_j$ for each $i\neq j$. Indeed, suppose that $k\in s_i\cap s_j$ and $\ell\in s_i$ such that $\ell\neq k$. Hence, if $T$ denotes the tableau represented by $w$ then  the cells  $(k,i),\ (k,j)$ and $(\ell,i)$ are marked. Since $T$ is a saturated tableau, the cell $(\ell,j)$ is also marked. It follows that $S_w=\{s_i|1\leq i\leq p\}\subset_{\vDash}\{1,\dots,n\}$ and $w\in M_{S_w}$.\\
Conversely, if $w\in M_S$ for some $S\subset_{\vDash}\{1,\dots,n\}$ then $w$ straightforwardly represents a saturated tableau. $\cqfd$
\end{proof}
Equivalently,
\begin{equation}\label{T=cupM}
\mathbb T_n=\bigcup_{S\subset_{\vDash}n}M_S.
\end{equation}
Notice that this union is not disjoint since $M_{S\cap S'}= M_S\cap M_{S'}$.
\subsection{Characteristic series of tableaux\label{CharSer}}
We consider the set $\widetilde M_S=\{w\in M_S|w\not\in M_{S'}\mbox{ for any }S'\subsetneq S\}$.
\begin{example}
For $n=2$ we have
\begin{itemize}
\item[$\bullet$] $\widetilde M_\emptyset=c_\emptyset^*$
\item[$\bullet$] $\widetilde M_{\{\{1\}\}}=M_{\{\{1\}\}}\setminus \widetilde M_\emptyset=M_{\{\{1\}\}}c_{\{1\}}M_{\{\{1\}\}}$
\item[$\bullet$] $\widetilde M_{\{\{2\}\}}=M_{\{\{2\}\}}\setminus \widetilde M_\emptyset=M_{\{\{2\}\}}c_{\{2\}}M_{\{\{2\}\}}$
\item[$\bullet$] $\widetilde M_{\{\{1,2\}\}}=M_{\{\{1,2\}\}}\setminus \widetilde M_\emptyset=M_{\{\{1,2\}\}}c_{\{1,2\}}M_{\{\{1,2\}\}}$
\item[$\bullet$] $\widetilde M_{\{\{1\},\{2\}\}}=M_{\{\{1\},\{2\}\}}\setminus 
\left(\widetilde M_\emptyset \cup 
\widetilde M_{\{\{1\}\}} \cup\widetilde M_{\{\{2\}\}}\right)$
%\\=M_{\{\{1\},\{2\}\}}c_{\{1\}}M_{\{\{1\},\{2\}\}}c_{\{2\}}M_{\{\{1\},\{2\}\}}\\&&\cup M_{\{\{1\},\{2\}\}}c_{\{2\}}M_{\{\{1\},\{2\}\}}c_{\{1\}}M_{\{\{1\},\{2\}\}}$
\end{itemize}
\end{example}

So we can rewrite (\ref{T=cupM}) as a disjoint union $\displaystyle{\mathbb T_n=\bigcup_{S\subset_{\vDash}n}\widetilde M_S}$.
The disjoint union implies that the following identity on the characteristic series holds:
\begin{equation}\label{T=sumtildM}
\underline{\mathbb T_n}=\sum_{S\subset_{\vDash}n}\underline{\widetilde M_S}.
\end{equation} 
\begin{example}
We have:
$\underline{\mathbb T_2}=\underline{\widetilde M_\emptyset}+\underline{\widetilde M_{\{\{1\}\}}}
+\underline{\widetilde M_{\{\{2\}\}}}+\underline{\widetilde M_{\{\{1,2\}\}}}+\underline{\widetilde M_{\{\{1\},\{2\}\}}}.$
In the following table we summarize the first terms of each $\underline{\widetilde M_S}$:
\begin{equation}\label{ex4}
\begin{array}{l}
\underline{\widetilde M_\emptyset}=\varepsilon+c_\emptyset+c_\emptyset^2+\dots\\
\underline{\widetilde M_{\{\{1\}\}}}=c_{\{1\}}+c_{\emptyset}c_{\{1\}}+
c_{\{1\}}c_{\emptyset}+c_{\{1\}}^2+\dots\\
\underline{\widetilde M_{\{\{2\}\}}}=c_{\{2\}}+c_{\emptyset}c_{\{2\}}+
c_{\{2\}}c_{\emptyset}+c_{\{2\}}^2+\dots\\
\underline{\widetilde M_{\{\{1,2\}\}}}=c_{\{1,2\}}+c_{\emptyset}c_{\{1,2\}}+
c_{\{1,2\}}c_{\emptyset}+c_{\{1,2\}}^2+\dots\\
\underline{\widetilde M_{\{\{1\},\{2\}\}}}=c_{\{1\}}c_{\{2\}}+
c_{\{2\}}c_{\{1\}}+c_{\{1\}}^2c_{\{2\}}+c_{\{1\}}c_{\{2\}}c_{\{1\}}+c_{\{1\}}c_{\{2\}}^2+c_{\{2\}}c_{\{1\}}^2+\dots
\end{array}
\end{equation}
Hence,
\[\begin{array}{ll}
\underline{\mathbb T_2} = & \varepsilon+c_\emptyset+c_{\{1\}}+c_{\{2\}}+c_{\{1,2\}}+c_\emptyset^2+c_{\{1\}}^2+c_{\emptyset}c_{\{1\}}+c_{\{2\}}^2+c_{\emptyset}c_{\{2\}}+c_{\{2\}}c_{\emptyset}+\\
& c_{\{1,2\}}^2+c_{\emptyset}c_{\{1,2\}}+
c_{\{1,2\}}c_{\emptyset}+c_{\{1\}}c_{\{2\}}+
c_{\{2\}}c_{\{1\}}+\dots.\end{array}
\]
This corresponds to the formal sum of the following saturated tableaux
%\[\begin{array}{l}
%[],\\ \\ \begin{array}{|c|}\hline \ \ \ \\\hline \ \ \ \\\hline\end{array},
%\begin{array}{|c|}\hline \times \\\hline\ \ \ \\\hline\end{array},
%\begin{array}{|c|}\hline \ \ \ \\\hline\times\\\hline\end{array},
%\begin{array}{|c|}\hline \times\\\hline \times \\\hline\end{array},\\ \\
%\begin{array}{|c|c|}\hline \ \ \ &\ \ \  \\\hline \ \ \ &\ \ \ \\\hline\end{array},
%\begin{array}{|c|c|}\hline \times &\ \ \  \\\hline \ \ \ &\ \ \ \\\hline\end{array},
%\begin{array}{|c|c|}\hline \ \ \  &\times  \\\hline \ \ \ &\ \ \ \\\hline\end{array},
%\begin{array}{|c|c|}\hline \times  &\times  \\\hline \ \ \ &\ \ \ \\\hline\end{array},%

%\begin{array}{|c|c|}\hline \ \ \ &\ \ \ \\\hline \times &\ \ \  \\\hline \end{array},
%\begin{array}{|c|c|}\hline \ \ \ &\ \ \ \\\hline\ \ \  &\times  \\\hline \end{array},
%\begin{array}{|c|c|}\hline \ \ \ &\ \ \ \\\hline \times  &\times  \\\hline\end{array},

%\begin{array}{|c|c|}\hline \times &\ \ \ \\\hline \times &\ \ \  \\\hline \end{array},
%\begin{array}{|c|c|}\hline \ \ \ &\times \\\hline\ \ \  &\times  \\\hline \end{array},
%\begin{array}{|c|c|}\hline \times &\times \\\hline \times  &\times  \\\hline\end{array},

%\begin{array}{|c|c|}\hline \ \ \ &\times \\\hline \times &\ \ \  \\\hline \end{array},
%\begin{array}{|c|c|}\hline \times &\ \ \ \\\hline\ \ \  &\times  \\\hline \end{array}.

%\end{array}
%\]
%\noindent$\begin{array}{|l|}\hline\\\hline\\\hline\end{array}$\\
%{\footnotesize $\varepsilon$}
\noindent
{
 \begin{tikzpicture}[scale=0.3]   
	    \foreach \x in {3,7,11,15} {
	      \foreach \y in {1,2} {
	        \pgfmathparse{\x+1} \let\z\pgfmathresult
	        \pgfmathparse{\y+1} \let\t\pgfmathresult
	        \draw[fill=gray!40] (\x,\y) rectangle (\x+1,\y+1);
	      }
	    }  
	    \foreach \y/\x in {2/7,1/11,1/15,2/15} {
	        \pgfmathparse{\x+1} \let\z\pgfmathresult
	        \pgfmathparse{\y+1} \let\t\pgfmathresult
	        \draw[fill=white] (\x,\y) rectangle (\x+1,\y+1);
	        \draw (\x,\y) -- (\z,\t);
	        \draw (\z,\y) -- (\x,\t);
	    }
	    \node[] at (0,0) {$\varepsilon$};
\node[] at (3.7,0) {$c_\emptyset$};
\node[] at (7.7,0) {$c_{\{1\}}$};
\node[] at (11.7,0) {$c_{\{2\}}$};
\node[] at (15.7,0) {$c_{\{1,2\}}$};
     \end{tikzpicture}
}\ \\
{
 \begin{tikzpicture}[scale=0.3]   
	    \foreach \x in {1,2,5,6,9,10,13,14,17,18,21,22,25,26,29,30,33,34,37,38, 41,42,45,46} {
	      \foreach \y in {1,2} {
	        \pgfmathparse{\x+1} \let\z\pgfmathresult
	        \pgfmathparse{\y+1} \let\t\pgfmathresult
	        \draw[fill=gray!40] (\x,\y) rectangle (\x+1,\y+1);
	      }
	    }  
	    \foreach \y/\x in {2/5,2/10,2/13,2/14,1/17,1/22,1/25,1/26,1/30,2/30,1/33,2/33, 1/37,2/37,1/38,2/38, 2/41,1/42,1/45,2/46} {
	        \pgfmathparse{\x+1} \let\z\pgfmathresult
	        \pgfmathparse{\y+1} \let\t\pgfmathresult
	        \draw[fill=white] (\x,\y) rectangle (\x+1,\y+1);
	        \draw (\x,\y) -- (\z,\t);
	        \draw (\z,\y) -- (\x,\t);
	    }
\node[] at (2,0) {$c_{\emptyset}^2$};
\node[] at (6,0) {$c_{\{1\}}c_{\emptyset}$};
\node[] at (10.2,0) {$c_{\emptyset}c_{\{1\}}$};
\node[] at (14.1,0) {$c_{\{1\}}^2$};
\node[] at (18.2,0) {$c_{\{2\}}c_{\emptyset}$};
\node[] at (22.3,0) {$c_{\emptyset}c_{\{2\}}$};
\node[] at (26.1,0) {$c_{\{2\}}^2$};
\node[] at (30.2,0) {$c_{\emptyset}c_{\{1,2\}}$};
\node[] at (34.2,0) {$c_{\{1,2\}}c_{\emptyset}$};
\node[] at (38.2,0) {$c_{\{1,2\}}^2$};
\node[] at (42.2,0) {$c_{\{1\}}c_{\{2\}}$};
\node[] at (46.3,0) {$c_{\{2\}}c_{\{1\}}$};
     \end{tikzpicture}
}

\end{example}

Notice also that each $w\in M_S$ belongs to one and only one $\widetilde M_{S'}$ with $S'\subset S$. This gives
\begin{equation}\label{M2tildM}
\underline{M_S}=\sum_{S'\subset S}\underline{\widetilde M_{S'}}.
\end{equation}
\begin{example} For $S=\{\{1\},\{2\}\}$ equality (\ref{M2tildM}) is
%\[
$\underline{M_{\{\{1\},\{2\}\}}}=\underline{\widetilde M_{\emptyset}}
+\underline{\widetilde M_{\{\{1\}\}}}
+\underline{\widetilde M_{\{\{2\}\}}}
+\underline{\widetilde M_{\{\{1\},\{2\}\}}}
%\]
$,
and from (\ref{ex4}), we recover:\\
$\underline{ M_{\{\{1\},\{2\}\}}}=\varepsilon+c_\emptyset+c_{\{1\}}+c_{\{2\}}+
c_\emptyset^2+c_\emptyset c_{\{1\}}+c_\emptyset c_{\{2\}}+
c_{\{1\}}c_\emptyset+c_{\{1\}}^2+c_{\{1\}}c_{\{2\}}+
c_{\{2\}}c_\emptyset+c_{\{2\}}c_{\{1\}}+c_{\{2\}}^2+\cdots$
\end{example}

We define for each $E\subset \{1,\dots,n\}$ the set
\[
\mathbb T_E=\bigcup_{S\subset_{\vDash}E}M_S=\bigcup_{S\subset_{\vDash}E}\widetilde{M_S}.
\]
As for the set $\mathbb T_n$, the last union is disjoint. So we have
\begin{equation}\label{TE=sumtildM}
\underline{\mathbb T_E}=\sum_{S \subset_{\vDash}E}\underline{\widetilde M_S}.
\end{equation} 
We have the following equality for the characteristic series:
\begin{proposition}
\begin{equation}\label{M2TE}
\sum_{S\vDash n}\underline{M_S}=\sum_{E\subseteq \{1,\dots,n\}}A_{n-\#E}(0,1,\ldots,1,\ldots)
\underline{\mathbb T_E}
\end{equation}
\end{proposition}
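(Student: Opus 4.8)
The plan is to expand both sides of~(\ref{M2TE}) over the family $\{\underline{\widetilde M_{S'}}\}_{S'\subset_{\vDash}n}$ and compare coefficients. Since $\mathbb T_n=\bigcup_{S'\subset_{\vDash}n}\widetilde M_{S'}$ is a \emph{disjoint} union, these characteristic series are supported on pairwise disjoint sets of words and are therefore linearly independent; hence it suffices to check that, for every fixed $S'\subset_{\vDash}n$, the coefficient of $\underline{\widetilde M_{S'}}$ agrees on both sides. Throughout I write $E_{S'}=\bigcup_{s\in S'}s$ for the set of indices covered by the blocks of $S'$, and $m=n-\#E_{S'}$.

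For the left-hand side, I would inject~(\ref{M2tildM}) into each summand to obtain
\[
\sum_{S\vDash n}\underline{M_S}=\sum_{S\vDash n}\ \sum_{S'\subset S}\underline{\widetilde M_{S'}},
\]
and then exchange the two summations. The coefficient of a given $\underline{\widetilde M_{S'}}$ is the number of set partitions $S\vDash n$ whose block set contains $S'$. Any such $S$ is obtained by keeping the blocks of $S'$, which cover exactly $E_{S'}$, and freely partitioning the remaining $m$ indices; no collision with the blocks of $S'$ can occur, since the new blocks lie in the complement of $E_{S'}$. This coefficient is therefore the Bell number $B_m$.

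For the right-hand side, I would use~(\ref{TE=sumtildM}) to write $\underline{\mathbb T_E}=\sum_{S'\subset_{\vDash}E}\underline{\widetilde M_{S'}}$ and again exchange summations. A term $\underline{\widetilde M_{S'}}$ contributes exactly for those $E$ with $S'\subset_{\vDash}E$, that is for $E_{S'}\subseteq E\subseteq\{1,\dots,n\}$, so its coefficient is $\sum_{E_{S'}\subseteq E}A_{n-\#E}(0,1,\dots,1,\dots)$. Writing $E=E_{S'}\uplus F$ with $F$ ranging over the subsets of the $m$-element set $\{1,\dots,n\}\setminus E_{S'}$, grouping the terms by $\#F$, and reindexing by $j=m-\#F$, this coefficient becomes $\sum_{j=0}^{m}\binom{m}{j}A_{j}(0,1,\dots,1,\dots)$, which by the binomial transform~(\ref{A2B}) equals $B_m$ as well.

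Both coefficients being $B_m=B_{n-\#E_{S'}}$ for every $S'\subset_{\vDash}n$, identity~(\ref{M2TE}) follows by linear independence. I expect the right-hand computation to be the delicate step: one must recognize the sum over all supersets $E$ of $E_{S'}$ as the binomial convolution of the singleton-free partition numbers $A_j(0,1,\dots,1,\dots)$ and then invoke~(\ref{A2B}) to collapse it to a Bell number. The left-hand count, by contrast, rests on the elementary observation that extending $S'$ to a full set partition of $\{1,\dots,n\}$ amounts to an arbitrary partition of the $m$ uncovered indices.
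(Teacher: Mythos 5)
Your proof is correct and follows essentially the same route as the paper: both expand the two sides of~(\ref{M2TE}) over the disjoint family $\{\underline{\widetilde M_{S'}}\}_{S'\subset_{\vDash}n}$, identify the left-hand coefficient as a Bell number by counting the set partitions of $\{1,\dots,n\}$ whose blocks contain $S'$, and reduce the right-hand coefficient to a binomial convolution of the numbers $A_j(0,1,\ldots,1,\ldots)$ collapsed via~(\ref{A2B}). The only (cosmetic) difference is direction: the paper introduces unknown coefficients $a_i$, derives that the Bell numbers are their binomial transform, and inverts to get $a_i=A_i(0,1,\ldots,1,\ldots)$, whereas you substitute the claimed coefficients and verify the match directly.
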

\begin{proof}
We define
%\[
$V_n=\sum_{S\vDash n}\underline{M_S}$,
%\]
$\mathrm{support}(S)=\bigcup_{E\in S}E$ for any set of sets $S$ and $\#\#S=\#\mathrm{support}(S)$.
When expanding $V_n$ on the $\underline{\widetilde M_S}$, we obtain:
\[%\begin{array}{rcl}
V_n=\displaystyle\sum_{S\vDash n}\sum_{S'\subset S}\underline{\widetilde M_{S'}}
=\displaystyle\sum_{S'\subset_{\vDash}n}\#\{S\vDash n \mid S'\subset S\}\underline{\widetilde M_{S'}}
%\end{array}
\]

Observing that $\#\{S \vDash n\mid S'\subset S\}$ equals the number of set partitions of $\{1,\dots,n\}\setminus\mathrm{support}(S')$ we obtain
\begin{equation}\label{V2tildM}
V_n=\sum_{S\subset_{\vDash}n}B_{n-\#\#S}\underline{\widetilde M_{S}}.
\end{equation}

On the other hand, we define $(a_n)_{n\in \mathbb{N}}$ such that 
$
V_n=\sum_{E\subset \{1,\dots,n\}}a_{n-\#E}\underline{\mathbb{T}_E}.
$

Expanding this equality on the $\underline{\widetilde M_S}$, we find
\[%\begin{array}{rcl}
V_n=\displaystyle\sum_{E\subset \{1,\dots,n\}}a_{n-\#E}\sum_{S\subset_{\vDash}E}\underline{\widetilde M_S}
%=\displaystyle\sum_{E\subset \{1,\dots,n\}}\sum_{S\vDash E}\left(\sum_{E\subset E'}a_{n-\#E'}\right)\underline{\widetilde M_S}.
=\displaystyle\sum_{S\subset_{\vDash}n}\left(\displaystyle\sum_{\mathrm{support}(S)\subset E\subset \{1,\dots,n\}}a_{n-\#E}\right)\underline{\widetilde M_S}
%\end{array}
\]

Observing that 
$%\[%\begin{array}{rcl}
%\displaystyle\sum_{E\subset E'}a_{n-\#E'}=\displaystyle\sum_{i=\#E}^n\left(\sum_{E\subset E',\#E'=n-i}\right)a_{n-i}
%=\displaystyle\sum_{i=\#E}^n\left(n-\#E\atop n-i\right)a_{n-i}
%\end{array}
\displaystyle\sum_{\mathrm{support}(S)\subset E\subset \{1,\dots,n\}}a_{n-\#E}=\displaystyle\sum_{i=0}^{n-\#\#S}
\binom{n-\#\#S}{i}a_{n-\#\#S-i}\
$, %\]
we obtain
\begin{equation}\label{V2T}
V_n=\sum_{S\subset_{\vDash}n}\left(\sum_{i=0}^{n-\#\#S}
\binom{n-\#\#S}{n-i}a_{n-\#\#S-i}\right)\underline{\widetilde M_S}.
\end{equation}

Comparing the coefficient of $\underline{\widetilde M_S}$ in (\ref{V2tildM}) and (\ref{V2T}), we find
$
B_k=\sum_{i=0}^k\binom kia_i.
$
That is:  Bell numbers are the binomial transform of the $a_i$'s. From (\ref{A2B}) we deduce $a_i=A_i(0,1,\dots,1,\ldots)$ and our statement.
$\cqfd$\end{proof}
\begin{example}
For $n=2$,
on the left hand we have
\[
\underline{M_{\{\{1\},\{2\}\}}}+\underline{M_{\{\{1,2\}\}}}=
2\underline{\widetilde M_\emptyset}+\underline{\widetilde M_{\{\{1\}\}}}
+\underline{\widetilde M_{\{\{2\}\}}}+\underline{\widetilde M_{\{\{1\},\{2\}\}}}
+\underline{\widetilde M_{\{\{1,2\}\}}}\]
and on the right hand side we have
\[\begin{array}{l}
A_2(0,1,\dots,1,\ldots)\underline{T_\emptyset}+A_1(0,1,\dots,1,\ldots)\left(\underline{T_{\{1\}}}+\underline{T_{\{2\}}}\right)+A_0(0,1,\dots,1,\ldots)\underline{T_{\{1,2\}}}=\\
\underline{T_\emptyset}+\underline{T_{\{1,2\}}}=
2\underline{\widetilde M_\emptyset}+\underline{\widetilde M_{\{\{1\}\}}}
+\underline{\widetilde M_{\{\{2\}\}}}+\underline{\widetilde M_{\{\{1\},\{2\}\}}}
+\underline{\widetilde M_{\{\{1,2\}\}}}.
\end{array}
\]
\end{example}
\subsection{Generating series of saturated tableaux\label{ExpSer}}
Let us first enumerate the elements of $M_S$ with respect to the number of columns (length of the word) and the number of marked cells in the asociated tableaux. Noting that the characteristic series of $M_S$ is given by
\begin{equation}\label{charserM_S}
\underline{M_S}=\sum_{w\in M_S}w={1\over 1-(c_\emptyset+\sum_{E\in S}c_E)},
\end{equation}
we obtain the bivariate generating series sending each letter $c_E$ to the monomial $xt^{\#E}$
\[
m_S(x,t)=\sum_{i,j}\kappa_{i,j}^Sx^it^j,
\]
where $\kappa_{i,j}^S$ denotes the number of words $w=c_{E_1}\dots c_{E_i}$ in $M_S$ of length $i$ and weight $\#E_1+\cdots+\#E_i=j$. Since the map sending each letter $c_E$ to $xt^{\#E}$ is a morphism from $\mathbb Q\langle\langle \mathbb C_S\rangle\rangle$ to 
$\mathbb Q[[x,t]]$, the right hand side of (\ref{charserM_S}) gives
\begin{equation}\label{defgenserM_S}
m_S(x,t)=\frac1{1-(1+P_S(t))x}
\end{equation}
where $P_S(t)=\sum_{E\in S}t^{\#E}$.
\begin{example}
\rm Let $S=\{\{1,4\},\{2\},\{5\}\}$. We have
\[\begin{array}{rcl}
\underline{M_{\{\{1,4\},\{2\},\{5\}\}}}&=&\frac1{1-(c_\emptyset+c_{\{1,4\}}+c_{\{2\}}+c_{\{5\}})}\\
&=&\epsilon+(c_\emptyset+c_{\{1,4\}}+c_{\{2\}}+c_{\{5\}})+(c_\emptyset+c_{\{1,4\}}+c_{\{2\}}+c_{\{5\}})^2+\cdots\\
&=&\epsilon+c_\emptyset+c_{\{1,4\}}+c_{\{2\}}+c_{\{5\}}
+c_\emptyset^2+c_\emptyset c_{\{1,4\}}\\
&&+c_\emptyset c_{\{2\}}+c_\emptyset c_{\{5\}}+
c_{\{1,4\}}c_\emptyset+c_{\{1,4\}}^2+c_{\{1,4\}}c_{\{2\}}+c_{\{1,4\}}c_{\{5\}}\\
&&+c_{\{2\}}c_\emptyset+c_{\{2\}}c_{\{1,4\}}+c_{\{2\}}^2+c_{\{2\}}c_{\{5\}}\\
&&+c_{\{5\}}c_\emptyset+c_{\{5\}}c_{\{1,4\}}+c_{\{5\}}c_{\{2\}}+c_{\{5\}}^2+\cdots
\end{array}\]

Furthermore $P_S(t)=2t+t^2$, so we have
\[
m_{\{\{1,4\},\{2\},\{5\}\}}(x,t)=\frac1{1-(1+2t+t^2)x}=1+(1+2t+t^2)x+(1+4t+6t^2+4t^3+t^4)x^2+\cdots
\]
For instance, we deduce $\kappa_{2,2}^{\{\{1,4\},\{2\},\{5\}\}}=6$ corresponding to the $6$ words: $c_\emptyset c_{\{1,4\}}$, $c_{\{1,4\}}c_\emptyset$, $c_{\{2\}}^2$, $c_{\{2\}}c_{\{5\}}$, $c_{\{5\}}c_{\{2\}}$ and $c_{\{5\}}^2$.
\end{example}

In fact, the series $m_S(x,t)$ depends only on the size of the sets of $S$. Let $\lambda=[\lambda_1,\dots,\lambda_k]$ be a  partition, we define $P_\lambda(t)=\sum_{i=1}^kt^{\lambda_i}$ and $m_\lambda(t)=\frac1{1-(1+P_\lambda(t))x}$. If $S=\{E_1,\dots,E_k\}$ with $\#E_1\geq\cdots\geq \#E_k$, we have
\begin{equation}\label{mS2mlambda}
m_S(x,t)=m_{\lambda_S}(x,t)
\end{equation}
with $\lambda_S=[\#E_1,\dots,\#E_k]$.\\
Denote by $\mathrm{Tab}_{n}(x,t)$ the generating function of the saturated tableaux with $n$ rows
\[
\mathrm{Tab}_n(x,t)=\sum_{m,j}\alpha_{j; n,p}x^pt^j,
\]
where $\alpha_{j;n,p}$ denotes the number of saturated tableaux with $n$ rows, $p$ columns and  $j$ marked cells. So, the coefficient of $x^p$ in $\mathrm{Tab}_n(x,t)$ is a polynomial $\alpha_{n,p}(t)$ which is the generating series of the $n\times p$ saturated tableaux counted by the number of marked cells.

The series $\mathrm{Tab}_{n}(x,t)$ is the image of $\underline{\mathbb T_n}$ by the morphism $c_E\rightarrow xt^{\#E}$.  Notice that the image of $\underline{\mathbb T_E}$ is also $\mathrm{Tab}_{\#E}(x,t)$.

Hence Equation (\ref{M2TE}) becomes
\begin{equation}\label{m2tab1}
\sum_{S \vDash n}m_{S}(x,t)=\sum_{E\subset\{1,\dots,n\}}A_{n-\#E}(0,1,\dots,1)\mathrm{Tab}_{\#E}(x,t).
\end{equation}

The number of set partitions $S$ of $\{1,\dots,n\}$ such that $\lambda_S=\lambda$ equals $n!\over \lambda!$ with $\lambda!=(\prod_i(\lambda_i!))(\prod_i(\mathrm{mult}_i(\lambda)!))$ where $\mathrm{mult}_i(\lambda)$ denotes the multiplicities of the part $i$ in $\lambda$.
\begin{example}\rm
Consider $\lambda=[2,1,1,1]$. The number of set partitions of $S$ such that $\lambda_S=\lambda$ equals ${5!\over (2!1!^3)(1!3!)}=10$. The corresponding partitions are:
$$\begin{array}{cc}
\{\{1,2\},\{3\},\{4\},\{5\}\}\hspace*{2cm}&\{\{1,3\},\{2\},\{4\},\{5\}\}\\
\{\{1,4\},\{2\},\{3\},\{5\}\}\hspace*{2cm}&\{\{1,5\},\{2\},\{3\},\{4\}\}\\
\{\{2,3\},\{1\},\{4\},\{5\}\}\hspace*{2cm}&\{\{2,4\},\{1\},\{3\},\{5\}\}\\
\{\{2,5\},\{1\},\{3\},\{4\}\}\hspace*{2cm}&\{\{3,4\},\{1\},\{2\},\{5\}\}\\
\{\{3,5\},\{1\},\{2\},\{4\}\}\hspace*{2cm}&\{\{4,5\},\{1\},\{2\},\{3\}\}.
\end{array}$$
\end{example}

Furthermore, the number of subset of $\{1,\dots,n\}$ with a fixed size $i$ is equal to $\binom ni$. Hence from (\ref{m2tab1}) we obtain
\begin{equation}\label{m2tab2}
\sum_{\lambda\vdash n}{n!\over\lambda!}m_\lambda(x,t)=\sum_{i=0}^n\binom niA_{i}(0,1,\dots,1)\mathrm{Tab}_{n-i}(x,t).
\end{equation}
%with $\omega([\lambda_1,\dots,\lambda_k])=\lambda_1+\cdots+\lambda_j$.
On the right hand of this equality, we recognize the binomial convolution of the sequence $(A_{i}(0,1,\dots,1))_i$ with the sequence $({\mathrm{Tab}}_{n-i}(x,t))_i$ (see Equality (\ref{binconv})). Interpreting (\ref{m2tab2}) in terms of exponential generating functions we set
\[
m(x,t,z)=\sum_{n=0}^\infty\left(\sum_{\lambda\vdash n}{1\over\lambda!}m_\lambda(x,t)\right)z^n
\] 
and
\[
\mathrm{Tab}(x,t,z)=\sum_{n=0}^\infty\mathrm{Tab}_n(x,t){z^n\over n!}
\]
Hence, using   (\ref{SerSansPart1}), Equality (\ref{m2tab2}) becomes
\begin{equation}\label{m2tab3}
m(x,t,z)=\exp(\exp(z)-z-1)\mathrm{Tab}(x,t,z).
\end{equation}
So,
\begin{equation}\label{tab2m}
\mathrm{Tab}(x,t,z)=m(x,t,z)\exp(-\exp(z)+z+1)=-m(x,t,z){d\over dz}\exp(1-\exp(z)).
\end{equation}
We recognize the exponential generating function of the Rao Uppuluri-Carpenter numbers (\ref{RaoSer}). As the coefficient of $x^pz^n$ in the left hand-side and the right hand-side of (\ref{tab2m}) are equal, we obtain
\begin{theorem}\label{alphatrao}
The generating function of the $n\times m$ saturated tableaux counted by numbers of marked cells is
\begin{equation}\label{alphatraoeq}
\alpha_{n,p}(t)=-n!\sum_{i=0}^n{r_{n-i+1}\over (n-i)!}\sum_{\lambda\vdash i}{(1+P_\lambda(t))^p\over \lambda!}
\end{equation}
\end{theorem}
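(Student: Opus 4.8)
The plan is to obtain \eqref{alphatraoeq} purely by reading off the coefficient of $x^pz^n$ from the two sides of the identity \eqref{tab2m} already established above, namely $\mathrm{Tab}(x,t,z)=m(x,t,z)\exp(-\exp(z)+z+1)$. No new combinatorial construction is required: the entire argument is a bookkeeping of coefficients, combining the series \eqref{tab2m}, the expansion \eqref{defgenserM_S} of $m_\lambda$, and the generating function \eqref{RaoSer} of the Rao Uppuluri-Carpenter numbers.

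First I would rewrite the exponential factor in terms of the $r_n$. Since
\[
\frac{d}{dz}\exp(1-\exp(z))=-\exp(z)\exp(1-\exp(z))=-\exp(-\exp(z)+z+1),
\]
differentiating the series \eqref{RaoSer} term by term gives
\[
\exp(-\exp(z)+z+1)=-\frac{d}{dz}\sum_{n\geq 0}r_n\frac{z^n}{n!}=-\sum_{n\geq 0}r_{n+1}\frac{z^n}{n!}.
\]
The index shift producing $r_{n+1}$ rather than $r_n$ is precisely what gives rise to the $r_{n-i+1}$ appearing in the statement.

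Next I would expand the right-hand side of \eqref{tab2m} as a Cauchy product in $z$. By definition $m(x,t,z)=\sum_{k\geq 0}\bigl(\sum_{\lambda\vdash k}\frac{1}{\lambda!}m_\lambda(x,t)\bigr)z^k$, and by \eqref{defgenserM_S} the coefficient of $x^p$ in $m_\lambda(x,t)=\frac{1}{1-(1+P_\lambda(t))x}$ is $(1+P_\lambda(t))^p$. Multiplying this series by $-\sum_{j\geq 0}r_{j+1}\frac{z^j}{j!}$ and collecting the coefficient of $x^pz^n$ yields
\[
-\sum_{i=0}^n\frac{r_{n-i+1}}{(n-i)!}\sum_{\lambda\vdash i}\frac{(1+P_\lambda(t))^p}{\lambda!}.
\]
On the left-hand side, the definitions $\mathrm{Tab}(x,t,z)=\sum_n\mathrm{Tab}_n(x,t)\frac{z^n}{n!}$ and $\mathrm{Tab}_n(x,t)=\sum_p\alpha_{n,p}(t)x^p$ show that the coefficient of $x^pz^n$ in $\mathrm{Tab}(x,t,z)$ equals $\frac{\alpha_{n,p}(t)}{n!}$. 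Equating the two coefficients and multiplying by $n!$ gives exactly \eqref{alphatraoeq}.

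The argument is essentially forced once \eqref{tab2m} is in hand, so I do not expect a conceptual obstacle; the only point demanding care is the mixing of conventions, since $m(x,t,z)$ is written with ordinary powers $z^k$ whereas both $\mathrm{Tab}$ and the exponential factor carry an $\frac{1}{n!}$. Keeping track of which factorials belong to which factor is the sole real difficulty, and it is settled by the single index shift $r_n\mapsto r_{n+1}$ recorded above, which supplies the $\frac{1}{(n-i)!}$ weight attached to $r_{n-i+1}$ while the overall $n!$ comes from the \emph{exponential} normalisation of $\mathrm{Tab}$.
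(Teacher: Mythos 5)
Your proposal is correct and follows essentially the paper's own route: the paper likewise proves the theorem by recognizing $\exp(-\exp(z)+z+1)=-\frac{d}{dz}\exp(1-\exp(z))$ as the (index-shifted) exponential generating function of the Rao Uppuluri-Carpenter numbers from (\ref{RaoSer}) and then equating the coefficients of $x^pz^n$ on the two sides of (\ref{tab2m}). The only difference is that you spell out the Cauchy-product and normalisation bookkeeping (the $z^k$ versus $z^n/n!$ conventions) that the paper leaves implicit, and that bookkeeping is carried out correctly.
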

%Notice that, by definition, $\alpha_{n,m}(t)=\alpha_{m,n}(t)$ which is not obvious from (\ref{alphatraoeq}). Hence,
%\begin{equation}\label{Sym}n!\sum_{i=0}^n{r_{n-i+1}\over (n-i)!}\sum_{\lambda\vdash i}{(1+P_\lambda(t))^m\over \lambda!}=m!\sum_{i=0}^m{r_{m-i+1}\over (m-i)!}\sum_{\lambda\vdash i}{(1+P_\lambda(t))^n\over \lambda!}.\end{equation}
\begin{example}
\rm Let the first values of the polynomial $\alpha_{n,p}(t)$:
\begin{itemize}
\item[$\bullet$]$\alpha_{1,p}(t)=(1+t)^p$,
\item[$\bullet$]$\alpha_{2,p}(t)=(1+2t)^p+(1+t^2)^p-1$, 
\item[$\bullet$]$\alpha_{3,p}(t)=(3t+1)^p+3(t+t^2+1)^p-3(1+t)^p+(t^3+
1)^p-1$, 
\item[$\bullet$]$\alpha_{4,p}(t)=2+(1+4t)^p+6(1+2t+t^2)^p+3(1+2t^2)^p+4(1+t+t^3)^p+(1+t^4)^p-4(1
+t)^p-6(1+2t)^p-6(1+t^2)^p$,
\item[$\bullet$]$\alpha_{5,p}(t)= 9+10(1+t)^p+(5t+1)^p+10(3t+t^2+1)^p-10(3t
+1)^p+15(t+2t^2+1)^p-30(t+t^2+1)^p+10(2t+t^3+1)^p-10(1+2t)^p+10(t^2+t^
3+1)^p-10(1+t^2)^p-10(t^3+1)^p+5(t+t^4+1)^p+(t^5+1)^p$
\end{itemize}
For instance, the coefficient of $t^9$ in $\alpha_{4,3}$ is $4$. This corresponds to the tableaux
%\[
%\begin{array}{|c|c|c|}
%\hline \times&\times&\times\\
%\hline \times&\times&\times\\
%\hline \times& \times &\times  \\
%\hline \ \ \ &\ \ \ &\ \ \ \\
%\hline
%\end{array},\ 
%\begin{array}{|c|c|c|}
%\hline \times&\times&\times\\
%\hline \ \ \ &\ \ \ &\ \ \ \\
%\hline \times&\times&\times\\
%\hline \times &\times &\times \\
%\hline
%\end{array},\ 
%\begin{array}{|c|c|c|}
%\hline \times&\times&\times\\
%\hline \times &\times &\times \\
%\hline \ \ \ &\ \ \ &\ \ \ \\
%\hline \times&\times&\times\\
%\hline
%\end{array}
%\mbox{ and } 
%\begin{array}{|c|c|c|}
%\hline \ \ \ &\ \ \ &\ \ \ \\
%\hline \times&\times&\times\\
%\hline \times&\times&\times\\
%\hline \times &\times &\times \\
%\hline
%\end{array}.\ 
%\]
\begin{center}
 \begin{tikzpicture}[scale=0.5]   
	    \foreach \x in {1,2,3,5,6,7,9,10,11,13,14,15} {
	      \foreach \y in {1,2,3,4} {
	        \pgfmathparse{\x+1} \let\z\pgfmathresult
	        \pgfmathparse{\y+1} \let\t\pgfmathresult
	        \draw[fill=gray!40] (\x,\y) rectangle (\x+1,\y+1);
	      }
	    }  
	    \foreach \y/\x in {4/1,4/2,4/3,3/1,3/2,3/3,2/1,2/2,2/3,
4/5,4/6,4/7,3/5,3/6,3/7,1/5,1/6,1/7,
4/9,4/11,4/10,1/9,1/11,1/10,2/9,2/11,2/10,
1/13,1/14,1/15,3/13,3/14,3/15,2/13,2/14,2/15} {
	        \pgfmathparse{\x+1} \let\z\pgfmathresult
	        \pgfmathparse{\y+1} \let\t\pgfmathresult
	        \draw[fill=white] (\x,\y) rectangle (\x+1,\y+1);
	        \draw (\x,\y) -- (\z,\t);
	        \draw (\z,\y) -- (\x,\t);
	    }
     \end{tikzpicture}
\end{center}

We illustrate the fact that  $\alpha_{n,p}(t)=\alpha_{p,n}(t)$ by remarking
\[\begin{array}{l}
\alpha_{3,4}(t)=(3t+1)^4+3(t+t^2+1)^4-3(1+t)^4+(t^3+
1)^4-1=\\
\alpha_{4,3}(t)=2+(1+4t)^3+6(1+2t+t^2)^3+3(1+2t^2)^3\\+4(1+t+t^3)^3+(1+t^4)^3-4(1
+t)^3-6(1+2t)^3-6(1+t^2)^3=\\
{t}^{12}+4\,{t}^{9}+3\,{t}^{8}+12\,{t}^{7}+36\,{t}^{6}+48\,{t}^{5}+135
\,{t}^{4}+148\,{t}^{3}+66\,{t}^{2}+12\,t+1\end{array}
\]
\end{example}
Setting $t=1$  in $\alpha_{n,m}(t)$ we obtain the number of $n\times m$ saturated tableaux.
\begin{corollary}
The number of $n\times m$ saturated tableaux is
\begin{equation}\label{alphanm}
\alpha_{n,m}=\alpha_{n,m}(1)=-n!\sum_{k=0}^n\sum_{\omega(\lambda)<=n\atop \#\lambda=
k}{r_{n-\omega(\lambda)+1}\over \lambda!(n-\omega(\lambda))!}(k+1)^m,
\end{equation}
where $\omega(\lambda)=k$ if $\lambda\vdash k$.
\end{corollary}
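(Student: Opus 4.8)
The plan is to obtain (\ref{alphanm}) simply by specializing Theorem~\ref{alphatrao} at $t=1$ and then reindexing the resulting double sum; no new combinatorial input is needed. First I would justify the equality $\alpha_{n,m}=\alpha_{n,m}(1)$. By construction $\alpha_{n,m}(t)=\sum_j\alpha_{j;n,m}t^j$, where $\alpha_{j;n,m}$ counts the $n\times m$ saturated tableaux having exactly $j$ marked cells; hence $\alpha_{n,m}(1)=\sum_j\alpha_{j;n,m}$ is the total number of $n\times m$ saturated tableaux, which is precisely $\alpha_{n,m}$.

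Next I would compute the effect of the substitution $t=1$ on the right-hand side of (\ref{alphatraoeq}). The only dependence on $t$ there occurs through $P_\lambda(t)=\sum_{i=1}^{\#\lambda}t^{\lambda_i}$. Setting $t=1$ collapses every monomial $t^{\lambda_i}$ to $1$, so that $P_\lambda(1)=\#\lambda$ and therefore $(1+P_\lambda(1))^m=(1+\#\lambda)^m$. Writing $k=\#\lambda$, this factor becomes $(k+1)^m$, and I would emphasize that it depends on $\lambda$ only through its number of parts, which is exactly what makes the reorganization into the form (\ref{alphanm}) possible.

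Finally I would reindex. In (\ref{alphatraoeq}) the outer index $i$ runs from $0$ to $n$ while the inner sum runs over all $\lambda\vdash i$; since $\omega(\lambda)=i$ is by definition the integer partitioned by $\lambda$, a pair (outer $i$, inner $\lambda$) carries exactly the same information as a single partition $\lambda$ subject to $\omega(\lambda)\leq n$. I would then sort these partitions by their number of parts $k=\#\lambda$, observing that a partition of an integer $\leq n$ has at most $n$ parts, so $k$ ranges from $0$ to $n$ (the value $k=0$ being the empty partition of $0$, for which $\lambda!=1$ and $(k+1)^m=1$). Under this bookkeeping $r_{n-i+1}=r_{n-\omega(\lambda)+1}$ and $(n-i)!=(n-\omega(\lambda))!$, so the summand reads $\frac{r_{n-\omega(\lambda)+1}}{\lambda!\,(n-\omega(\lambda))!}(k+1)^m$, which is exactly the expression in (\ref{alphanm}).

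The derivation is routine, so I do not expect a genuine obstacle; the only step requiring a little care is the reindexing in the last paragraph, namely verifying that summing first over $k$ and then over the partitions $\lambda$ with $\#\lambda=k$ and $\omega(\lambda)\leq n$ enumerates each partition occurring in (\ref{alphatraoeq}) once and only once, and confirming that the upper bound $k\leq n$ is the correct one. Once this is checked, (\ref{alphanm}) follows immediately from Theorem~\ref{alphatrao}.
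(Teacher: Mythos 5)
Your proposal is correct and takes essentially the same route as the paper: the paper obtains the corollary precisely by setting $t=1$ in Theorem~\ref{alphatrao}, using $P_\lambda(1)=\#\lambda$ so that the summand depends on $\lambda$ only through $k=\#\lambda$, and regrouping the pairs (outer index $i$, partition $\lambda\vdash i$) as a single sum over partitions with $\omega(\lambda)\leq n$ sorted by number of parts. Your explicit verification of the reindexing and of the bound $k\leq n$ merely spells out what the paper leaves implicit in the one-line remark preceding the corollary.
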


\begin{example}
Let the first value of $\alpha_{n,m}$ below:
\[\begin{array}{l@{\ \ }l@{\ \ }l@{\ \ }l@{\ \ }l@{\ \ }l@{\ \ }l@{\ \ }l}
1\\
1&1\\
1&4&12\\
1&8&34&128\\
1&16&96&466&2100\\
1&32&274&1688& 9226& 48032\\
1& 64& 792& 6154& 40356& 245554& 1444212\\
1& 128& 2314& 22688& 177466& 1251128& 8380114& 54763088
\end{array}
\]
\end{example}
\begin{proposition}
The  number of tableaux having a specific marked cell in a $n\times m$ saturated tableau equals
\[
\alpha'_{n,p}=\frac1{np}\left.{d\over dt}\alpha_{n,p}(t)\right|_{t=1}
=-{(n-1)!}\sum_{i=0}^n{r_{n-i+1}\over (n-i)!}\sum_{\lambda\vdash i}{P'_\lambda(1)(1+P_\lambda(1))^{p-1}\over \lambda!}
\]
\end{proposition}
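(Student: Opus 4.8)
The plan is to prove the two equalities in turn. The first, $\alpha'_{n,p}=\frac{1}{np}\left.\frac{d}{dt}\alpha_{n,p}(t)\right|_{t=1}$, is combinatorial and relies on a symmetry argument. By Theorem~\ref{alphatrao} we may write $\alpha_{n,p}(t)=\sum_j\alpha_{j;n,p}\,t^j$, where $\alpha_{j;n,p}$ is the number of saturated $n\times p$ tableaux with exactly $j$ marked cells. Differentiating and setting $t=1$ gives $\left.\frac{d}{dt}\alpha_{n,p}(t)\right|_{t=1}=\sum_j j\,\alpha_{j;n,p}$, which I would read as the total number of incident pairs consisting of a saturated tableau together with one of its marked cells; equivalently, $\sum_S\#\{\text{marked cells of }S\}$, the sum ranging over all saturated $n\times p$ tableaux $S$.

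First I would argue that this incidence count can be reorganized cell by cell. The defining condition of saturation, namely containing no right triangle, is invariant under any permutation of the rows and any permutation of the columns, because such a permutation sends rectangles to rectangles and right triangles to right triangles. Hence $S_n\times S_p$ acts on the set of saturated $n\times p$ tableaux, and since this group acts transitively on cell positions, for any two cells there is a permutation carrying one to the other and inducing a bijection between the saturated tableaux marked at the first cell and those marked at the second. Therefore every cell is marked in the same number of saturated tableaux, which by definition is $\alpha'_{n,p}$ (the count for the distinguished cell $(q_0,r_0)$). Summing over the $np$ cell positions yields $\sum_S\#\{\text{marked cells of }S\}=np\,\alpha'_{n,p}$, and comparing with the previous expression establishes the first equality.

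The second equality is then obtained by differentiating the closed form of Theorem~\ref{alphatrao} directly. Applying $\frac{d}{dt}$ termwise to $\alpha_{n,p}(t)=-n!\sum_{i=0}^n\frac{r_{n-i+1}}{(n-i)!}\sum_{\lambda\vdash i}\frac{(1+P_\lambda(t))^p}{\lambda!}$, the chain rule replaces each $(1+P_\lambda(t))^p$ by $p\,P'_\lambda(t)(1+P_\lambda(t))^{p-1}$. Evaluating at $t=1$ and dividing by $np$ cancels the factor $p$ and collapses $\frac{n!}{n}$ into $(n-1)!$, giving precisely the claimed expression.

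I expect the main obstacle to be the symmetry step: one must check carefully that the saturation property is genuinely preserved under row and column permutations, so that the per-cell count is independent of the chosen cell and the averaging identity is legitimate. Once this is in place, both the reorganization of the incidence sum and the subsequent differentiation are routine.
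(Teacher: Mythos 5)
Your proposal is correct and follows essentially the same route as the paper: both prove the first equality by double-counting pairs (saturated tableau, marked cell), identifying $\left.\frac{d}{dt}\alpha_{n,p}(t)\right|_{t=1}$ with the total number of marked cells and invoking the fact that every cell is marked in the same number $\alpha'_{n,p}$ of saturated tableaux, then obtain the second equality by routine differentiation of the closed form in Theorem~\ref{alphatrao}. The only difference is that you explicitly justify the symmetry step via the $S_n\times S_p$ action preserving saturation and acting transitively on cells, where the paper merely asserts that ``each cell has the same role''---a welcome, if minor, sharpening.
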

\begin{proof}
Let $\alpha_{n,p}^{(i,j)}$ be the number of $n\times p$ triangle free tableaux such that the cell $(i,j)$ is marked. Each cell having the same role, we have $\alpha_{n,p}^{(i,j)}=\alpha_{n,p}^{(0,0)}=\alpha'_{n,p}$. Hence $$\alpha'_{n,p}=\frac{1}{np}\displaystyle\sum_{i=0}^{n-1}\sum_{j=0}^{p-1}\alpha_{n,p}^{(i,j)}=\frac{1}{np}\sum_T\mathcal{M}(T)$$ where the last sum runs over the $n\times p$ triangle free tableaux $T$ and $\mathcal{M}(T)$ denotes the number of marked cells in $T$. Notice that $\sum_T\mathcal{M}(T)$ is the total number of marked cells in all the $n\times p$ triangle free tableaux.
Acting by $t{d\over dt}$ multiplies the coefficient of $t^j$ in $\alpha_{n,m}(t)$ by $j$ so that
$\left.t{d\over dt}\alpha_{n,p}(t)\right|_{t=1}=\left.{d\over dt}\alpha_{n,p}(t)\right|_{t=1}$ is the number of marked cells in all the $n\times p$ tableaux. We deduce $\alpha'_{n,p}=\frac1{np}\left.{d\over dt}\alpha_{n,p}(t)\right|_{t=1}$.\cqfd

\end{proof}
\begin{example}
We list below the values of $\alpha'_{n,p}$ for $n=1, \dots, 6$:
\begin{itemize}
\item[$\bullet$]$\alpha'_{1,p}={2}^{p-1}$,
\item[$\bullet$]$\alpha'_{2,p}={3}^{p-1}+{2}^{p-1}$,
\item[$\bullet$]$\alpha'_{3,p}={4}^{p-1}+3\cdot{3}^{p-1}$,
\item[$\bullet$]$\alpha'_{4,p}={5}^{p-1}+6\cdot{4}^{p-1}+4\cdot{3}^{p-1}-3\cdot{2}^{p-1}$,
\item[$\bullet$]$\alpha'_{5,p}=-7\cdot{2}^{p-1}+{6}^{p-1}+10\cdot{5}^{p-1}+19\cdot{4}^{p-1}-7\cdot{3}^{p-1}$,
\item[$\bullet$]$\alpha'_{6,p}=-59\cdot{3}^{-1+p}+20\cdot{4}^{p-1}+55\cdot{5}^{p-1}+{7}^{p-1}+ 15\cdot{6}^{p-1}$.
\end{itemize}
For instance, consider the saturated $2\times 2$ tableaux:
\begin{center}
\begin{tikzpicture}[scale=0.3]   
	    \foreach \x in {1,2,4,5,7,8,10,11,13,14,16,17,19,20,22,23,25,26,28,29,31,32,34,35} {
	      \foreach \y in {1,2} {
	        \pgfmathparse{\x+1} \let\z\pgfmathresult
	        \pgfmathparse{\y+1} \let\t\pgfmathresult
	        \draw[fill=gray!40] (\x,\y) rectangle (\x+1,\y+1);
	      }
	    }  
	    \foreach \y/\x in {2/4,2/8,2/10,2/11,1/13,1/17,1/19,1/20,1/23,2/23,2/25,1/25,1/28,2/28,1/29,2/29,2/31,1/32,1/34,2/35} {
	        \pgfmathparse{\x+1} \let\z\pgfmathresult
	        \pgfmathparse{\y+1} \let\t\pgfmathresult
	        \draw[fill=white] (\x,\y) rectangle (\x+1,\y+1);
	        \draw (\x,\y) -- (\z,\t);
	        \draw (\z,\y) -- (\x,\t);
	    }
\node[] at (2,0) {0};\node[] at (29,0) {4};
\foreach \x in {5,8,14,17}{\node[] at (\x,0) {1};}
\foreach \x in {11,20,23,26,32,35}{\node[] at (\x,0) {2};}
     \end{tikzpicture}
\end{center}
We find $\alpha_{2,2}'=\frac14(1\cdot 0+4\cdot 1+6\cdot 2+1\cdot 4)=5=3^1+2^1$ as expected by the formula.
\end{example}
%\begin{proposition}
%The number $\alpha^\times_{n,m}$ is  the number of $n\times m$ saturated tableaux having a fixed marked cell.
%\end{proposition}

\section{Witness}\label{sec wit}
%\textcolor{red}{In this section we consider implicitly ordered sets of states.}
In this section, we use the Brzozowski automata $A=W_m(a,c,b,d)$, $B=W_n(a,b,c,d)$ and $C=W_p(d,b,c,a)$ (with $m,n,p\geq 3$) mentioned in Section~\ref{background}, page~\pageref{Brzo-def} and recognizing three languages $M$, $N$ and $P$. They are represented in Figure \ref{AFD_A} where the red arrows suggest how to connect the automaton $A$ to the automata $B$ and $C$ to recognize $M\cdot(N\oplus P)$. 
 \begin{figure}[H]
	\centerline{
		\begin{tikzpicture}[node distance=1.5cm, bend angle=25]
			\node[state,initial] (0)  {$0$};
			\node[state] (1) [right of=0] {$1$};
			\node[state] (2) [right of=1] {$2$};
			\node (etc1) [right of=2] {$\ldots$};
			\node[state, rounded rectangle] (m-3) [right of=etc1] {$m-3$};
			\node[state, rounded rectangle] (m-2) [right of=m-3] {$m-2$};
			\node[state, rounded rectangle, accepting] (m-1) [right of=m-2] {$m-1$};
			\node[state,rectangle,red, text height= 8pt](ac)[right of=m-1]{$a,c$}; 
			\node[state,rectangle,red, text height= 8pt](bd)[right of=ac]{$b,d$}; 
			\path[->]
        (0) edge[bend left] node {$a$} (1)
        (1) edge[bend left] node {$a$} (2)
        (2) edge[bend left] node {$a$} (etc1)
        (etc1) edge[bend left] node {$a$} (m-3)
        (m-3) edge[bend left] node {$a$} (m-2)
        (m-2) edge[bend left] node {$a, c$} (m-1)
                  edge[very thick,dashed,-, red,out=270,in=270,looseness=.4] node {}(ac)
        (m-1) edge[out=-115, in=-65, looseness=.2] node[above] {$a$} (0)
             edge[very thick,dashed, -, red,out=45,in=145,looseness=.4] node {}(bd)
		    (0) edge[out=115,in=65,loop] node {$b, c, d$} (0)
		    (1) edge[out=115,in=65,loop] node {$c, d$} (1)
		    (2) edge[out=115,in=65,loop] node {$b, c, d$} (2)
		    (m-3) edge[out=115,in=65,loop] node {$b, c, d$} (m-3)
		    (m-2) edge[out=115,in=65,loop] node {$b, d$} (m-2)
		    (m-1) edge[out=115,in=65,loop] node {$b, d$} (m-1)
        (1) edge[bend left] node[above] {$b$} (0)
        (m-1) edge[bend left] node[above] {$c$} (m-2)
			;
			\node[state] (q0) [above of=etc1, node distance=2cm] {$q_0$};
			\node[state] (q1) [right of=q0] {$q_1$};
			\node[state] (q2) [right of=q1] {$q_2$};
			\node (etc2) [right of=q2] {$\ldots$};
			\node[state, rounded rectangle] (n-3) [right of=etc2] {$q_{n-3}$};
			\node[state, rounded rectangle] (n-2) [right of=n-3] {$q_{n-2}$};
			\node[state, rounded rectangle, accepting] (n-1) [right of=n-2] {$q_{n-1}$};
			\path[->]
        (q0) edge[bend left] node {$a$} (q1)
        (q1) edge[bend left] node {$a$} (q2)
        (q2) edge[bend left] node {$a$} (etc2)
        (etc2) edge[bend left] node {$a$} (n-3)
        (n-3) edge[bend left] node {$a$} (n-2)
        (n-2) edge[bend left] node {$a, b$} (n-1)
        (n-1) edge[out=-115, in=-65, looseness=.2] node[above] {$a$} (q0)
		    (q0) edge[out=115,in=65,loop] node {$b, c, d$} (q0)
		    (q1) edge[out=115,in=65,loop] node {$b, d$} (q1)
		    (q2) edge[out=115,in=65,loop] node {$b, c, d$} (q2)
		    (n-3) edge[out=115,in=65,loop] node {$b, c, d$} (n-3)
		    (n-2) edge[out=115,in=65,loop] node {$c, d$} (n-2)
		    (n-1) edge[out=115,in=65,loop] node {$c, d$} (n-1)
        (q1) edge[bend left] node[above] {$c$} (q0)
        (n-1) edge[bend left] node[above] {$b$} (n-2)
			;
			\node[state] (r0) [below of=etc1, node distance=2cm] {$r_0$};
			\node[state] (r1) [right of=r0] {$r_1$};
			\node[state] (r2) [right of=r1] {$r_2$};
			\node (etc3) [right of=r2] {$\ldots$};
			\node[state, rounded rectangle] (p-3) [right of=etc3] {$r_{p-3}$};
			\node[state, rounded rectangle] (p-2) [right of=p-3] {$r_{p-2}$};
			\node[state, rounded rectangle, accepting] (p-1) [right of=p-2] {$r_{p-1}$};
			\path[->]
        (r0) edge[bend left] node {$d$} (r1)
        (r1) edge[bend left] node {$d$} (r2)
        (r2) edge[bend left] node {$d$} (etc3)
        (etc3) edge[bend left] node {$d$} (p-3)
        (p-3) edge[bend left] node {$d$} (p-2)
        (p-2) edge[bend left] node {$d, b$} (p-1)
        (p-1) edge[out=-115, in=-65, looseness=.2] node[above] {$d$} (r0)
		    (r0) edge[out=115,in=65,loop] node {$a, b, c$} (r0)
		    (r1) edge[out=115,in=65,loop] node {$a, b$} (r1)
		    (r2) edge[out=115,in=65,loop] node {$a, b, c$} (r2)
		    (p-3) edge[out=115,in=65,loop] node {$a, b, c$} (p-3)
		    (p-2) edge[out=115,in=65,loop] node {$a, c$} (p-2)
		    (p-1) edge[out=115,in=65,loop] node {$a, c$} (p-1)
        (r1) edge[bend left] node[above] {$c$} (r0)
        (p-1) edge[bend left] node[above] {$b$} (p-2)
			;
			\node (inv) [above of=m-3, node distance=1.1cm] {};
			\draw[-, red] (ac.45) .. controls +(-.1,1) and +(-.1,0) .. (inv.center);
			\draw[->, red] (inv.center) .. controls +(-1,0) and +(-.3,-.8) .. (q0);
			%\node[red] (lab) [below of=q0, node distance=1cm] {$a, cqs$};
			;
			\node (inv2) [above of=m-2, node distance=1.2cm] {};
			\draw[-, red] (bd.45) .. controls +(-.1,.9) and +(-.1,0) .. (inv2.center);
			\draw[->, red] (inv2.center) .. controls +(-1,0.1) and +(0,-.8) .. (q0.south);
		%	\node[red] (lab) [above of=m-1, node distance=1.1cm, xshift=-.4cm] {$b, d$};
			;
			\node (inv3) [below of=etc1, node distance=.8cm] {};
			\draw[-, red] (ac.-45) .. controls +(-.1,-.7) and +(-.1,0) .. (inv3.center);
			\draw[->, red] (inv3.center) .. controls +(-.3,0) and +(-.7,.8) .. (r0.160);
		%	\node[red] (lab) [below of=2, node distance=1.2cm, xshift=.5cm] {$a, c$};
			;
			\node (inv4) [below of=m-2, node distance=.9cm] {};
			\draw[-, red] (bd.-45) .. controls +(0,-.7) and +(-.1,0) .. (inv4.center);
			\draw[->, red] (inv4.center) .. controls +(-2,0) and +(0.2,.8) .. (r0.20);
		%	\node[red] (lab) [below of=m-1, node distance=.7cm, xshift=.1cm] {$b, d$};
    \end{tikzpicture}
  }
  \caption{The automata $A, B$ and $C$}
  \label{AFD_A}
\end{figure}
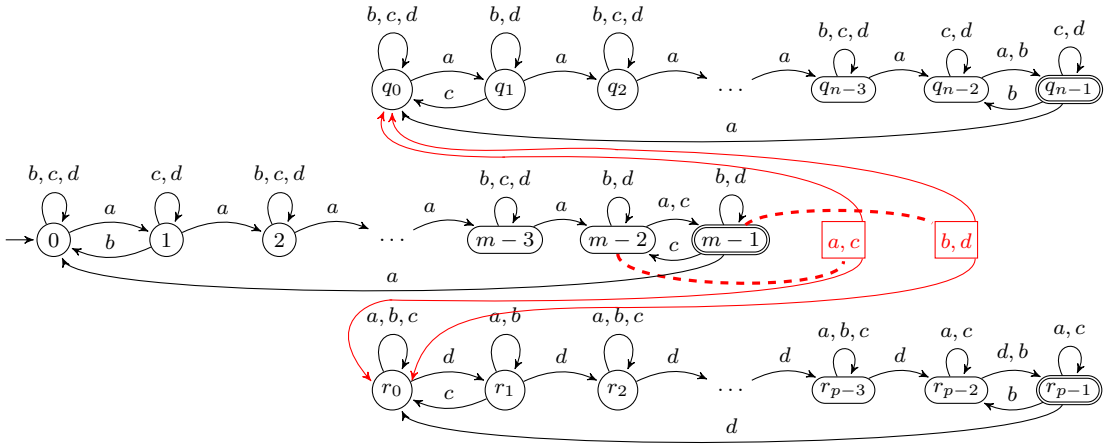

  It is obvious that $A$, $B$ and $C$ are minimal since they contain only one final state and, in any of them, one symbol acts as a cycle over the set of states. To prove $(M,N,P)$ to be a witness for the composition of catenation with symmetric difference, we have to prove that the DFA $D$, built from $A, B$ and $C$ by following the construction described in Section \ref{sec prelim} reaches the bound computed in Section~\ref{upper}. Some properties of $A, B$ and $C$ are useful. Especially, one can obtain rotations and groupings on their states reading some particular words.
   %need to use some properties of $A$, $B$ and $C$, now described. 
  %\begin{definition}% Let $Q=\{q_0,\dots,q_{n-1}\}$ be a totally ordered set of states with $q_0<\dots<q_{n-1}$.
	
 % We call \emph{$k$-rotation} over $Q$ is the transformation $r_k$  such that  $q_{j}r_k=q_{j+k \mod l}$ for $0\leq j\leq n-1$.
  
  %The \emph{grouping} of two elements $q_i,q_j\in Q$ is the $((j-i)\mod n)$-rotation over $Q\setminus\{q_j\}$  letting $q_j$ unchanged. 
%\end{definition}
  The following properties are true for $W_n(a,b,c,d)$, but they remain valid for the other automata by just permuting role of the symbols: 

\begin{itemize}
	\item  %A $k$-rotation over  a subset $Q=\{i_0,\ldots, i_{l-1}\}$ of $Q_B$ is defined as a composition of a cycle of length $|Q|$. It is a transformation $t$, denoted by $(i_0,i_1,\ldots, i_{l-1})_k$  where $i_jt=i_{j+k \mod l}$ for $0\leq j\leq l-1$. 
	The word $a^k$ induces a $k$-rotation over $Q_B$.

	%A $k$-rotation over $Q=\{q_{i_1},\ldots,q_{i_{\nu-1}}\}\subset Q_B$ is a transformation $t$ over $Q_B$ satisfying for any state $q_{\mu}\in Q$, $q_{\mu}t=q_{(\mu+k)\mod\nu}$. If $Q=Q_B$ then we simply say that $t$ is a $k$-rotation. 
	
%	A rotation $r=a^k$ is defined as a permutation $(0,k\mbox{ mod }n,2k\mbox{ mod }n,...,(n-1)k\mbox{ mod }n)$ The word $a^k$ (for any $k$) allows to switch circularly over ${\cal Q}_n$. That is $a^k$ sends $S$ to $\{(i_1+k)\bmod{n},...,(i_s+k)\bmod{n}\}$. This operation relies on a more elementary one: the \textbf{shift} of each state to its successor (mod $n$) when reading an $a$.
%	\item \textbf{Contraction}: Consider $k<n$ and $S=\{n-k,...,n-2,n-1,0\}$ the subset made of the last $k$ states of ${\cal Q}_n$ (plus $0$). The word $(ac)^k$ allows to inject all $S$ in the unique state $0$. That is $(ac)^k$ sends $S$ to $\{0\}$. Note that a unique application of the word $ac$ just contracts $m-1$ and $0$, shifting all other states of $S$. This "little" contraction allows to exclude the state $1$ of the reached set. The elementary operations involved during a contraction are an alternation of shifts (with the $a$'s) and mixes (with the $c$'s), a \textbf{mix} being done each time we read a $c$ to take $\{0,1\}$ to $\{0\}$.
	\item  % A grouping of two states $q_i$, $q_j$ is a $((j-i)\mod n)$-rotation over $Q_B\setminus\{q_j\}$ sending $q_j$ to $q_j$. 
	The word $(ab)^{(k-2)}$ induces a grouping of $q_{n-k}$ and $q_{n-2}$ sending these two states to $q_{n-1}$, $q_{n-2}$ (with $2< k\leq n$). The word $(ba)^{(k-1)}$  induces another grouping which sends the states $q_{n-k}$, $q_{n-1}$ to the states $q_0$, $q_{n-1}$ (with $2\leq k\leq n-1$).
	
%	Consider $3\leq k\leq n$ and $S=\{n-k,n-2\}$ the two-elements subset whose second element is the second-last state of ${\cal Q}_n$. The word $(ab)^{(k-2)}$ allows to send $S$ on two consecutive states. More precisely, $(ab)^{(k-2)}$ sends $S$ to $\{n-2,n-1\}$. In fact, this word induces two disjoint cycles over $Q_B$: one over $\{q_0,...,q_{n-3},q_{n-1}\}$ and the other one over $\{q_{n-2}\}$. A variant of this grouping consists in reaching $\{0,n-1\}$, using the word $(ba)^{(k-1)}$ from $\{n-k,n-1\}$ (with $2\leq k<n$).
\end{itemize}
\begin{remark}\label{Claim::permute}
Any word in $\{a,b,d\}^*$ acts as a permutation on $Q_B\times Q_C$. Consequently for any word $w\in\{a,b,d\}^*$ and any set $S\subset Q_B\times Q_C$, $|w.S|=|S|=|S.w|$. %This defines an action of the free group generated by $a, b$, $d$.
\end{remark}
\begin{proposition}\label{prop D access}
Each state of $D$ is accessible.
\end{proposition}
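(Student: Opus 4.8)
The plan is to establish accessibility in two stages: first show that every state $(0,S)$ with $S\subseteq Q_B\times Q_C$ arbitrary is accessible, and then route the $A$-component to an arbitrary target value. The routing stage is the easy one and records the only constraint forced by the construction. Since $a$ acts as the cycle $(0,\dots,m-1)$ on $Q_A$, as the cycle $(q_0,\dots,q_{n-1})$ on $Q_B$, and as the identity on $Q_C$ (recall $C=W_p(d,b,c,a)$), reading $a^{i}$ from a state $(0,S')$ with $0\le i\le m-2$ never visits the unique final state $m-1$, hence triggers no insertion, and yields $(i,S'\cdot a^{i})$. Taking $S'=S\cdot a^{-i}$ (a row-rotation of $S$, legitimate once the first stage is proved) produces $(i,S)$ for every $i\le m-2$ and every $S$. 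For $i=m-1$, reading $a^{m-1}$ meets $m-1$ exactly at the last step, inserting $(q_0,r_0)$; starting from $(0,(S\setminus\{(q_0,r_0)\})\cdot a^{-(m-1)})$ then gives $(m-1,S)$ whenever $(q_0,r_0)\in S$, which is exactly the membership the transition function forces on any accessible state whose $A$-component is final.

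For the first stage I would argue by induction on the number of marked cells $|S|$, the base case $(0,\emptyset)=i_D$ being the initial state. The engine is a word that adds one prescribed mark $(q_j,r_k)$ while carrying the previously placed marks, as a permutation, to a pre-arranged configuration. Two structural facts of the chosen automata make this work. First, $d$ acts as the cycle on $Q_C$ and as the identity on both $Q_A$ and $Q_B$, so $d^{k}$ rotates columns \emph{freely}, fixing rows, fixing the $A$-component, and never inserting. Second, $A=W_m(a,c,b,d)$ lets the $A$-component "bounce" between $0$ and $1$: from $0$, reading $a$ reaches $1$ and reading $b$ (the contraction $\binom{1}{0}$ of $A$) keeps $0$, while from $1$ reading $b$ returns to $0$; as $m\ge 3$, the final state $m-1$ is never met, so no spurious insertion occurs. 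During such bounces $b$ realises the transposition $(q_{n-2},q_{n-1})$ on rows and $a$ the row-cycle, and by the classical fact that an $n$-cycle together with a transposition of two cyclically adjacent points generates $S_{n}$, these bounces realise \emph{every} permutation of the rows while returning the $A$-component to $0$.

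Concretely, to pass from a reachable $(0,U)$ to $(0,S)$ with $(q_j,r_k)\in S$ and $|U|=|S|-1$, I would read $a^{m-2}\,a$ to insert $(q_0,r_0)$ at the step $m-2\to m-1$, then one further $a$ to leave $m-1$ (sending the new mark to $(q_1,r_0)$ and the $A$-component back to $0$), then a product $P$ of the bounce gadgets whose row-effect carries $q_1$ to $q_j$, and finally $d^{k}$ to set the column to $r_k$. Since $P$ and the column rotations fix $r_0$ (here $p\ge 3$ guarantees the column transposition $(r_{p-2},r_{p-1})$ fixes $r_0$), the inserted mark lands exactly at $(q_j,r_k)$, whereas the whole word acts on the old marks as an honest permutation $\Phi$ of $Q_B\times Q_C$: it contains no contraction letter $c$, so by Remark~\ref{Claim::permute} it is bijective and cardinality-preserving. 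Choosing $U=\Phi^{-1}(S\setminus\{(q_j,r_k)\})$ then closes the induction. I expect the main obstacle to be precisely this entanglement, which is also why the specific Brzozowski triple is needed: the $A$-component can only be advanced by letters that simultaneously move the rows, and returning it to $0$ naively forces a pass through $m-1$ and hence an unwanted insertion. The argument therefore hinges on isolating a clean column generator ($d$), a clean no-insertion bounce for the rows ($a,b$ between $0$ and $1$), and the $S_n$-generation that together let a single insertion be steered to an arbitrary cell while the rest of the tableau is moved bijectively; the routine checks are that no gadget ever re-enters $m-1$ and that distinct old marks are never merged.
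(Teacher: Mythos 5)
Your proof is correct and follows essentially the same route as the paper's: an induction on the number of marked cells in which $(q_0,r_0)$ is injected by driving $A$ through $m-1$, the new mark is steered to row $q_j$ by $\{a,b\}$-words that keep $A$ bouncing below the final state (your gadget $abb$ is exactly the paper's word $(abb)^{(j-1)\bmod n}$), the column is set by $d^k$, and the $A$-component is finally routed by powers of $a$. The only differences are organizational (your two-stage split versus the paper's case analysis on $i$ inside the induction) and that you invoke full $S_n$-generation where a simple row rotation suffices.
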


\begin{proof}
By induction over the structure of the states of $D$. The initial state is $(0,\emptyset)$ and any state of the form $(i,\emptyset)$ with $i<m-1$ is reached from it by $a^i$.

Let $\alpha\in[0,np-1]$ and consider now any state $(i,S)$ with $|S|=\alpha+1$. We show its accessibility by cases, using only words in $\{a,b,d\}^*$ acting as permutations according to Remark~\ref{Claim::permute}.
\begin{enumerate}
	\item\label{it1} If $i=m-1$ then $S=\{(q_0,r_0)\}\uplus S'$ and
		the state $(i,S)$ is reached by $a$ from $(m-2,a\cdot S')$, itself accessible by induction hypothesis.
	\item If $i=0$ then we have three cases to consider: 
	\begin{enumerate}
		\item\label{it2a} If $S=\{(q_1,r_0)\}\uplus S'$ then $(0,S)$ is reached by $a$ from $(m-1,\{(q_0,r_0)\}\uplus a\cdot S')$ which is accessible by \ref{it1}.
		\item\label{it2b} If $S=\{(q_1,r_k)\}\uplus S'$ then $(0,S)$ is reached by $d^k$ from $(0,\{(q_1,r_0)\}\uplus d^k\cdot S')$ which is accessible by \ref{it2a}.
		\item\label{it2c}
		Otherwise, $S=\{(q_j,r_k)\}\uplus S'$. Then the word $w=(abb)^{(j-1)\mbox{\tiny{ mod }}n}$ allows to reach $(0,S)$ from $(0,\{(q_1,r_k)\}\uplus w\cdot S')$ which is accessible by \ref{it2b}.
		\end{enumerate}
	\item For any $0<i<m-1$, the state $(i,S)$ is reached by $a^i$ from $(0,a^i\cdot S)$ which is accessible by \ref{it2c}.
\end{enumerate}
\cqfd
\end{proof}

As a direct consequence,

\begin{theorem}
Each state of $D\slash\sim$ is accessible.
\end{theorem}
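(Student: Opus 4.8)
The plan is to derive this statement immediately from Proposition~\ref{prop D access} by observing that the canonical projection onto the quotient preserves accessibility. Recall that $D\slash\sim=(\Sigma,Q_D\slash\sim,[i_D],F_D\slash\sim,\cdot)$ is defined by $[q]\cdot a=[q\cdot a]$, so the map $\phi$ sending each state $q$ to its class $[q]$ carries the initial state $i_D$ of $D$ to the initial state $[i_D]$ of $D\slash\sim$ and commutes with the transition function.

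First I would check, by a straightforward induction on the length of $w\in\Sigma^*$, that $\phi(i_D\cdot w)=[i_D]\cdot w$ for every word $w$. The base case $w=\varepsilon$ is immediate since $i_D\cdot\varepsilon=i_D$ and $[i_D]\cdot\varepsilon=[i_D]$, and the inductive step $\phi(i_D\cdot wa)=[(i_D\cdot w)\cdot a]=[i_D\cdot w]\cdot a=([i_D]\cdot w)\cdot a=[i_D]\cdot wa$ uses exactly the defining relation $[q]\cdot a=[q\cdot a]$ together with the induction hypothesis.

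Next, given an arbitrary state of $D\slash\sim$, it is of the form $[(i,S)]$ for some $(i,S)\in Q_D$. By Proposition~\ref{prop D access}, this representative $(i,S)$ is accessible in $D$, hence there is a word $w$ with $i_D\cdot w=(i,S)$. Applying $\phi$ and the identity established above gives $[i_D]\cdot w=[(i,S)]$, which shows that $[(i,S)]$ is accessible in $D\slash\sim$. Since every equivalence class arises as $\phi(q)$ for some state $q$, this argument reaches all states of $D\slash\sim$.

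There is no genuine obstacle here: the two points to keep in mind are that the projection is surjective, so that no class is missed, and that accessibility transfers along the morphism in the right direction, namely from $D$ down to $D\slash\sim$. All the substantive work already resides in Proposition~\ref{prop D access}, whose inductive case analysis on the structure of the states is where the real effort lies; the present statement is merely the routine transfer of that accessibility to the quotient. As a byproduct, it guarantees that $D\slash\sim$ coincides with its own accessible part, so that it is exactly the minimal DFA used in the state complexity computation.
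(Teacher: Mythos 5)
Your proof is correct and takes exactly the route the paper intends: the paper states this theorem with the words ``As a direct consequence'' of Proposition~\ref{prop D access}, relying on precisely the observation you spell out, namely that the quotient projection $q\mapsto[q]$ commutes with transitions and is surjective, so accessibility transfers from $D$ to $D\slash\sim$. Your write-up merely makes explicit the routine induction the paper leaves implicit.
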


So $D\slash\sim$ is the minimal DFA of $M\cdot(N\oplus P)$. Furthermore, we know from 
%lud mod Section~\ref{upper} 
Lemma~\ref{lem-3}
that any state $[(i,S)]$ of $D\slash\sim$ contains $(i,\mathrm{Sat}(S))$. In order $\mathrm{sc}(M\cdot(N\oplus P))$ to reach $(m-1)\alpha_{n,p}+\alpha'_{n,p}$, it remains to prove the non equivalence of any two distinct saturated states of $D$, where a state $(i,S)$ is \emph{saturated} if $S$ is a saturated tableau.

\begin{theorem}
Any two distinct saturated states of $D$ are non-equivalent.
\end{theorem}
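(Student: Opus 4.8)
The plan is to exhibit, for every pair of distinct saturated states $(i,S)$ and $(i',S')$ of $D$, a word $w$ that sends exactly one of them into $F_D$; equivalently, $w$ must create a \emph{final cell} (a marked cell lying on the last row $q_{n-1}$ or on the last column $r_{p-1}$, but not on their intersection) in exactly one of the two runs. I would split the argument according to whether $i=i'$, and treat the case $i=i'$, $S\neq S'$ as the core of the proof.

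In that core case I would pick a cell $(q_j,r_k)$ in the symmetric difference of $S$ and $S'$, say $(q_j,r_k)\in S\setminus S'$. The decisive feature of the chosen witness is that the two coordinates rotate independently and with \emph{different} side effects: by construction $a$ is a cycle on $Q_B$ and the identity on $Q_C$ (it is the first letter of $B=W_n(a,b,c,d)$ and the fourth of $C=W_p(d,b,c,a)$), while $d$ is a cycle on $Q_C$ and the identity on $Q_B$ and also on $A=W_m(a,c,b,d)$; all of these act as permutations (Remark~\ref{Claim::permute}). Since $d$ fixes the $A$-component, $d$-words never enter the accepting state $m-1$ of $A$ and never append the corner $(q_0,r_0)$, whereas $a$ drives $A$ around its cycle. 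When $j\neq n-1$ I would read the ``free'' word $d^{\,p-1-k}$, which fixes every row, adds no corner, and carries $(q_j,r_k)$ onto the final cell $(q_j,r_{p-1})$, so that $S$ becomes final. It then remains to destroy every final cell produced in the $S'$-run, which I would do with the grouping and contraction words recalled above: a grouping brings a prescribed row (resp.\ column) to $q_{n-1},q_{n-2}$ or to $q_0,q_{n-1}$, after which the contraction $c$ collapses the offending marked cells of $S'$ onto the last row or onto column $r_0$, i.e.\ onto safe (both-final or both-non-final) positions, while sparing the isolated marker in row $q_j$. Rectangle-closedness of the saturated tableau $S'$ is exactly what makes such coherent collapses possible: because $(q_j,r_k)\notin S'$, the rows marked by $S'$ in column $k$ and the columns marked by $S'$ in row $j$ are governed by disjoint column-classes, so row $q_j$ can be preserved while the remainder of the column is neutralised. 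The mirror sub-case $j=n-1$ (aiming instead at the last row) is handled symmetrically, bearing in mind that the row direction is driven by $a$ and therefore interacts with the corner mechanism.

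For $i\neq i'$ I would exploit that the corner $(q_0,r_0)$ is appended precisely when the $A$-component steps into its unique accepting state $m-1$. A suitable power of $a$ brings the two $A$-components to $m-1$ at different moments, so one run acquires a freshly added corner that the other does not yet carry; rotating and collapsing as in the core case then converts this desynchronisation into a final cell present in only one run. Together with Proposition~\ref{prop D access} (all states accessible) and Lemma~\ref{lem-3} (each class contains a saturated representative), distinguishing all saturated states shows that $D/\!\sim$ has exactly $(m-1)\alpha_{n,p}+\alpha'_{n,p}$ states.

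I expect the main obstacle to be this core case. One must isolate a single distinguishing cell while simultaneously forcing \emph{all} of the possibly many, and structured, marked cells of the other tableau into safe positions, and one must do so without the corners inserted while $a$ crosses the accepting state of $A$ re-creating final cells in \emph{both} runs at the end of $w$; already the seemingly easy sub-case runs into this, since for $i=m-1$ and $k=0$ the word $d^{\,p-1}$ would slide the corner $(q_0,r_0)$ into the final cell $(q_0,r_{p-1})$ on both sides. Taming these two global side effects — the non-local action of the contraction $c$ and the corner insertions — through an explicit, carefully ordered choice of grouping words, while leaning on the rectangle structure of saturated tableaux, is where the genuine difficulty lies.
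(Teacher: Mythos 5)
Your proposal follows the same overall strategy as the paper's proof: reduce the case $i\neq i'$ to a tableau difference via the corner-insertion mechanism, and in the core case build a single word sending the witness cell to a final position while parking every cell of the other tableau at a ``safe'' position (both coordinates final or both non-final), exploiting that for $\oplus$ the doubly-final cell $(q_{n-1},r_{p-1})$ is non-final. You also correctly identify where the difficulty lies (contraction side effects and corner insertions). However, what you present is a plan, and it has two genuine gaps precisely at the points you flag.

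First, the core case. The paper's argument rests on a structural lemma you never isolate: setting $R=\{r_{k'}\mid (q_j,r_{k'})\in S'\}$ and $T=\{q_j\}\cup\{q_{j'}\mid (q_{j'},r_{k'})\in S'\text{ for some }r_{k'}\in R\}$, saturation forces $T\times R\subseteq S'$ (Lemma~\ref{lemmaRT}). Hence the cells of $S'$ split cleanly into ``green-green'' cells (inside the biclique $T\times R$) and ``black-black'' cells (disjoint from $T\cup R$), while the witness $(q_j,r_k)$ is the unique green-black cell. This trichotomy is what makes a \emph{simultaneous} parking possible: merging all green rows to $q_{n-1}$ and all green columns to $r_{p-1}$ sends every green-green cell to the single safe corner, every black-black cell to a doubly-non-final cell, and the witness to a final cell. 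Your ``disjoint column-classes'' remark is a shadow of this, but without the biclique property the claim that $S'$ can be neutralised ``while sparing the isolated marker'' is exactly the unproved point. Moreover, the merging is not a one-shot application of groupings and $c$: since $c$ contracts $B$ and $C$ simultaneously and also moves $A$, the paper needs an induction on the number of green states, where each step is rotation + grouping + contraction, and the substance of the proof is the list of six forbidden movements ((mvt1)--(mvt6)) together with case-by-case repairs (prefixing $a$'s, interleaving $d$'s, choosing among $d$, $bdb$, $dd$ before the final $c$) guaranteeing that contractions and corner insertions never create new green states. Your proposal names these dangers but supplies no mechanism for avoiding them; that mechanism \emph{is} the proof.

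Second, the case $i\neq i'$. Driving the two $A$-components into $m-1$ at different times adds a corner to one run only, but this alone does not yield distinct states: the other run's tableau may already contain, or saturate to contain, a cell in row $q_0$ or column $r_0$, in which case the corner creates no difference that survives saturation. The paper first \emph{empties} a column of the other tableau with the prefix $ccd^{p-1}$ ($cc$ contracts $B$ and $C$ while acting as an involution on $A$, then $d^{p-1}$ rotates $C$ without moving $A$ or $B$), and only then reads $a^{m-1-i_1}$; Lemma~\ref{lemma-distinct} then applies because one saturation has a marked cell in a column that the other leaves entirely empty. Without this emptying step your desynchronisation argument can fail, e.g.\ already when $S=S'$ and $S$ saturates to a tableau containing $(q_0,r_0)$ after the rotation.
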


\begin{proof}
Let $s_1=(i_1,S_1)$ and $s_2=(i_2,S_2)$ be two distinct saturated states.
There are two cases to consider:
 If $i_1\neq i_2$, let us suppose $i_1>i_2$ (without loss of generality). Then, using the word $w=ccd^{p-1}a^{m-1-i_1}$, we send $s_1$ to $(i_1',S_1')$ and $s_2$ to $(i_2',S_2')$ with $(q_0,r_0)\in 
 %lud mod Sat
 \mathrm{Sat}
 (S_1')\backslash 
 %lud mod Sat
 \mathrm{Sat}
 (S_2')$, reducing this case to the other one, next described. For an explanation, the prefix $cc$ acts as a contraction which allows, in particular, to exclude $r_1$ of $S_2$ with no move on $A$. % or $B$ (the only possible exception being the emergence of a non incident new couple $(q_0,r_0)$ if $i_2=m-2$). 
 Then the factor $d^{p-1}$ induces a rotation on $C$ to exclude $r_0$ of $S_2$, still with no move on $A$. % or $B$.
  Last, the suffix $a^{m-1-i_1}$ takes $i_1$ to $m-1$, involving the emergence of the couple $(q_0,r_0)$ in $S_1'$, whereas the state $r_0$ remains excluded of $S_2'$. Then, following Lemma~\ref{lemma-distinct},  $\mathrm{Sat}(S_1')\neq\mathrm{Sat}(S_2')$.

Now, we look at the other case. If $S_1\neq S_2$, let $(q_j,r_k)\in S_1\backslash S_2$ (without loss of generality). A state $(i,S)$ is final if and only if $S$ holds a \textit{final} couple of the form $(q_{n-1},r_k)$ with $k\neq p-1$ or $(q_j,r_{p-1})$ with $j\neq n-1$.  We will prove it is possible to build a word sending $(q_j,r_k)$ on such a final couple while sending  any other couple on a non-final couple. And then we will be able to conclude following Lemma~\ref{lemma-final}.

 First,  we define the sets $R=\{r_{k'}|(q_j,r_{k'})\in S_2\}$ and $T=\{q_j\}\cup\{q_{j'}|(q_{j'},r_{k'})\in S_2\mbox{ for some }r_{k'}\in R\}$  and we show the following lemma %then, for any $q_{j'}\in T$ and $r_{k'}\in R$, we have $(q_{j'},r_{k'})\in S_2$. This is because $S_2$ is saturated. 
 \begin{lemma}\label{lemmaRT}
 One has $T\times R\subset S_2$.
 \end{lemma}
 \begin{proof}
 Let $(q,r)\in T\times R$. If $q=q_j$ then the property is obvious. Else $(q_j,r)\in S_2$ and there exists $r'\in R$ such that $(q,r')\in S_2$. If $r=r'$ then again the lemma is straightforward. In the other cases, one has $(q_j,r')\in S_2$ and since $S_2$ is saturated we also have $(q,r)\in S_2$.\cqfd   
 \end{proof}
 
 Now, we partition the states of $B$ and $C$ using a coloration:
 %For this, we color the states of $B$ and $C$. Now,
  we color in green the states of $R\cup T$, in black the other states appearing in $S_2$ plus the state $r_k$ and in grey the states not used in $S_2$ (see 
  %lud mod figure \ref{avant},
  Figure~\ref{avant},
   for an illustration). In fact, the colors spot states on the automata $B$ and $C$ and, by reading a word, we move the colors on other states.

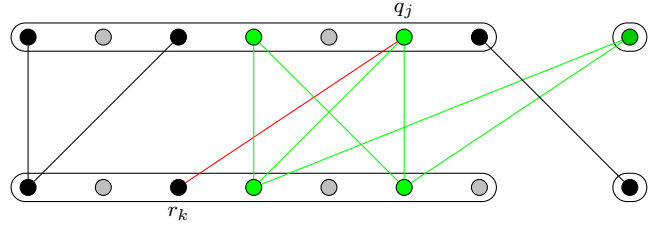
\begin{figure}[H]
  \centerline{
	\begin{tikzpicture}[node distance=1cm,shorten >=0pt]
      \node[state,fill=black] (q0) {};
      \node[state,right of=q0,fill=lightgray] (q1) {};
      \node[state,right of=q1,fill=black] (q2) {};
      \node[state,right of=q2,fill=green] (q3) {};
      \node[state,right of=q3,fill=lightgray] (q4) {};
      \node[state,right of=q4,fill=green] (q5) {};
      \node[above of=q5,node distance=0.35cm] (qj) {$q_j$};
      \node[state,right of=q5,fill=black] (q6) {};
      \node[state,right of=q6,node distance=2cm,fill=black!20!green] (q7) {};
      \node[state,below of=q0,node distance=2cm,fill=black] (r0) {};
      \node[state,right of=r0,fill=lightgray] (r1) {};
      \node[state,right of=r1,fill=black] (r2) {};
      \node[below of=r2,node distance=0.35cm] (rk) {$r_k$};
      \node[state,right of=r2,fill=green] (r3) {};
      \node[state,right of=r3,fill=lightgray] (r4) {};
      \node[state,right of=r4,fill=green] (r5) {};
      \node[state,right of=r5,fill=lightgray] (r6) {};
      \node[state,right of=r6,node distance=2cm,fill=black] (r7) {};
      \node[state,rounded rectangle,fit=(q0) (q6)] (QA) {};
      \node[state,rounded rectangle,fit=(q7)] (FA) {};
      \node[state,rounded rectangle,fit=(r0) (r6)] (QB) {};
      \node[state,rounded rectangle,fit=(r7)] (FB) {};
      \path
        (q5) edge[color=red] (r2)
        (q0) edge (r0)
        (q2) edge (r0)
        (q3) edge[color=green] (r3)
        (q3) edge[color=green] (r5)
        (q5) edge[color=green] (r3)
        (q5) edge[color=green] (r5)
        (q7) edge[color=green] (r3)
        (q7) edge[color=green] (r5)
        (q6) edge (r7)
        ;
    \end{tikzpicture}
  }
  \caption{An example of coloration for states of $B$ and $C$.}
  \label{avant}
\end{figure}
So we associate a pair of colors to each couple in $S_2$ and Lemma \ref{lemmaRT} implies that any couple of $S_2$ is either green-green or black-black. The couple $(q_j,r_k)$, only present in $S_1$, is green-black.

In the remaining of the proof, we compute a word sending
\begin{enumerate}
	\item\label{point1} the couple $(q_j,r_k)$ on a final couple of the form $(q_{n-1},r_{k'})$ with $k'\neq p-1$,
	\item\label{point2} all green-green couples on the non-final $(q_{n-1},r_{p-1})$ and 
	\item\label{point3} all black-black couples on non-final couples of the form $(q_{j'},r_{k'})$ with $j'\neq n-1$ and $k'\neq p-1$.
\end{enumerate}
Clearly, such a word separates $s_1$ from $s_2$, sending the first one to a final state (because of point \ref{point1}) and the second one to a non-final state (because of points \ref{point2} and \ref{point3}). For instance, the colors in 
%lud mod figure \ref{avant}
Figure~\ref{avant}
 become as in 
 %lud mod figure \ref{après}.
 Figure~\ref{après}. 
\begin{figure}[H]
  \centerline{
	\begin{tikzpicture}[node distance=1cm,shorten >=0pt]
      \node[state,fill=lightgray] (q0) {};
      \node[state,right of=q0,fill=black] (q1) {};
      \node[state,right of=q1,fill=black] (q2) {};
      \node[state,right of=q2,fill=black] (q3) {};
      \node[state,right of=q3,fill=lightgray] (q4) {};
      \node[state,right of=q4,fill=lightgray] (q5) {};
      \node[state,right of=q5,fill=lightgray] (q6) {};
      \node[state,right of=q6,node distance=2cm,fill=black!20!green] (q7) {};
      \node[state,below of=q0,node distance=2cm,fill=lightgray] (r0) {};
      \node[state,right of=r0,fill=black] (r1) {};
      \node[state,right of=r1,fill=lightgray] (r2) {};
      \node[state,right of=r2,fill=black] (r3) {};
      \node[state,right of=r3,fill=lightgray] (r4) {};
      \node[state,right of=r4,fill=lightgray] (r5) {};
      \node[state,right of=r5,fill=black] (r6) {};
      \node[state,right of=r6,node distance=2cm,fill=green] (r7) {};
      \node[state,rounded rectangle,fit=(q0) (q6)] (QA) {};
      \node[state,rounded rectangle,fit=(q7)] (FA) {};
      \node[state,rounded rectangle,fit=(r0) (r6)] (QB) {};
      \node[state,rounded rectangle,fit=(r7)] (FB) {};
      \path
        (q7) edge[color=red] (r6)
        (q1) edge (r1)
        (q2) edge (r1)
        (q7) edge[color=green] (r7)
        (q3) edge (r3)
        ;
    \end{tikzpicture}
  }
  \caption{A configuration distinguishing the states which correspond to the coloration of Figure~\ref{avant}.}
  \label{après}
\end{figure}
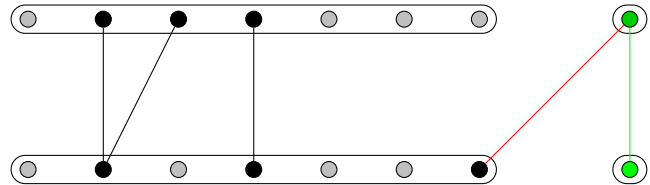

To prove the existence of a word satisfying the three points above, we proceed by induction on the number of green states. For each step we will use the notions of rotation and grouping described
%lud mod  at the beginning of this section 
in Section~\ref{background}
in order to decrease the number of green states. The difficulty lies not only in decreasing this number, but also in avoiding to increase it inadvertently. Indeed, we have to look at the effect of a word over three automata simultaneously. So, to convince the reader, we first list the $6$ movements which could increase the number of green states. More precisely, there are only two ways to increase the number of green states:  (1) use a letter $c$ (acting as a contraction) which makes green states appear by saturation, independently of the current state of $A$;  (2) use a letter which moves the current state in $A$ to $m-1$. %Hence, in the rest of the proof we will treat as special cases

First, we say that a couple of states (on a same automaton) is \emph{uncontractable} if one is green and the other is black. Consider an uncontractable couple over $Q_B$ (or $Q_C$). If we read a word which sends both states on a same state then new green positions could appear (those connected with the black position before reading the word). Regarding this problem, two movements must be avoided:
\begin{itemize}
	\item[(mvt1) -] Reading $c$ when $(q_0,q_1)$ is uncontractable.
	\item[(mvt2) -] Reading $c$ when $(r_0,r_1)$ is uncontractable.
\end{itemize}

Next, we say that a couple of states (on two distinct automata) is \emph{sensitive} if one is green and the other is not green. Consider a sensitive couple of $Q_B\times Q_C$. If we read a word sending it to $(q_0,r_0)$  while we, simultaneously, access to the final state of $Q_A$, then the new couple appearing connects a green position with a non green one. Since, from Lemma \ref{lemmaRT} and by saturation, the green positions always constitute a biclique, the non green state and all those connected to it become green. In relation with this problem, four movements must be avoided:
\begin{itemize}
	\item[(mvt3) -] Reading $a$ when $(q_{n-1},r_0)$ is sensitive and the current state on $Q_A$ is $m-2$.
	\item[(mvt4) -] Reading $b$ when $(q_0,r_0)$ is sensitive and the current state on $Q_A$ is $m-1$.
	\item[(mvt5) -] Reading $c$ when one couple in $\{q_0,q_1\}\times\{r_0,r_1\}$ is sensitive and the current state on $Q_A$ is $m-2$.
	\item[(mvt6) -] Reading  $d$ when $(q_0,r_{p-1})$ is sensitive and the current state on $Q_A$ is $m-1$.
\end{itemize}

Now, we start the induction.
By definition, the state $q_j$ is green. If it is the only one, then two situations may occur:
\begin{itemize}
	\item[i/] If there exists a grey state $r_{k'}\in Q_C$  then the word $d^{p-1-k'}a^{n-1-j}$ 
	%sends $s_1$ to a final state since it conveys the couple $(q_j,r_{k})$ to $(q_{n-1},r_{\ell})$ with $\ell=(k+p-1-k')\mod p\neq p-1$ (because $k'\neq k$).
	does the job, with a first rotation over $Q_C$ to convey the grey state on $r_{p-1}$ and a second one over $Q_B$ to send the green state on $q_{n-1}$. 
%	We consider now the state reached  from $s_2$ by the same word. We first observe that $q_j$ is sent to $q_{n-1}$ only at the end of the reading, so no (mvt3) occurs in the process.
	%Since we only read symbols $a$ and $d$, the only configurations we have to avoid during the reading from $s_2$ are (mvt3) and (mvt6).  
	If the reading of the $d$'s from $s_2$ induces (mvt6)-movements (here, this implies $j=0$), we add the prefix $a$ to the previous word and the suffix $a^{n-1-j}$ becomes $a^{n-2}$. 
	\item[ii/] If there is no grey state in $Q_C$ then we create one by making a contraction $c$ which ensures $r_1$ to be immediately grey, and we come back to the previous case. If this contraction from $s_2$ induces a (mvt1)-movement, we first read an $a$ (if $j=1$) or two $a$'s (if $j=0$). Observe that no (mvt3)-movement can appear since there is only one green state (which cannot be, simultaneously, on $q_{n-1}$ and $q_0$ or $q_1$). Similarly, if the contraction from $s_2$ induces a (mvt5)-movement, we read the prefix $aa$ before the $c$.
\end{itemize}

If there are two green states then, necessarily, one is $q_j$ and the other belongs to $R$. We proceed exactly as for case i/ with the second green state playing the role of the grey state $r_{k'}$.\\

Now, suppose there are $\alpha+1$ green states with $\alpha\geq2$. First, observe neither $T$ nor $R$ can be empty. We have two cases to consider.
\begin{itemize}
	\item If $R$ contains at least two states. Among them, we denote $r_{k_1'}$ and $r_{k_2'}$ with $k_2'>k_1'$, those having the biggest indices. We make a rotation to send $r_{k_2'}$ on $r_{p-1}$, then a grouping to have both green positions on $r_0$ and $r_{p-1}$ and last, a contraction to decrease the number of green states  and solve the problem by induction hypothesis. So,  the word $w=d^{p-1-k_2'}(bd)^{k_2'-k_1'}dc$ does the job when $i_2\in Q_A\setminus \{m-2,m-1\}$ . 

If $i_2=m-1$, the word $w$ has to be prefixed with an $a$ to avoid  (mvt4) or (mvt6)-movements.\\
If $i_2=m-2$, 
%	Some problems can occur:
%	\begin{enumerate}
%	\item (mvt4) or (mvt6)-movements if we access state $m-1$ in the first automaton ;
%	\item (mvt5)-movement while reading the final $c$ if we access state $m-2$ in the first automaton.
%	\item (mvt1)-movement while reading the final $c$ (notice that a (mvt2)-movement cannot occur since $r_0$ and $r_1$ are both green at this moment).
%	\end{enumerate}
%To prevent from the situations 1 and 2, we add a prefix to $w$ which sends $i_2$ onto $0$. So, if $i_2=0$ we do nothing. If $i_2=m-1$, we just read an $a$. Else, we start with the word $a^{m-2-i_2}$.
we  first read some $d$'s depending on the color of $q_{n-1}$.  If it is black or grey, we read $d^{p-k}$ to send $r_k$ on $r_0$, coloring it in black. If it is green, we read $d^{p-k_2'}$ to send $r_{k_2'}$ on $r_0$, coloring it in green. In both cases, we can now read an $a$ from $m-2$ being sure the states $q_{n-1}$ and $r_0$ are not sensitive, and so avoid (mvt3)-movement. Then, we read a last $a$ to reach the state $0$ before reading $w$, where $k'_1$ and $k'_2$ are adjusted in consequence (they are respectively replaced by two new indices $k''_2>k''_1$, where $\{k'_1,k'_2\}$ is sent to $\{k''_1,k''_2\}$ when reading the prefix above).

%Of course, the positions $k_1'$ and $k_2'$ have moved. The word $w$ must be adjusted in consequence.

%To avoid situation 3, we proceed accordingly to the following (exhaustive) cases:
%	\begin{itemize}
%		\item If $q_0$ and $q_{n-1}$ are contractable then read $ab$ (rather than $a$, to stay onto $0$ in the first automaton) before the final $c$.
%		\item If $q_0$ is uncontractable with $q_{n-1}$, but is contractable with $q_{n-2}$ then read $bab$ before the final $c$.
%		\item If $q_0$ is uncontractable with $q_{n-1}$ and $q_{n-2}$ (that is, the latter two are contractable) then read $babbab$ before the final $c$.
%	\end{itemize} 	
	\item If $R$ contains only one state then $T$ contains at least two. We denote them $q_{j_1'}$ and $q_{j_2'}$ with $n-1>j_2'>j_1'$ or $j_1'=n-1$ (that is $q_{j_2'}$ is nearest to $q_{n-2}$ than $q_{j_1'}$ when reading only $a$'s) and proceed similarly to the previous case (but not  identically). We make a rotation to send $q_{j_2'}$ on $q_{n-2}$, then a grouping to have both green positions on $q_{n-2}$ and $q_{n-1}$ and last, a contraction to decrease the green states number and solve the problem by induction hypothesis. So, in most of the cases, the word $w=a^{n-2-j_2'}(ab)^{j_2'-j_1'}babbabc$ does the job. The possible problems which can happen are:
	\begin{enumerate}
		\item a (mvt5)-movement if we read the final $c$ from state $m-2$ ;
		\item (mvt3)-movements when reading an $a$ from state $m-2$ ;
		\item (mvt4)-movements when reading a $b$ from state $m-1$ ;
		\item a (mvt2)-movement when reading the final $c$.
	\end{enumerate}
	To solve the first problem, we ensure we are on position $0$ onto the first automaton before reading the suffix $babbabc$. For this, we have to read sufficiently many $ab$ factors in $w$. That is, if we are not on position $0$ after reading the prefix $a^{n-2-j_2'}(ab)^{j_2'-j_1'}$ then we continue to read some $ab$'s: any large enough (for example, larger than $m$) multiple of $n-1$ is a good candidate. Notice that this solution may induce occurrences of problems (2) and (3), which are treated below.

	To solve the second problem, the solution consists, each time we have to read an $a$ in the position $m-2$, to read some $d$'s before the $a$. This allows to color $r_0$ in green or black accordingly to the current color of $q_{n-1}$ to ensure $q_{n-1}$ and $r_0$ to be not sensitive. This is always possible since we have at least one black position (coming from $r_k$) and one green position in $Q_C$ as mentioned above ($R$ is not empty).

	Remark that the reading of some $d$'s produces no  (mvt6)-movement  (because we are on state $m-2$ in $Q_A$). And the adjonction of the $d$'s  has no incidence regarding to the solution we have given for the first problem.

	Observe that the third problem has disappeared since if we are on state $m-1$ when reading a $b$, then we just come from $m-2$ by reading an $a$. As previously mentioned, we first ensured that $q_{n-1}$ and $r_0$ were not sensitive. But this involve, after reading the $a$, that $q_0$ and $r_0$ are now not sensitive, hence avoiding a (mvt4)-movement.

	Last, to avoid the fourth problem, we proceed accordingly to the following (exhaustive) cases:
	\begin{itemize}
		\item If $r_0$ and $r_{p-1}$ are contractable then read $d$ before the final $c$. This action sends $r_0$ to $r_1$ and $r_{p-1}$ to $r_0$ with no move on the other automata. This avoid a (mvt2)-movement while reading the final $c$.
		\item If $r_0$ is uncontractable with $r_{p-1}$, but is contractable with $r_{p-2}$ then read $bdb$ before the final $c$ (the last $b$ is necessary when $n=3$ which implies $q_1=q_{n-2}$).
		This action sends $r_0$ to $r_1$ and $r_{p-2}$ to $r_0$ with no move on the other automata.
		\item If $r_0$ is uncontractable with $r_{p-1}$ and $r_{p-2}$ (that is, the latter two are contractable) then read $dd$ before the final $c$.  
	\end{itemize}

	%we use exactly the same technique as for previous case when we ensured that the colors of $q_0$ and $q_1$ were contractable (with the three exhaustive cases), just using $d$ symbols instead of $a$ ones.
\end{itemize}
\cqfd %
\end{proof}

Therefore,

\begin{theorem}
$\mathrm{sc}(M\cdot(N\oplus P)) = (m-1)\alpha_{n,p}+\alpha'_{n,p}$
\end{theorem}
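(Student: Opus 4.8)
The plan is to squeeze $\mathrm{sc}(M\cdot(N\oplus P))$ between the upper bound already established and a matching lower bound supplied by the witness. The upper bound $\mathrm{sc}(M\cdot(N\oplus P))\leq(m-1)\alpha_{n,p}+\alpha'_{n,p}$ holds for arbitrary $M,N,P$ by the Theorem of Section~\ref{upper}, so only the reverse inequality remains. For that I would use the concrete structure of $A=W_m(a,c,b,d)$, $B=W_n(a,b,c,d)$ and $C=W_p(d,b,c,a)$ through the three results proved just above, treating the present statement essentially as the assembly step its leading word ``Therefore'' suggests.

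First I would observe that the minimal DFA of $M\cdot(N\oplus P)$ is the accessible part of $D\slash\sim$, which by the accessibility Theorem is all of $D\slash\sim$; its size is thus the number of $\sim$-classes. To count these I would pair two facts: by Lemma~\ref{lem-3} each class $[(i,S)]$ contains the saturated state $(i,\mathrm{Sat}(S))$, and by the non-equivalence Theorem two distinct saturated states are never equivalent. Together these yield a bijection between $\sim$-classes and accessible saturated states, reducing the problem to enumerating the latter.

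Next I would enumerate the accessible saturated states by splitting on the current state $i$ of $A$. For each non-final $i\in\{0,\dots,m-2\}$, Proposition~\ref{prop D access} makes every pair $(i,S)$ accessible, in particular every saturated $S$, contributing $(m-1)\,\alpha_{n,p}$ states. For the unique final state $i=m-1$, the transition rule defining $D$ forces $(q_0,r_0)\in S$; since here $0\notin F_A$, no state $(m-1,S)$ with $(q_0,r_0)\notin S$ is reachable, so the accessible saturated tableaux at $i=m-1$ are exactly those marking the cell $(q_0,r_0)$, of which there are $\alpha'_{n,p}$ by definition. Summing gives $(m-1)\,\alpha_{n,p}+\alpha'_{n,p}$ pairwise non-equivalent accessible states, hence the lower bound; combined with the upper bound this forces equality.

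The genuine difficulty does not lie in this assembly but in the result it invokes, namely the non-equivalence of distinct saturated states: there one must, for any two such states, construct a separating word that acts coherently on the three automata simultaneously while controlling the saturation-driven propagation of marked cells, which is precisely what the green/black coloring argument and the avoidance of the six forbidden movements accomplish. Granting that Theorem, the only point requiring care here is the $i=m-1$ bookkeeping, where the forced presence of $(q_0,r_0)$ converts the naive count $\alpha_{n,p}$ into $\alpha'_{n,p}$; everything else is a direct composition of Lemma~\ref{lem-3}, Proposition~\ref{prop D access}, and the two Theorems of this section.
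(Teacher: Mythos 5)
Your proposal is correct and takes essentially the same route as the paper, whose own proof is the single word ``Therefore'': the theorem is assembled from the upper-bound theorem of Section~\ref{upper}, Lemma~\ref{lem-3}, the accessibility result, and the non-equivalence of distinct saturated states, exactly as you do. Your explicit bookkeeping at $i=m-1$, where the transition rule forces $(q_0,r_0)\in S$ and turns the count $\alpha_{n,p}$ into $\alpha'_{n,p}$, merely spells out what the paper leaves implicit.
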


\bibliography{biblio,bibjg}

\end{document}